\newtheorem{remark}{Remark}
\newtheorem{definition}{Definition}
\newtheorem{statement}{Statement}
\newtheorem{theorem}{Theorem}
\newtheorem{lemma}{Lemma}
\newcommand{\tr}{\mathrm{Tr}}
\newcommand{\var}{\mathrm{Var}}
\newcommand{\prob}{\mathrm{prob}_0}
\newcommand{\dmin}{\Delta_\mathrm{m}}
\newcommand{\qmin}{Q_\mathrm{m}}
\begin{document}


\title{The Dominant Eigenvector of a Noisy Quantum State}

\author{B\'alint Koczor}
\email{balint.koczor@materials.ox.ac.uk}
\affiliation{Department of Materials, University of Oxford, Parks Road, Oxford OX1 3PH, United Kingdom}


\begin{abstract}
Although near-term quantum devices have no comprehensive solution for correcting errors, numerous techniques have been proposed for achieving practical value. Two works have recently introduced the very promising Error Suppression by Derangements (ESD) and Virtual Distillation (VD) techniques. The approach exponentially suppresses errors and ultimately allows one to measure expectation values in the pure state as the dominant eigenvector of the noisy quantum state. Interestingly this dominant eigenvector is, however, different than the ideal computational state and it is the aim of the present work to comprehensively explore the following fundamental question: how significantly different are these two pure states? The motivation for this work is two-fold. First, comprehensively understanding the effect of this coherent mismatch is of fundamental importance for the successful exploitation of noisy quantum devices. As such, the present work rigorously establishes that in practically relevant scenarios the coherent mismatch is exponentially less severe than the incoherent decay of the fidelity -- where the latter can be suppressed exponentially via the ESD/VD technique. Second, the above question is closely related to central problems in mathematics, such as bounding eigenvalues of a sum of two matrices (Weyl inequalities) -- solving of which was a major breakthrough. The present work can be viewed as a first step towards extending the Weyl inequalities to eigenvectors of a sum of two matrices -- and completely resolves this problem for the special case of the considered density matrices.
\end{abstract}

\maketitle

\section{Introduction}
Quantum devices can already prepare complex quantum states whose behaviour cannot be simulated using classical
computers with practical levels of resource~\cite{GoogleSupremacy,ChinaSupremacy}.
Sufficiently advanced quantum computers may have the potential to perform useful tasks of value to
society that cannot be performed by other means, such as simulating molecular systems \cite{SamReview}.
However, the early devices are incapable of error correction as required for fault-tolerant universal
systems that we expect to emerge eventually. Since the implementation of general quantum error
correcting codes (QECs) is prohibitively expensive,  the early machines do not have a comprehensive
solution to accumulating noise~\cite{preskill2018quantum}. Nevertheless, very promising applications
have been proposed for exploiting Noisy Intermediate-Scale Quantum (NISQ) devices:
variational quantum eigensolvers VQE and similar variants are expected to be able to solve important, 
practically relevant problems, such as finding ground states or optimising probe states for
quantum metrology and beyond \cite{farhi2014quantum,peruzzo2014variational,PRXH2,mcclean2016theory,
	PhysRevLett.118.100503,Li2017,PhysRevX.8.011021,Santagatieaap9646,kandala2017hardware,kandala2018extending,
	PhysRevX.8.031022,kokail2019self, koczor2019variational, koczor2019quantum, koczor2020quantumAnalytic}.
Refer also to the recent reviews \cite{endo2020hybrid,cerezo2020variationalreview,bharti2021noisy}.

The control of errors is thus fundamental to the successful exploitation of quantum devices
and numerous proposals have been put forward to mitigating errors in noisy machines.
These typically aim to learn the effect of imperfections on expectation values of observables
and try to predict their ideal, noise-free values. A very promising approach has recently been introduced
by two independent works and was named Error Suppression by Derangements (ESD \cite{koczor2020exponential})
and Virtual Distillation (VD \cite{huggins2020virtual}).
This technique prepares $n$ copies of the noisy quantum state and in turn allows to suppress errors
in expectation values exponentially when increasing $n$. The approach relies on the assumption that
the dominant eigenvector of a noisy quantum state, as modelled by a density matrix $\rho$, approximates
the ideal computational state.

This brings us to the core question of the present work.
Given a noisy quantum state $\rho$, how well does its dominant eigenvector approximate
the state that one would obtain from a perfect, noise-free computation?
It was already noted in refs.~\cite{koczor2020exponential,huggins2020virtual} that
even incoherent errors will in general introduce a drift in the dominant eigenvector.
This drift was named `coherent mismatch' and `noise floor' by the two works.
The aim of the present work is to comprehensively answer the above question by
deriving rigorous lower and upper bounds and scaling results.
The motivation for this work is two-fold.

First, the very promising ESD/VD approach crucially relies on the
above assumption that the dominant eigenvector is a good approximation of the ideal computational state.
However, a drift in the dominant eigenvector, the coherent mismatch, can crucially
influence the efficacy of the error suppression as illustrated in Fig.~\ref{plot_trdist}.
It is therefore vital for the successful exploitation of the technique to comprehensively
understand the drift in the dominant eigenvector. Fig.~\ref{plot_trdist} also shows that the previous
`pessimistic' upper bound $\sqrt{c}$ is quadratically reduced as $c$ if the aim is to prepare eigenstates
(see Sec.~\ref{precision_motivation}).
This is very encouraging since in fact most near-term quantum algorithms aim to prepare eigenstates
\cite{endo2020hybrid,cerezo2020variationalreview,bharti2021noisy}.

Second, understanding how noise affects quantum states is of fundamental
importance. While the mathematical formalism for describing noise processes has been
much investigated in the literature, there are still open questions.
Indeed, understanding noise in quantum systems is vital for
the successful exploitation of noisy quantum devices, however, the appropriate
modelling of quantum systems has significant implications in mathematics.
As such, the present work makes exciting connections to important problems in 
mathematics, such as bounding eigenvalues of a sum of two matrices and bounding
norms of commutators.

Let us now briefly summarise the most important results in relation to the above two points
ordering them thematically -- while a more detailed discussion of the results is presented in Sec.~\ref{discussion} 
that follows their order of appearance in the manuscript. 
In Sec.~\ref{results} we explicitly construct a family of worst/best-case extremal quantum states
that saturate the present upper and lower bounds of the coherent mismatch. These extremal states then allow us
to generally understand the coherent effect of incoherent noise channels in quantum systems and to argue about
the efficacy of the ESD/VD approach in complete generality
-- and prior perturbative approximations fail in this regime~\cite{koczor2020exponential,huggins2020virtual}
which is discussed in Sec.~\ref{sec_arrowhead_eigenvector}.
As such, in Sec.~\ref{numb_copies} we rigorously prove that even in the worst-case scenario one needs at least 3-4 copies in
practice to suppress incoherent errors to the level of the coherent mismatch: thus near term quantum
devices will be guaranteed to be oblivious to such coherent effects if they are limited in
preparing a large number of copies.

In Sec.~\ref{circuit_section} we analyse typical quantum circuits used in near-term quantum devices:
We derive guarantees that the coherent mismatch
decreases when increasing the size of the computation (even exponentially when increasing Rényi entropies
of the errors, see Sec.~\ref{sec_upbdelta}). We finally conclude that the coherent mismatch  is exponentially less severe when increasing
the circuit error rate than the severity of the incoherent decay of the fidelity -- where the latter can
be suppressed exponentially with the ESD/VD approach. 
We also prove in Sec.~\ref{sec_lower_upper_bounds} that our lower and upper bounds nearly coincide in the practically most important regions 
thus tightly confining the possible values the coherent mismatch can take up.

As mentioned above, the present work is closely related to important themes in mathematics, such as bounding
the eigenvalues of a sum of
two matrices (Weyl's inequalities) and we discuss these connections in Sec.~\ref{sec_related_maths}.
As such, the present work can be viewed as a first step towards extending Weyl's inequalities
for eigenvalues to the highly non-trivial case of the eigenvectors of a sum of two matrices -- and we present
a complete resolution of this problem for the special case of the considered density matrices.
Furthermore, another open question in mathematics was concerned with bounding the norm of a commutator
and this problem was only very recently solved
\cite{bottcher2008frobenius,vong2008proof,wu2010short,bottcher2005big,laszlo2007proof,cheng2010commutators,wenzel2010impressions}.
The present work significantly tightens those bounds for the special case of the considered density matrices
in Sec.~\ref{sec_comm_norm}.

We note that the following sections of the manuscript will gradually build on each other and
	the appearance of results might differ from the thematic ordering of the above summary.
Let us now introduce the core problem in more detail in Sec.~\ref{sec_prob_def}
and then recapitulate the most important notions in the context of the ESD/VD
approach in Sec.~\ref{sec_error_suppression}.

\begin{figure}[tb]
	\begin{centering}
		\includegraphics[width=0.47\textwidth]{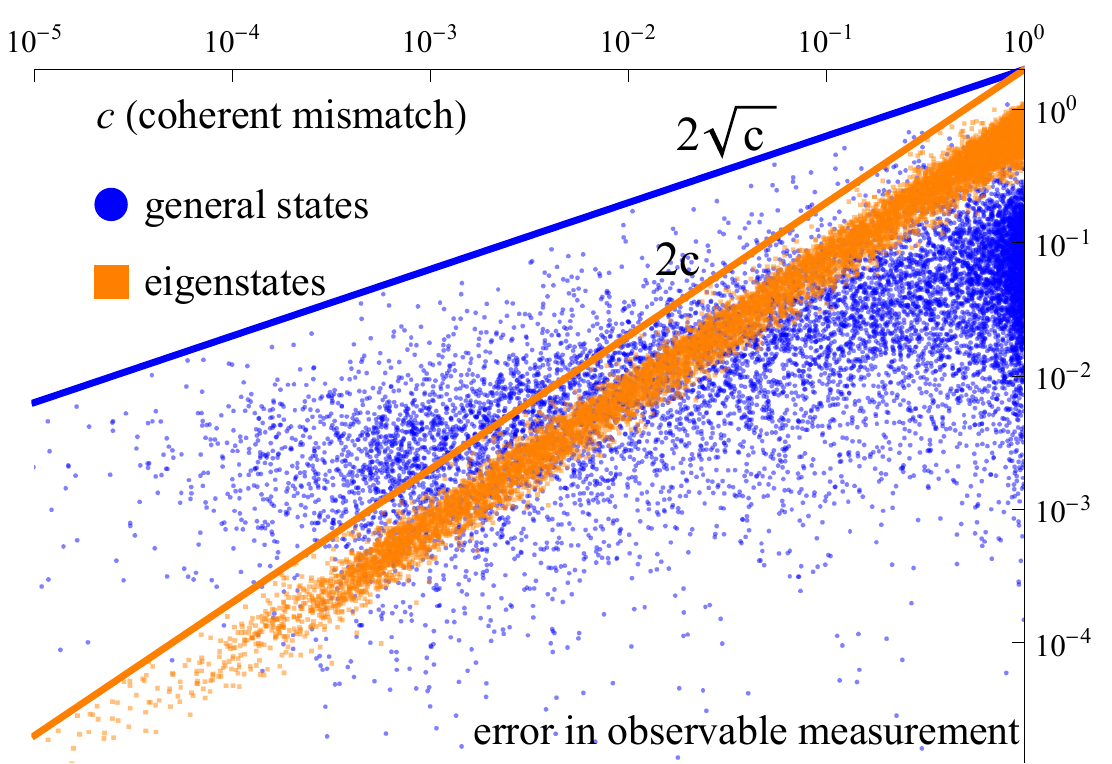}
		\caption{
			An important application of the present work is that it allows one to determine the ultimate
			precision of the ESD/VD error suppression technique \cite{koczor2020exponential,huggins2020virtual}.
			The trace distance (blue line) used in
			ref~\cite{huggins2020virtual} is given by the square root of the coherent mismatch $c$ from \cite{koczor2020exponential}
			and generally upper bounds the error $| \langle \psi_{id} | O |\psi_{id} \rangle - \langle \psi | O |\psi \rangle| \leq 2 \sqrt{c}$
			in estimating expectation values (with $\lVert O \rVert_{\infty}=1$) with
			the ideal computational state $|\psi_{id} \rangle$  vs. the dominant eigenvector $|\psi \rangle$.
			Most randomly generated quantum states (blue dots) are significantly below this pessimistic general bound  (blue line)
			-- which was already noted in ref~\cite{huggins2020virtual}. We show that this
			error bound  is quadratically smaller as $2 c$ (orange line)
			in the specific but pivotal case of preparing eigenstates
			(orange rectangles) -- the aim of most near-term quantum algorithms.
			Refer to Sec.~\ref{precision_motivation}.
			\label{plot_trdist}
		}
	\end{centering}
\end{figure}

\subsection{Problem definition\label{sec_prob_def}}

Let us first introduce the most important notions used in this work.
Recall that a pure quantum state $| \psi_{id} \rangle$ is an element of a $d$-dimensional Hilbert
space. In an ideal quantum computation this quantum
state is prepared by a unitary quantum circuit (unitary transformation) as
$|\psi_{id} \rangle := U_{c} |\underline{0} \rangle$ that acts on a reference state.
The quantum circuit is typically decomposed into a product of (universal) gates
as $U_{c} = U_\nu \cdots U_2 U_1$.

In a realistic setting where the quantum gates are imperfect (or when the errors are not corrected)
the actual quantum state needs to be modelled by a density matrix $\rho := \Phi_{c} \rho_{\underline{0}}$
that is prepared via a CPTP~\cite{nielsenChuang} map $\Phi_{c}$.
For example, this noisy circuit is typically decomposed into a series
of individual noisy gates as $\Phi_{c} \approx  \Phi_\nu \cdots \Phi_2 \Phi_1$, but this in general
is only an approximation due to the presence of possible correlated noise.

Let us introduce another `representation of noise': We show in Appendix~\ref{def_state_appendix} that a
large class of density matrices admit the decomposition
$\rho =  \eta \rho_{id} + (1-\eta)  \rho_{err}$  for some constant $\eta > 0 $.
Here $\rho_{err}$ is a valid density matrix that can be interpreted as an error state
that occurs with probability $1-\eta$ and is incoherently superimposed (mixed) with the
ideal computational  state $\rho_{id}:=|\psi_{id} \rangle \langle \psi_{id} |$ which occurs
with probability $\eta$. 

Let us now consider a simple, but practically very important example to illustrate the previous point:
an error model $\Phi_{c} =  \Phi_\nu \cdots \Phi_2 \Phi_1$ in which errors happen
during the execution of an individual quantum gate with probability $\epsilon$ and thus
the corresponding Kraus-map representation of the $k^{th}$ noisy quantum gate can be defined as
\begin{equation} \label{error_channel}
	\Phi_{k} \,  \rho := (1-\epsilon) U_k \rho U_k^\dagger + \epsilon \sum_{j=1}^{K} M_{j k} \rho M_{jk}^\dagger.
\end{equation}
Here $M_{jk}$ corresponds to some (arbitrary) error event and $K$ determines the Kraus rank of the error model
while $U_k$ is the ideal unitary gate.
A large class of noise channels that are typically
used to model  errors in quantum circuits admit this form, for example dephasing, bit flip and depolarising errors
\cite{nielsenChuang}.
Within this error model we can straightforwardly obtain the decomposition in Eq.~\eqref{rho_decomp_sum}
into an ideal state $\rho_{id}$ and an error  density matrix via the probability $\eta = (1-\epsilon)^\nu$;
indeed the error matrix via
$(1-\eta)\rho_{err} = \Phi_{c} \rho_{\underline{0}} - \eta U_c \rho_{\underline{0}} U_c^\dagger $
can be shown to be a valid density matrix.
The completely general case is discussed in Appendix~\ref{def_state_appendix}.

Before stating our main problem, let us recall that a density matrix is a trace-class
operator with trace norm $\lVert \rho \rVert_1 = 1$ and trace $\tr \rho =1$,
and thus it can be written in terms of its spectral resolution as
\begin{equation}\label{rho_decomp}
	\rho := \lambda |\psi \rangle \langle \psi | + \sum_{k=2}^d \lambda_k |\psi_k \rangle \langle \psi_k |,
\end{equation}
where $|\psi \rangle $, $|\psi_k \rangle $ are eigenvectors and $\lambda$, $\lambda_k$ are non-negative
eigenvalues. We assume descending order throughout this work as $\lambda > \lambda_2 \geq \lambda_3 \dots$
and assume that the density matrix has a distinguished, unique dominant eigenvalue $\lambda$ (no degeneracy).

The core problem considered in the present work is the following: the dominant eigenvector $|\psi \rangle$ of the
noisy quantum state $\rho$ will be different from the ideal computational state $|\psi_{id} \rangle$
(and from eigenvectors of $\rho_{err}$), except in the special case when $\rho_{err}$ and $\rho_{id}$
commute. The reason is that in the commuting case the two density matrices share the same eigenvectors
and thus their sum will share the same eigenvectors too. However, in realistic physical systems $\rho_{err}$ and $\rho_{id}$
are highly unlikely to commute.
Surprisingly, even a completely
incoherent noise channel---such as depolarising and dephasing as described below Eq.~\eqref{error_channel}---can introduce a
coherent mismatch resulting in a coherently shifted dominant eigenvector as 
$|\psi \rangle = \sqrt{1-c}|\psi_{id} \rangle + \sqrt{c} |\psi_\perp \rangle$ in Eq.~\eqref{rho_decomp}. 
Our aim is to characterise and generally upper bound this coherent mismatch $c$.
Let us first formalise our definition of the coherent mismatch $c$ and then briefly motivate this work via important
scenarios where this coherent mismatch plays a crucial role. For example, the present problem is very closely
related to the well-known case of bounding the eigenvalues of a sum of two matrices which we discuss in Sec.~\ref{sec_related_maths}

\begin{definition}
	\label{def_state}
	 We define the coherent mismatch as the  infidelity between
	the dominant eigenvector $| \psi \rangle$ of a noisy quantum state $\rho$ from Eq.~\ref{rho_decomp} and the ideal
	computational state  $| \psi_{id} \rangle$ as
	\begin{equation} \label{coh_mismatch_def}
		c := 1 - |\langle \psi_{id} | \psi \rangle|^2.
	\end{equation}
	Here we also define the fidelity $F:= \langle \psi_{id} | \rho | \psi_{id} \rangle$.
	For some of the arguments later we will make use of the decomposition into a sum (for some $\eta>0$)
	\begin{align} \label{rho_decomp_sum}
		\rho 
		=&
		\eta \rho_{id} + (1-\eta)  \rho_{err}\\
		=& \eta |\psi_{id} \rangle \langle \psi_{id} | + (1-\eta)  \sum_{k=1}^d \mu_k |\chi_k \rangle \langle \chi_k |,
		\nonumber
	\end{align}
	of the ideal computational state $\rho_{id}:= |\psi_{id} \rangle \langle \psi_{id} |$ and a suitable
	error density matrix $\rho_{err}$
	(see text above). For this decomposition we can define the ratio of eigenvalues as
	$\delta := (\eta^{-1} - 1 ) \, \mu_1$,
	where $\mu_1$ is the largest eigenvalue of $\rho_{err}$.
\end{definition}
Notice that the the eigenvectors $|\chi_k \rangle$ above are generally different
than the ones in Eq.~\eqref{rho_decomp} (non-commuting case).
Let us remark that while the decomposition in Eq.~\eqref{rho_decomp_sum} is very useful
for illustrating and motivating the present problem, it is not necessary and some of the
later results in this work will be independent of this decomposition.
Refer to Appendix~\ref{def_state_appendix} for more details.

\subsection{Error suppression \label{sec_error_suppression}}

Two recent works \cite{koczor2020exponential, huggins2020virtual} have introduced an approach which
can suppress errors exponentially when preparing $n$ copies of a noisy quantum state -- and which was named
error suppression by derangements (ESD) and virtual distillation (VD).
The core idea behind the approach is that it prepares $n$ identical copies
	of a noisy computational quantum state $\rho$ and uses the copies to `verify each other' by
	applying a derangement operation (generalisation of the SWAP operation
	that permutes the $n$ registers). This filters out all error contributions
	that break global permutation symmetry among the copies, hence allows for exponential suppression
	when increasing $n$.

While ref.~\cite{huggins2020virtual} mostly focuses on the $n=2$ scenario and proposes
a resource efficient variant that does not require an ancilla qubit when $n=2$,
ref.~\cite{koczor2020exponential} presents explicit constructions of the approach for $n \geq 2$.
A possible implementation is illustrated in Fig.~\ref{plot_schem} which uses
a controlled-derangement operation and allows one to measure expectation values
	of the form $\tr[\rho^n \, O]/\tr[\rho^n]$ with respect to an observable $O$.
In this regard ref.~\cite{koczor2020exponential} notes that
for $n > 2$ a large number of possible derangement patterns exist while a qubit-efficient one
was proposed in the follow-up work \cite{czarnik2021qubit}.

When increasing the number of copies $n$, the `virtual' quantum state $\rho_n:= \rho^n/\tr[\rho^n]$
approaches the dominant eigenvector from Definition~\ref{def_state} in exponential order.
Since the dominant eigenvector $| \psi\rangle$ is generally
different from the ideal computational state $| \psi_{id} \rangle$ via Definition~\ref{def_state},
the coherent mismatch limits the ultimate precision of the ESD and VD approaches. Ref.~\cite{koczor2020exponential}
defined the coherent mismatch $c$ (see Definition~\ref{def_state}) to determine this discrepancy.

Similarly, the `noise floor' was defined in ref~\cite{huggins2020virtual} to express the discrepancy between 
the `virtual' quantum state $\rho_n$ and the ideal computational state
$\rho_{id}$ in the limit of a large number of copies via the trace distance $T(\rho_n, \rho_{id})$.
We prove in Appendix~\ref{app_noise_floor} that this noise floor is equivalent to the coherent mismatch
up to a square-root as
\begin{equation*}
	\lim_{n \rightarrow \infty} T(\rho_n, \rho_{id}) = \sqrt{c},
\end{equation*}
which confirms that indeed the notions of the coherent mismatch and noise floor are equivalent:
ultimately they both express the infidelity between the pure states $| \psi_{id} \rangle$ and $| \psi\rangle$.

\begin{figure}[tb]
	\begin{centering}
		\includegraphics[width=0.40\textwidth]{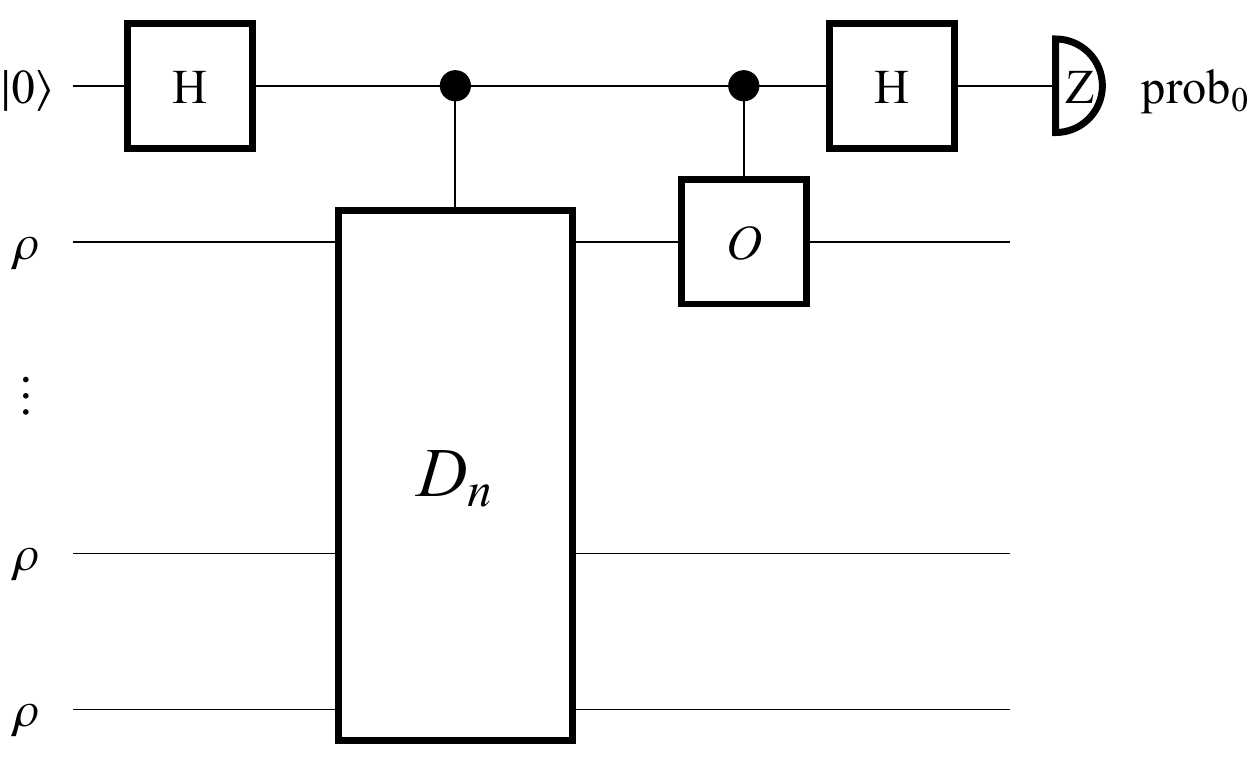}
		\caption{
		Quantum circuit of a possible implementation of the ESD/VD error
		suppression approach \cite{koczor2020exponential,huggins2020virtual} (figure adopted from~\cite{koczor2020exponential}).
		$n$ copies of the noisy quantum state $\rho$
		are prepared and entangled via a controlled-derangement operator $D_n$ -- a generalisation
		of the SWAP operation that permutes the $n$ quantum registers.
		The probability $\prob$ when measuring the ancilla qubit is proportional to 
		the expectation value of the observable $\tr[\rho^n O]$ in the `virtually
		distilled' states $\rho^n$. For large $n$, the approach ultimately allows to measure expectation
		values in the pure state $|\psi\rangle$ as the dominant eigenvector of $\rho$ from Eq.~\eqref{rho_decomp}.
		A qubit-efficient construction was proposed in \cite{czarnik2021qubit}.
		\label{plot_schem}
		}
	\end{centering}
\end{figure}

Both works used a perturbative expansion of the dominant eigenvector $| \psi\rangle$ to approximate this infidelity.
While such perturbative series may be accurate in the limit of very low noise $\eta \rightarrow 1$ in Eq.~\eqref{rho_decomp_sum},
they are not applicable to the practically relevant scenario when quantum states accumulate a large amount
of noise. Furthermore, we establish in Remark~\ref{remark_perturb} that the perturbative series diverges in the worst-case scenario region.
It is thus the aim of the present work to
derive generally applicable upper bounds and approximations of the coherent mismatch that are generally applicable in
any scenario. As such, our bounds in Sec.~\ref{results} are saturated by extremal worst-case quantum states.
We will use these bounds to generally argue about the efficacy (number of copies, entropies etc.)
of the error suppression technique in complete generality in Sec.~\ref{results} -- which 
is beyond the scope of perturbation theory.

Ref.~\cite{huggins2020virtual} argued that the coherent mismatch is zero if the error channel
maps only to orthogonal states. Indeed such special density matrices are an instance of
the general class when $\rho_{err}$ and $\rho_{id}$ commute as discussed above. Interestingly,
we show that the worst case scenario quantum states, which maximise the coherent mismatch in
Theorem~\ref{theo_commut_upb}, have eigenvectors that are all orthogonal to the ideal state except for
the dominant error eigenvector. This highlights that, somewhat counter-intuitively,
the orthogonal error models proposed in ref~\cite{huggins2020virtual} produce quantum states
(with $c=0$) that are actually close in state space to the
worst-case quantum states (with almost all eigenvectors orthogonal to the ideal state)
that maximise $c$.

Ref.~\cite{koczor2020exponential} noted that the coherent mismatch is necessarily zero
when noise density matrices $\rho_{err}$ commute with the ideal state,
and gave the example of single qubit systems
undergoing depolarising noise. Ref.~\cite{huggins2020virtual} numerically simulated
this kind of scenario via non-entangling (random) circuits undergoing depolarising noise
and found that the noise floor is indeed zero.
Indeed, local depolarising noise in single-qubit systems maps to errors $\rho_{err} = \mathrm{Id}/d$
that commute with the ideal, unentangled state and one trivially finds
that $c=0$, regardless of whether the circuits are random or not. 
As such, ref.~\cite{huggins2020virtual} demonstrated that the noise floor $\sqrt{c}$ is indeed non-zero
and significant even for relatively deep, random entangling circuits. Results in Sec.~\ref{circuit_section}
can be applied to such random circuits and confirm the numerical observations that the coherent mismatch is
non-zero and decreases when increasing the depth of the circuit.

Ref.~\cite{koczor2020exponential} additionally observed numerical scaling results of the coherent mismatch
in terms of the number of gates and number of qubits in noisy quantum circuits. We confirm these scaling results
in Sec.~\ref{circuit_section} using general upper bounds.
Before stating the main results, let us first motivate the practical relevance of the present work.

\section{Motivation \label{sec_motivation}}

\subsection{Ultimate precision in error suppression \label{precision_motivation}}

The previously introduced ESD and VD approaches allow one to estimate the expectation value
$ \langle \psi | O |\psi \rangle$ for sufficiently large $n$. This expectation value can be
biased due to the coherent mismatch of the state $|\psi \rangle$ and will generally deviate
from the ideal expectation value $\langle \psi_{id} | O |\psi_{id} \rangle$. 

While we define and compute the coherent mismatch in terms of distance measures on the quantum states,
one can indeed relate it to the more practical question of how much error the discrepancy
between  $| \psi_{id} \rangle$ and $| \psi\rangle$ introduces into the measurement of an
observable $\langle \psi | O |\psi \rangle$. Ref.~\cite{huggins2020virtual} proposed that
the trace distance generally upper bounds these observable measurement errors as
\begin{align}
| \langle \psi_{id} | O |\psi_{id} \rangle - \langle \psi | O |\psi \rangle|  \nonumber
& \leq 
2 \lVert O \rVert_{\infty} \, T(|\psi \rangle \langle \psi | , |\psi_{id} \rangle \langle \psi_{id} | )  \\
& =  2  \sqrt{c}\, \lVert O \rVert_{\infty} ,
\label{eq_obs_error}
\end{align}
where $\lVert O \rVert_{\infty}$ is the absolute largest eigenvalue of the observable,
refer to Appendix~\ref{app_noise_floor} for a proof.
The second equality relates the trace distance to the coherent mismatch $c$. 

While this trace-distance measure is a general upper bound, it was already noted in ref.~\cite{huggins2020virtual}
that this bound is very pessimistic in practically relevant scenarios.
We demonstrate this in Fig.~\ref{plot_trdist} (blue): We randomly
generate $10^4$ quantum states and normalised observables (i.e., $\lVert O \rVert_\infty =1$)  for
randomly selected dimensions between $2 \leq d \leq 100$ and compute the actual error in the
observable measurements as $| \langle \psi_{id} | O |\psi_{id} \rangle - \langle \psi | O |\psi \rangle|$.
While in Fig.~\ref{plot_trdist} (blue) some random states get relatively close,
indeed, most of the randomly generated states are orders of magnitude below the upper bound.

To support the observation of ref.~\cite{huggins2020virtual} with a rigorous statement, we determine an
alternative bound in Appendix~\ref{app_noise_floor} for the specific but pivotal case when the aim is to prepare
eigenstates of the observable.
Note that the majority of quantum algorithms that target early quantum devices actually aim to prepare eigenstates
of certain Hamiltonian operators as $O \equiv \mathcal{H}$, see e.g., the review articles \cite{endo2020hybrid,cerezo2020variationalreview,bharti2021noisy}.
Remarkably, we show in Appendix~\ref{app_noise_floor} that if the quantum device prepares an eigenstate of the observable then the
error in estimating the ideal expectation value is upper bounded as
\begin{equation*}
	| \langle \psi_{id} | O |\psi_{id} \rangle - \langle \psi | O |\psi \rangle|
	\leq 
	2c \lVert O \rVert_{\infty},
	\quad \text{if}\quad
	O |\psi_{id} \rangle \propto |\psi_{id} \rangle,
\end{equation*}
which is a quadratically smaller (in $c$) bound than the one in Eq.~\eqref{eq_obs_error}.
We demonstrate in Fig.~\ref{plot_trdist} (orange) that the measurement errors in case of eigenstates
are indeed orders of magnitude below the pessimistic bounds (blue) and are generally upper bounded
by the orange line. Furthermore, in Fig.~\ref{plot_optimise} we illustrate that
in practical applications, such as the variational quantum eigensolver (VQE), even
approximate ground states produce errors significantly below the general bound.

The above bounds all depend on the actual value of $c$, and it is
thus the aim of the present work to comprehensively determine the coherent mismatch.

\subsection{Related problems in mathematics\label{sec_related_maths}}
Let us now relate the present work to important themes in mathematics.
In particular, it is a well-known problem in mathematics to generally bound eigenvalues of a sum
of two Hermitian matrices. The problem was first proposed by Weyl in 1912 \cite{weyl1912asymptotische}:
given two Hermitian matrices $A$ and $B$ with eigenvalues $\alpha_k$ and $\beta_k$, how does one determine
the eigenvalues $s_k$ of the sum of the two matrices $S=A+B$? Weyl's partial solution to this problem
determines the possible range that the eigenvalues of $S$ can take via the inequalities
\begin{equation*}
	s_{k+l-1} \leq \alpha_k + \beta_l, \quad \text{for indexes} \quad k+l -1 \leq d,
\end{equation*}
where $d$ is the dimension of the matrices and the eigenvalues are arranged in descending order.
A typical application of these inequalities is to
bound the possible eigenvalues of the sum as $s_k \leq  a_k + \beta_{\mathrm{max}}$
with $\beta_{\mathrm{max}} \equiv \beta_1$.
These partial results can be proven by minmax methods which can already be a considerable task.

Following a series of major breakthroughs in mathematics, this problem has only been solved relatively
recently to a full extent using honeycomb structures
\cite{klyachko1998stable,helmke1995eigenvalue,knutson1999honeycomb,knutson2004honeycomb}.
The final resolution specifies a set of inequalities in terms of the
eigenvalues $a_k,b_k, s_k$. We refer the interested reader to the excellent article \cite{knutson2001honeycombs}.

This highlights the complex and difficult nature of predicting the eigensystem of the sum of two matrices.
While bounds on eigenvalues have been completely solved by the application of the honeycomb structures,
much less is known about the eigenvectors of the sum of two matrices. It is the aim of the present work to
determine general bounds on the dominant eigenvector of the sum of two matrices as introduced
in Definition~\ref{def_state}.

The current problem is, however, special: while we do not make any assumption about the
matrix $\rho_{err}$, our matrix $\rho_{id}$ is a rank-1 projector and thus its eigenvalues are
$a_k = 0$ for all $k\geq 2$. Due to this special structure, Weyl's inequalities
are significantly simplified in the present scenario, and this allows us to obtain the
	following straightforward bounds.
\begin{remark}\label{remark_weyl}
	Straightforwardly applying Weyl's inequalities generally guarantees
	that $\lambda_2 < \lambda $ and thus the dominant
	eigenvector corresponds to  $| \psi_{id} \rangle$ as long as $\delta < 1$ due to the following bounds.
	In particular, applying Weyl's inequalities to Definition~\ref{def_state} suffices to generally upper bound
	the two largest eigenvalues $\lambda$ and $\lambda_2$ of the noisy density matrix from Eq.~\eqref{rho_decomp}
	(or similarly any other eigenvalues) as
	\begin{equation*}
		\eta	\leq \lambda \leq \eta (1+ \delta)
		\quad \text{and} \quad  0 \leq \lambda_2 \leq \eta \delta.
	\end{equation*}
	Here $\eta$ and $\delta$ were defined in Definition~\ref{def_state}.
\end{remark}

Although this work considers relatively special matrices, it is a considerable task to go beyond eigenvalues and to
determine eigenvectors of a sum of two matrices, i.e., as relevant for the coherent
mismatch. Let us highlight how the present problem crucially deviates from the previously discussed
case of eigenvalues.

The Weyl inequality in the above remark is saturated when the two matrices have the same 
dominant eigenvectors leading to an extremal shift in the dominant eigenvalue.
This however implies that $\rho_{err}$ and $\rho_{id}$ commute thus leading to a coherent mismatch
that is zero, i.e., no shift in the dominant eigenvector. On the other hand, in Sec.~\ref{sec_upbdelta}
we determine extremal states that maximise the coherent mismatch and their structure is indeed
in stark contrast to the case of the eigenvalues.

It is worth noting that the present work makes connections to and uses results from other
topics in mathematics: analytical results are used for computing eigenvalues and eigenvectors
of arrowhead matrices in Sec.~\ref{sec_arrowhead}
and new bounds are established in Sec.~\ref{sec_comm_norm} for the matrix norm of commutators
-- this improves upon known general results in the considered specific scenarios.
Let us now derive our results.

\section{Results\label{results}}

\subsection{General upper bounds and extremal states}

Let us first exploit that the present work 
considers a relatively special structure since the matrix $\rho_{id}$ is a rank-1 projector:
we now introduce a special decomposition of the matrix $\rho$ which will allow us to compute $c$
analytically and thus to  construct extremal, worst-case scenario
quantum states, i.e., families of quantum states that are guaranteed to saturate 
upper bounds on $c$.

\subsubsection{Arrowhead matrices\label{sec_arrowhead}}

\begin{statement}\label{stat_arrowhead}
	The quantum state $\rho$ in Definition~\ref{def_state} is unitarily equivalent to a \textbf{real, symmetric, non-negative} arrowhead
	matrix and can be decomposed into the sum of matrices $\tilde{\rho} = F |\tilde{\psi}_{id}\rangle\langle \tilde{\psi}_{id}| + D+C$ as
	\begin{equation}
		\tilde{\rho}  = \begin{pmatrix}
			F & C_2 & C_3 & \dots & C_{d}\\
			C_2 & D_2 &  &  &  \\
			C_3 &  & D_3 \\
			\vdots & & & \ddots &\vdots \\
			C_{d} &  &  & \dots & D_{d}
		\end{pmatrix}.
	\end{equation}
	We have applied a unitary transformation $\tilde{\rho} := U \rho U^\dagger$
	such that $|\tilde{\psi}_{id}\rangle := U |\psi_{id}\rangle = (1,0, \dots 0)$ 
	while $F,C_k,D_k \geq 0$ with $k\in\{2, 3, \dots,d \}$ with $d$ denoting the dimension,
	and all other matrix entries are zero.
\end{statement}

Refer to Appendix~\ref{proof_arrowhead} for a proof.
These so-called arrowhead matrices have unique properties and  have been investigated in the literature extensively.
For example, certain matrix algorithms use arrowhead matrices to speed up computations  \cite{gu1995divide}
and further applications include, e.g.,
 the description of radiationless transitions in isolated molecules \cite{bixon1968intramolecular}
or of oscillators vibrationally coupled with a Fermi liquid \cite{gadzuk1981localized}.
Let us mention two remarkable properties of these special matrices.

First, Cauchy's interlacing theorem guarantees that the entries $D_k$ satisfy the
general interlacing inequalities with the eigenvalues $\lambda_k$ from Eq~\eqref{rho_decomp} as
\begin{equation} \label{eq_interlacing}
	 D_2 \geq \lambda_2 \geq D_3 \geq \lambda_3  \geq \dots D_d \geq \lambda_d,
\end{equation}
refer to, e.g., ref.~\cite{o1990computing} for more details.

Second, if one knows the explicit representation of the arrowhead matrix, i.e., knowing the matrix entries $D_k$, $C_k$
and $F$, then the eigenvalues $\lambda$ and $\lambda_k$ can be obtained as roots of the secular function \cite{o1990computing}
\begin{equation} \label{eq_eigenvalue} 
	P(x) = x - F + \sum_{k=2}^d \frac{C_k^2 }{(D_k - x) }.
\end{equation}

\begin{figure*}[tb]
	\begin{centering}
		\includegraphics[width=0.85\textwidth]{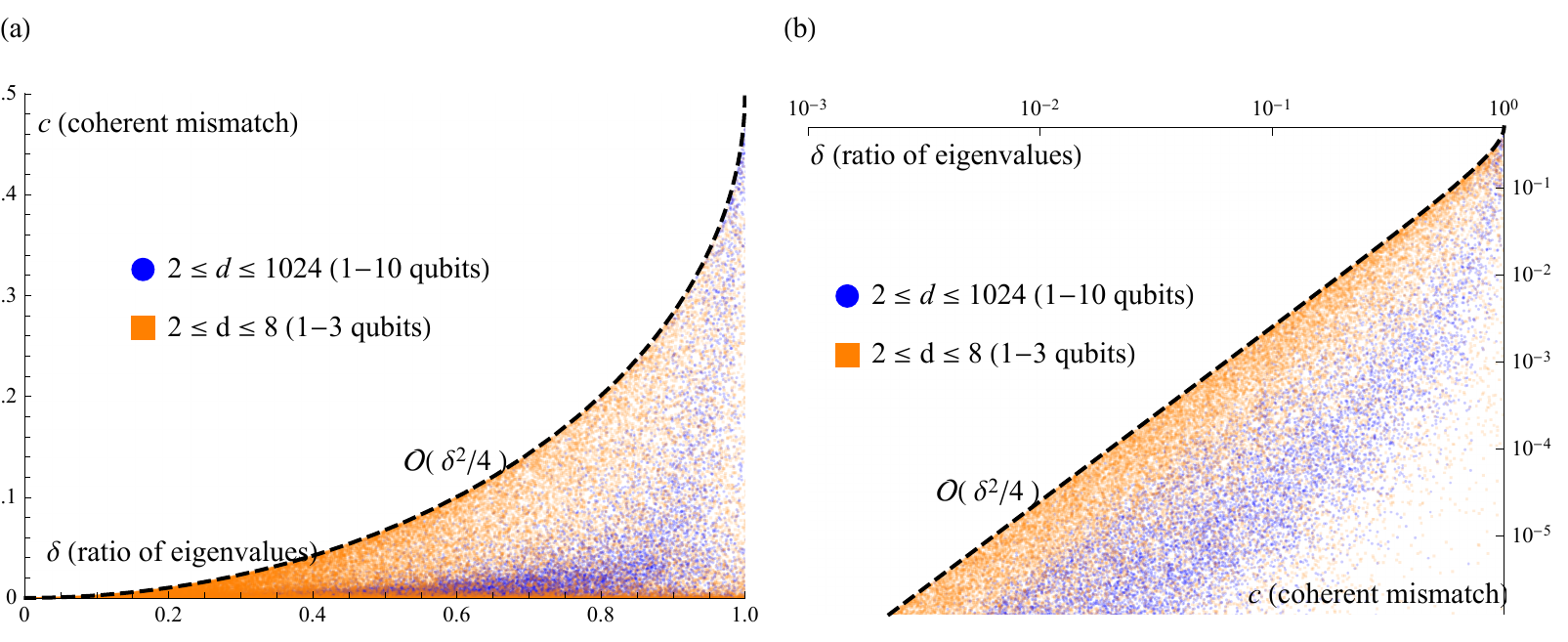}
		\caption{
			Coherent mismatch $c$ in randomly generated states and its upper bound (dashed black lines) as a function
			of the ratio $\delta$ of the largest error eigenvalue vs. the ideal state's contribution $\eta$ from
			Definition~\ref{def_state}. 
			(a) linear-linear scale and (b) log-log scale.
			The dominant eigenvector $| \psi \rangle$  of a noisy quantum state $\rho = \eta |\psi_{id} \rangle \langle \psi_{id} | + (1-\eta)  \rho_{err}$
			is generally different than the ideal computational state as characterised by the coherent mismatch
			(infidelity) $c= 1 - |\langle \psi_{id} | \psi \rangle|^2$.
			The general upper bound on $c$ from Theorem~\ref{theo_upbdelta} (dashed black line) is saturated by the extremal quantum states
			$\rho_{err}$  which are highly unlikely to appear in practical scenarios and thus experimental quantum states
			are expected to be significantly below this bound.
			Here, $\delta$, $\rho_{err}$ and $c$ are defined in Definition~\ref{def_state}.
			Randomly (uniformly with respect to the Haar measure) generated quantum states of large dimensions (blue dots) are
			significantly less likely to saturate the bounds than quantum states in smaller dimensions
			(orange rectangles). We present better lower and upper bounds in Sections~\ref{sec_lower_upper_bounds},
			see also Fig.~\ref{plot_commutators}.
			\label{plot_eigvals}
		}
	\end{centering}
\end{figure*}

\subsubsection{Analytically solving the coherent mismatch \label{sec_arrowhead_eigenvector}}
The most important consequence of the previously introduced arrowhead structure is that,
given the knowledge of the decomposition of the density matrix $\rho$ into
the arrowhead form, we can analytically solve its eigenvectors and obtain an analytical expression
for the coherent mismatch.
\begin{statement}\label{stat_coherent}
	We can analytically compute the coherent mismatch in terms of the dominant eigenvalue $\lambda$ from Eq.~\eqref{rho_decomp}
	and in terms of the arrowhead matrix entries $D_k,C_k$ from from Statement~\ref{stat_arrowhead} as
	\begin{equation}
	c := 1- |\langle \psi_{id} | \psi \rangle|^2 
	= 1- [1 + \sum_{k=2}^d \frac{C_k^2 }{(\lambda - D_k)^2 } ]^{-1}.
\end{equation}
\end{statement}
Refer to Appendix~\ref{proof_coherent} for a proof.
The above formula allows us to analytically compute the coherent mismatch if the arrowhead form
of the density matrix is known. Even though we do not necessarily know such a decomposition
explicitly for arbitrary quantum states $\rho$, the
above formula is a very important ingredient for our following derivations and allows us to
derive general upper and lower bounds on the coherent mismatch. Before stating these results,
let us briefly remark on the striking resemblance of the above equation to perturbation theory. 
\begin{remark}\label{remark_perturb}
	Using first-order perturbation in order to approximate the dominant eigenvector
	(refer to, e.g., Eq.~(5.1.44) in \cite{sakurai1995modern} and to Eq.~(10.2) in \cite{wilkinson1965algebraic})
	enables us 	to estimate the coherent mismatch as
	\begin{equation*}
	c_{pert}^{(1)} = 1 - [1 + \sum_{k=2}^{d} \frac{C_k^2}{(F-D_k)^2}]^{-1}.
\end{equation*}
	This approximation is formally similar to the exact analytical formula
	of the coherent mismatch from Statement~\ref{stat_coherent}, but note that
	here we need to divide with
	the factor $(F-D_k)^2$ and not with $(\lambda- D_k)^2$. This approximation breaks down
	in the region when quantum states accumulate a large amount of noise and $F \approx D_k$.
\end{remark}
Refer to Appendix~\ref{proof_perturb} for a proof.
It is interesting to note the connection to first-order perturbation theory , which also
confirms that, indeed, the above expression is accurate when the noise in the state
(via $\eta \rightarrow 1$ in Eq.~\eqref{rho_decomp_sum}) is vanishingly small and thus we obtain $F \approx \lambda$
with $F \gg D_2$ .

\subsubsection{Upper bound via extremal states\label{sec_upbdelta}}
We will now use the above introduced arrowhead decomposition of density matrices and derive a family of quantum states
that maximise the coherent mismatch. We analytically solve this optimisation problem in
Appendix~\ref{proof_upbdelta} and find that the maximum of the coherent mismatch is attained only by the
following extremal density matrices: In the arrowhead representation of these states the only non-zero 
off-diagonal component is given by  $C_2$ (all other off-diagonal components are zero as $C_k=0$ for $k>2$)
while the diagonal entries $F$ and $D_k$ can be arbitrary.

Due to this simplified structure, we can analytically compute the coherent mismatch
which then serves as a general upper bound.
\begin{theorem}\label{theo_upbdelta}
	The coherent mismatch is generally upper bounded as	
	\begin{equation*}
		c \leq (1-\sqrt{1-\delta^2} )/2 = \delta^2/4 + \mathcal{O}(\delta^4/16),
	\end{equation*}
	where $\delta$ was defined in Definition~\ref{def_state}. This
	upper bound is saturated by an infinite number of worst-case error density matrices $\rho_{err}$
	whose dominant eigenvector $| \chi \rangle $ has a non-zero overlap with the ideal state
	$| \psi_{id} \rangle$ as 
	\begin{equation*}
		| \chi \rangle := \sqrt{\alpha}| \psi_{id} \rangle +  \sqrt{1-\alpha}| \phi_2 \rangle,
	\end{equation*}
	and all other eigenvectors of $\rho_{err} $ are orthogonal to the ideal state $| \psi_{id} \rangle$.
	The coherent mismatch is maximised when $\alpha = (1 + \delta)/2$ and note that the two basis
	vectors are orthogonal $\langle \phi_2 |  \psi_{id} \rangle = 0$.
\end{theorem}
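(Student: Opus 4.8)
The plan is to treat Theorem~\ref{theo_upbdelta} as a constrained optimization problem over the arrowhead parameters $F,D_k,C_k$, using the exact analytical formula for $c$ from Statement~\ref{stat_coherent} as the objective. The constraints are dictated by Definition~\ref{def_state}: the matrix must be a valid density matrix decomposable as $\rho = \eta\rho_{id} + (1-\eta)\rho_{err}$, and $\delta = (\eta^{-1}-1)\mu_1$ is fixed. Since $c$ is expressed through the dominant eigenvalue $\lambda$ and the entries $D_k,C_k$, and since the interlacing inequalities in Eq.~\eqref{eq_interlacing} together with Remark~\ref{remark_weyl} already confine $\lambda$ and the $D_k$ to definite ranges (in particular $D_k \leq \eta\delta$ for the error-supported entries and $\lambda \geq \eta$), the first step is to identify which configuration of the off-diagonal weights $C_k$ concentrates as much of the coherent mismatch as possible.

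**Reducing to a single off-diagonal entry.**

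First I would argue that for fixed $\lambda$ and fixed total available ``off-diagonal budget'' the quantity $\sum_k C_k^2/(\lambda-D_k)^2$ appearing in Statement~\ref{stat_coherent} is maximised by collapsing all weight onto the single index $k=2$ associated with the largest allowed $D_k$ (hence the smallest denominator $\lambda-D_k$). This is the crux of the claimed extremal structure: $C_k=0$ for $k>2$ and only $C_2$ survives. I expect this to follow from a convexity/rearrangement argument — the denominators are increasing in $D_k$, so spreading weight over smaller-denominator terms is wasteful; pushing everything to the term with $D_2$ at its interlacing maximum is optimal. One must be careful that moving weight also changes $\lambda$ through the secular equation Eq.~\eqref{eq_eigenvalue}, so the argument should be phrased as: given the constraints, the reduced two-dimensional problem (a $2\times 2$ effective block spanned by $|\psi_{id}\rangle$ and one error direction $|\phi_2\rangle$) dominates, and everything else is spectrally inert.

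**Solving the effective two-level problem and reading off $\alpha$.**

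With the problem reduced to a $2\times 2$ block, I would parametrise the dominant error eigenvector as $|\chi\rangle = \sqrt{\alpha}|\psi_{id}\rangle + \sqrt{1-\alpha}|\phi_2\rangle$ and write $\rho_{err}$ restricted to this block in terms of $\alpha$ and $\mu_1$. Adding the rank-one ideal term $\eta\rho_{id}$ gives an explicit $2\times 2$ Hermitian matrix whose eigenvectors are computable in closed form; the infidelity $c = 1-|\langle\psi_{id}|\psi\rangle|^2$ becomes an elementary function of $\alpha$ and $\delta$. Maximising over $\alpha\in[0,1]$ by setting the derivative to zero should yield the optimum $\alpha = (1+\delta)/2$, and substituting back should produce $c = (1-\sqrt{1-\delta^2})/2$, with the Taylor expansion $\delta^2/4 + \mathcal{O}(\delta^4)$ following immediately. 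The main obstacle will be the reduction step rather than this final calculation: rigorously justifying that no configuration with several nonzero $C_k$, or with $D_k$ strictly inside the interlacing interval, can exceed the single-block value requires care because $\lambda$ is itself determined implicitly by all the parameters through Eq.~\eqref{eq_eigenvalue}. I would address this either by a Lagrange-multiplier stationarity analysis showing all multi-support critical points are dominated, or by directly bounding $\sum_k C_k^2/(\lambda-D_k)^2$ under the density-matrix and fixed-$\delta$ constraints and showing the bound is met exactly by the claimed extremal family.
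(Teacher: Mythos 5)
Your proposal is correct in outline and follows essentially the same route as the paper's proof in Appendix~\ref{proof_upbdelta}: arrowhead form (Statement~\ref{stat_arrowhead}), the exact expression for $c$ (Statement~\ref{stat_coherent}), reduction to a single non-zero off-diagonal entry $C_2$, and then explicit maximisation of the resulting two-level problem over $\alpha$, yielding $\alpha=(1+\delta)/2$ and $c=(1-\sqrt{1-\delta^2})/2$; your step three is literally the paper's computation. The one place you diverge is the reduction step, which you flag as the main obstacle and propose to settle by a Lagrange/stationarity analysis that tracks how $\lambda$ responds when the weights $C_k$ are moved. The paper shows this concern is unnecessary: for a \emph{fixed} matrix, interlacing gives $\lambda > D_2 \geq D_k$, hence $(\lambda-D_k)^2 \geq (\lambda-D_2)^2$ term by term, so $\sum_{k}C_k^2/(\lambda-D_k)^2 \leq \lVert\mathcal{C}\rVert^2/(\lambda-D_2)^2$ holds pointwise --- nothing is varied, $\lambda$ never moves, and no stationarity analysis is needed. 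Saturation is then handled separately: a matrix whose off-diagonal weight is spread uniformly over a degenerate block $D_2\,\mathrm{Id}_\nu$ is unitarily equivalent to one with a single entry $C_2=\lVert\mathcal{C}\rVert$, because both are arrowhead matrices sharing the same secular function Eq.~\eqref{eq_eigenvalue}; this produces exactly your claimed extremal family (your "dominance of the two-dimensional block"). So your second route ("directly bounding $\sum_k C_k^2/(\lambda-D_k)^2$ under the constraints") is the right one, and it is far easier than you anticipate; the Lagrange-multiplier route would only add complications without buying generality.
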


The above theorem establishes that the worst kind of error density matrices
$\rho_{err}$ are the ones in which only the dominant eigenvector has a non-zero
overlap with the ideal state $\rho_{id}$ while all other eigenvectors are orthogonal to
the ideal state. Only these kind of errors can saturate the general upper bound on $c$,
however in stark contrast, quantum circuits in near-term quantum devices typically produce
error density matrices whose eigenvectors are highly unlikely to be orthogonal to the ideal state.
It thus stands to reason that the extremal error density matrices are highly unlikely to appear in practice,
and thus practically relevant noisy quantum states are expected to be significantly below this bound.

An important implication of the above theorem for practical applications is that the error bound depends
on the dominant eigenvalue $\mu_1$ of the noise state $\rho_{err}$ (since $\delta$ is proportional to $\mu_1$).
This eigenvalue depends exponentially on the Rényi
entropy $\mu_1 = e^{- H_\infty}$ which generally lower bounds all other Rényi entropies as  $H_\infty \leq \dots  H_2 \leq H_1$. 
We are thus guaranteed that the coherent mismatch decreases exponentially with Rényi entropies of the error density matrix
eigenvalues.
Similar exponential scaling results were obtained in ref.~\cite{koczor2020exponential} for the ESD approach and
it was noted that near quantum hardware are be expected to produce large entropy quantum states.
As such, a significant advantage of the present upper bound is that the parameter $\delta$ depends only
on spectral properties of the quantum state, i.e., eigenvalues and Rényi entropies, which may be
estimated in experiments~\cite{ekert2002direct, PhysRevA.64.052311, marvian2014generalization, 9163139, PhysRevA.89.012117,christandl2007nonzero,christandl2006spectra}.

Fig.~\ref{plot_eigvals} shows the coherent mismatch in case of $5\times10^4$ randomly
generated quantum states. Orange rectangles (blue dots) in Fig.~\ref{plot_eigvals}  correspond to quantum states whose dimension $d$
was generated uniformly randomly in the range $2 \leq  d \leq 8$ ($2 \leq  d \leq 1024$). Indeed, saturating
the upper bound (dashed black line) is significantly less likely in larger dimensions (blue rectangles
are significantly below upper bound).
This is expected since the extremal quantum states occupy a rapidly decreasing portion
of the full volume of state space.
Refer to Appendix~\ref{app_numerics} for more details.

\subsubsection{Limiting scenarios\label{sec_limiting_sc}}
We have found in the previous section that the error states $\rho_{err}$ that saturate the error bound
depend on the parameter $\delta$ which quantifies the ratio of the eigenvalues
(ideal state vs. dominant eigenvalue of the error state $\rho_{err}$, see Definition~\ref{def_state}).

In the limiting scenario when the contribution of the error density matrix $\rho_{err}$
is much smaller than the ideal state we obtain the limit $\delta \rightarrow 0$.
In this limit the dominant eigenvector of the extremal error state $\rho_{err}$ is an equal superposition
\begin{equation}\label{eq_extremal_state}
	| \chi \rangle = ( | \psi_{id} \rangle +  | \phi_2 \rangle )/\sqrt{2}, 
\end{equation}
due to Theorem~\ref{theo_upbdelta} where $| \phi_2 \rangle$ is an arbitrary error state that is orthogonal to
$| \psi_{id} \rangle$. This also informs us 
that the extremal quantum states in the practically relevant regime (i.e., for small $\delta$) have
dominant error vectors of the form, i.e., $| \chi \rangle \approx ( | \psi_{id} \rangle +  | \phi_2 \rangle )/\sqrt{2}$.

On the other hand, when the contribution of the error state is as strong as the ideal state via $\delta \rightarrow 1$
then the worst-case error vector is almost orthogonal to the ideal state 
via some small $\omega \ll 1$
\begin{equation*}\label{eq_extremal_state_global}
	| \chi_\omega \rangle = \sqrt{\omega}  | \psi_{id} \rangle +  \sqrt{1-\omega} | \phi_2 \rangle.
\end{equation*}
Surprisingly, we find that the global worst-case error, i.e., when $c=1/2$, can only be saturated by the quantum state
in the limit when $\omega \rightarrow 0$ (one must compute the limit only after computing $c$) as
$$ 
\rho = \frac{1}{2} | \psi_{id} \rangle \langle  \psi_{id} |  + \frac{1}{2} | \chi_\omega \rangle \langle  \chi_\omega |
=
\tfrac{1}{2}
\left(
\begin{array}{cc}
	1 &  \sqrt{\omega} \\
	 \sqrt{\omega} & 1 \\
\end{array}
\right)	
+\mathcal{O}(\omega).
$$
To illustrate this, in the second equation above we have computed the matrix representation
of the quantum state in the $2$-dimensional subspace spanned by
the orthonormal vectors $ | \psi_{id} \rangle $ and $| \phi_2 \rangle$
to leading order in $\omega$. Indeed, the dominant eigenvector of
this density matrix is the vector $(1,1)^T/\sqrt{2}$
(up to an error $\mathcal{O}(\omega)$) and this vector has a fidelity $1/2 +\mathcal{O}(\omega)$ to the ideal
computational state $(1,0)^T$. The limit of the coherent mismatch $\lim_{\omega \rightarrow 0} c = 1/2$
is thus well-defined, however, note that
the state itself in the limit becomes trivially the identity matrix (commuting case).

Interestingly, here we find exactly the opposite behaviour when compared to the case of
eigenvalues in Weyl's inequalities in Sec.~\ref{sec_related_maths}. Recall that
for the sum of two matrices $\rho = (| \psi_{id} \rangle \langle  \psi_{id} | + \rho_{err})/2$ the extremal shift to the eigenvalues (Weyl inequalities)
is saturated when the dominant eigenvector of $\rho_{err}$ is actually $| \psi_{id} \rangle$.
In stark contrast, we have found above that the extremal coherent mismatch (extremal shift in the dominant eigenvector)
is saturated only in the limit when the dominant eigenvector of $\rho_{err}$ is orthogonal to $| \psi_{id} \rangle$.

\begin{figure*}[tb]
	\begin{centering}
		\includegraphics[width=0.85\textwidth]{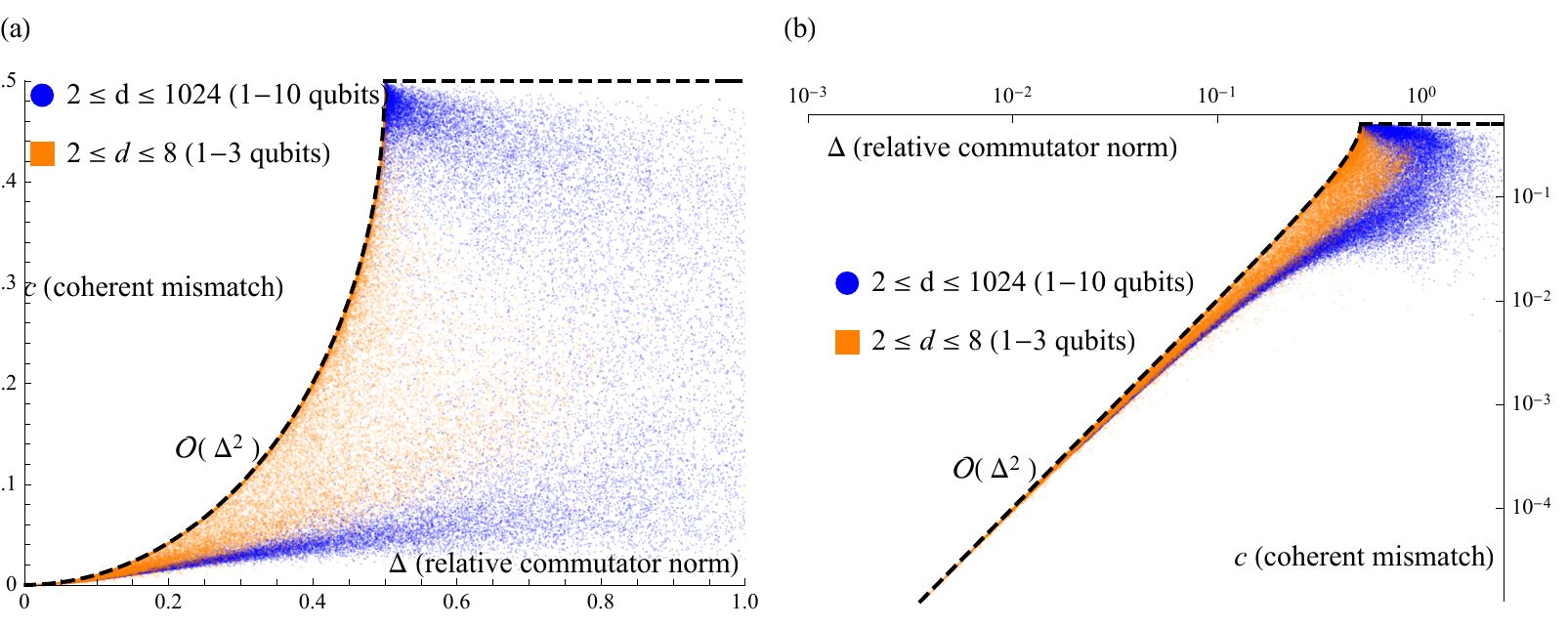}
		\caption{
			Coherent mismatch $c$ in randomly generated states and its upper bound (dashed black lines) as 
			a function of the relative commutator norm $\Delta$ (that is proportional to $\lVert [\rho_{id}, \rho] \rVert_\infty$).
			(a) linear-linear scale and (b) log-log scale.
			This bound is independent of the not necessary
			unique decomposition in Eq.~\eqref{rho_decomp_sum}.
			Another significant advantage is that this upper bound comes with a similarly scaling lower bound:
			for small $c\leq 10^{-3}$ all randomly (uniformly with respect to the Haar measure) generated states (blue dots and orange rectangles) nearly saturate
			the upper bound (dashed black lines) due to the asymptomatically coinciding lower and upper bounds.
			\label{plot_commutators}
		}
	\end{centering}
\end{figure*}

\subsubsection{Application to error suppression \label{numb_copies}}
Let us finally remark on the implications of the above results to the performance of the ESD and VD approach.
Recall that ref.~\cite{koczor2020exponential} established general scaling results on how many
copies $n$ are required to reach a precision $\mathcal{E}$ in suppressing the noise
when measuring expectation values in the dominant eigenvector.

Let us now assume that the aim is to suppress this error level $\mathcal{E}$ 
to the level of error caused by the coherent mismatch (assuming normalised observables $\lVert O \rVert_{\infty} =1$).
Consistent with Theorem~\ref{theo_upbdelta}, we assume that the quantum state and the noise is of the form
of Eq.~\eqref{rho_decomp_sum} as $\eta \rho_{id} + (1-\eta)  \rho_{err}$ and assume that the quantum 
states are considerably noisy with $1-\eta$ being sufficiently
large as relevant in practice, i.e., $\eta \leq 2/3$ in general and $\eta \leq 4/5$ when we aim to
prepare eigenstates. These two conditions correspond to circuit error rates $\xi > 0.41$ and $\xi > 0.22$, respectively,
which is reasonable to assume in practice.
If we set our target precision to be the general trace distance bound from
Sec.~\ref{precision_motivation} as $\mathcal{E} = 2\sqrt{c}$,
then we obtain the following result in Appendix~\ref{proof_copies}:
we find that we need at least $3$ copies to reach the target precision with the worst-case extremal states.
Interestingly, ref.~\cite{czarnik2021qubit} found in numerical simulations of noisy derangement
circuits that, for the considered circuits, at least $3$ copies were required to reach a noise floor
determined by the coherent mismatch and by the noise in the controlled-SWAP operations.

On the other hand, if our aim is to prepare eigenstates as discussed in Sec.~\ref{precision_motivation},
then the coherent mismatch is guaranteed to cause a quadratically smaller error. 
We thus set the target precision to $\mathcal{E} = 2 c$ and find in Appendix~\ref{proof_copies}
that we need at least $4$ copies to reach the noise floor in the practically relevant region
where states are considerably noisy (via circuit error rates $\xi > 0.22$).
This confirms prior numerical simulations (Fig.~4 in ref.~\cite{koczor2020exponential}).

While we have derived these results for the extremal quantum states,
it stands to reason that in more realistic scenarios one may need significantly more copies to
reach the precision as limited by the coherent mismatch. 
Furthermore, these arguments establish that as long as the quantum device is limited to preparing only a small number of copies
(e.g., $2$, $3$ or $4$ depending on hardware constraints) of the noisy quantum state, then the error introduced by the coherent mismatch
will be guaranteed to be smaller than the error caused by having too few copies (not sufficient suppression).

\subsection{Lower and upper bounds via commutators\label{sec_lower_upper_bounds}}

While the previously derived bounds are tight as they are
saturated by the extremal states, they can be generally very pessimistic since
the extremal states are very unlikely to be relevant in practice.
This is nicely illustrated in Fig.~\ref{plot_eigvals} where most randomly generated
quantum states are significantly below this bound (dashed black lines) especially as the dimensionality grows
(orange vs. blue). In fact, the previous bound can be arbitrarily pessimistic, since
generally there is no lower bound of $c$ in terms of $\delta$: when $\delta$
is non-zero then $c$ can still be zero when $\rho_{err}$ and $\rho_{id}$ commute.
This leads to our next point: to derive general upper and lower bounds in terms of the
commutator. These bounds will in turn be independent of the non-unique decomposition in
Eq.~\eqref{rho_decomp_sum} and will also allow us to derive scaling results due
to the asymptotically coinciding lower and upper bounds.

\subsubsection{Expressing the commutator norm \label{sec_comm_norm}}
As we discussed above, if the error matrix $\rho_{err}$ commutes with the ideal state $\rho_{id}$
than the coherent mismatch must vanish. Similarly we would expect that
if the commutator is `large' than the coherent mismatch should also be large.
In the following we would like to introduce a measure of how large the commutator is. For this
purpose we will use a suitable matrix norm $\lVert \cdot \rVert$ which we will aim to upper bound.

Interestingly, it has been an open problem in mathematics to upper bound the Hilbert-Schmidt or Frobenius
norm of the commutator between two matrices and was only very recently solved for general matrices, refer to
refs.~\cite{bottcher2008frobenius,vong2008proof,wu2010short,bottcher2005big,laszlo2007proof,cheng2010commutators,wenzel2010impressions}
for more details. In particular, it was found that the norm of the commutator of two
generic matrices is upper bounded as
\begin{equation} \label{general_commut_norm}
	\lVert [A,B] \rVert_{HS} \leq \sqrt{2}	\lVert A \rVert_{HS} \, \lVert B \rVert_{HS}.
\end{equation}

As opposed to generic matrices, in the present case we aim to express the norm of the commutator
of two density matrices $[\rho_{id}, \rho]$. Although we make no assumption about $\rho$ (except that it is a density matrix),
$\rho_{id}$ is a special matrix,
i.e., a projector, since it represents a pure quantum state. This property allows us to express
the commutator norm more explicitly.
\begin{statement}\label{stat_commutator}
	We analytically solve the eigenvalues and eigenvectors of both the matrix $C$ from Statement~\ref{stat_arrowhead}
	and the commutator $[\rho_{id}, \rho]$. We establish that both matrices
	have only two non-zero eigenvalues as  $\mathrm{Spec}(C) = \{ \pm \sigma \}$ and
	$\mathrm{Spec}([\rho_{id}, \rho]) =  \{ \pm i \sigma \}$. It follows that
	their matrix norms are equivalent
	\begin{equation*}
		\lVert [\rho_{id}, \rho] \rVert_p = \lVert C \rVert_{p}   = 2^{1/p} \sigma,
	\end{equation*}
	for all $0 \leq p\leq \infty$.
	The eigenvalue can be computed as
	\begin{equation*}
		\sigma^2 = \var[\rho] =   \langle \rho^2 \rangle -  \langle \rho \rangle ^2 = \langle  \psi_{id} | \rho^2 | \psi_{id} \rangle  - F^2,
	\end{equation*}
	which expresses a generalised variance of the density matrix and $F$ is the fidelity.
\end{statement}
Refer to Appendix~\ref{proof_commutator} for a proof.
Interestingly, we can directly relate the off-diagonal entries of the arrowhead matrix---as determined by $C$---from Statement~\ref{stat_arrowhead}
to the commutator $[\rho_{id}, \rho]$. Furthermore, the
above result establishes that the commutator norm is exactly given by a
generalised uncertainty $\var[\rho]$ which is a notion widely used in quantum theory to
express the variance of measurement statistics of an observable in, e.g., quantum metrology \cite{review}
and beyond \cite{sakurai1995modern}. In the present case the observable is the operator $\rho$
and the state is the ideal computational state $| \psi_{id} \rangle$. It is also interesting to note that
this commutator norm $\sigma^2$ is proportional to the quantum Fisher information~\cite{qfi1} of the quantum state
$| \psi_{id} \rangle$ in a unitary parametrisation generated by the Hamiltonian $\mathcal{H} \equiv \rho$.

Let us now illustrate how using the above expressions yield improved bounds when
compared to the general bounds considered in the literature. As such, it is straightforward to
show that
\begin{equation*}
\lVert [\rho_{id}, \rho] \rVert_{HS} =  \sqrt{2} \sigma \leq \sqrt{2} \, \lambda,
\end{equation*}
and this bound is indeed considerably tighter than the prior general result
in Eq.~\ref{general_commut_norm} since for states of low purity (i.e., $\tr\rho^2 \ll 1$)
we find that $\lambda \ll \lVert  \rho \rVert_{HS}$.
Furthermore, assuming the decomposition from Eq.~\eqref{rho_decomp_sum} we obtain the general bound
$\sigma \leq \eta \delta/2$, where $\delta$ was defined in Definition~\ref{def_state}
and $\eta \delta$ was the extremal shift in the Weyl inequalities in Remark~\ref{remark_weyl}.

\subsubsection{Upper bound via commutator norm \label{sec_commut_upb}}
We are now prepared to derive a general upper bound of the coherent mismatch
based on the previously obtained norms of the commutator.

\begin{theorem}\label{theo_commut_upb}
	Let us define the metric $\Delta:=\sigma_r/(1- Q)$ that depends only on two parameters:
	the relative commutator norm $\sigma_r := \sigma/\lambda$, where the commutator norm $\sigma$ was defined in
	Statement~\ref{stat_commutator} and
	the ratio of the two dominant eigenvalues is $Q:=\lambda_2/\lambda$.  For any fixed $\Delta$ there exist an infinite number of
	worst-case scenario states that saturate the upper bound of the coherent mismatch as
	\begin{equation}\label{eq_commut_upb_theo}
		c \leq  (1- \sqrt{1- 4\Delta^2})/2 = \Delta^2	+ \Delta^4 +\mathcal{O}(\Delta^6).
\end{equation}
	These extremal states $\rho$ have eigenvectors $| \psi_k \rangle$ that are orthogonal
	to the ideal state $| \psi_{id} \rangle$,
	except for the two dominant eigenvectors $| \psi \rangle,| \psi_2 \rangle$ 
	that correspond to the two dominant eigenvalues $\lambda, \lambda_2$.
\end{theorem}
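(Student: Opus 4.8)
The plan is to reduce the maximisation of $c$ at fixed $\Delta$ to a transparent finite optimisation over the arrowhead entries of Statement~\ref{stat_arrowhead}, and then solve it explicitly. First I would invoke the exact formula of Statement~\ref{stat_coherent}: since $c=1-(1+S)^{-1}$ with $S:=\sum_{k=2}^{d} C_k^2/(\lambda-D_k)^2$ is monotone in $S$, maximising the coherent mismatch is equivalent to maximising $S$. From Statement~\ref{stat_commutator} together with the arrowhead structure I would record the commutator norm as $\sigma^2=\langle\psi_{id}|\rho^2|\psi_{id}\rangle-F^2=\sum_{k=2}^{d}C_k^2$, so the two ingredients of $\Delta$ are $\sigma^2=\sum_k C_k^2$ and $Q=\lambda_2/\lambda$. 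Exploiting scale invariance I would set $\lambda=1$, whereby $\lambda_2=Q$ and $\sigma=\Delta(1-Q)$ are fixed once $\Delta$ and $Q$ are chosen.

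The key technical step is to encode that $\lambda_2=Q$ genuinely is the second eigenvalue. I would evaluate the secular function of Eq.~\eqref{eq_eigenvalue} at both roots, $P(\lambda)=P(\lambda_2)=0$, and subtract; after clearing denominators the $F$-dependence cancels, leaving the single relation $\sum_{k=2}^{d} C_k^2/[(\lambda-D_k)(D_k-\lambda_2)]=1$. Writing $a_k:=\lambda-D_k$ and $b_k:=D_k-\lambda_2$ (so $a_k+b_k=\lambda-\lambda_2$ is constant), this constraint together with $\sum_k C_k^2=\sigma^2$ constitutes two \emph{linear} constraints in the nonnegative variables $C_k^2$. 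Hence, at fixed diagonal entries, the objective $S$ (also linear in $C_k^2$) is maximised at a vertex with at most two nonzero $C_k$. Moreover the interlacing inequalities~\eqref{eq_interlacing} force $b_2=D_2-\lambda_2\geq 0$ while $b_k\leq 0$ for $k\geq 3$; since the constraint sum must equal the positive value $1$, at least one surviving index must be $k=2$, which is exactly the extremal structure claimed in the theorem.

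It then remains to solve the reduced problem. In the single-$C_2$ (effectively $2\times2$) case the two dominant eigenvalues are the roots of the top-left block, giving $\lambda+\lambda_2=F+D_2$ and $\lambda\lambda_2=FD_2-C_2^2$; substituting these into the expressions for $c$ and $\Delta$ and selecting the physical branch on which $|\psi\rangle$ overlaps $|\psi_{id}\rangle$ by more than $1/2$ yields precisely $c=\tfrac12(1-\sqrt{1-4\Delta^2})$, independently of $Q$, saturating the bound. The infinite family of extremal states then follows by letting the decoupled diagonal entries $D_3,\dots,D_d\leq\lambda_2$ and the value of $Q$ vary freely, each such state having all eigenvectors but the two dominant ones orthogonal to $|\psi_{id}\rangle$.

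I expect the main obstacle to be the final comparison: showing that re-introducing a second nonzero entry $C_j$ (necessarily with $D_j\leq\lambda_2$, i.e.\ $b_j\leq0$) cannot push $c$ above the single-$C_2$ value, and, relatedly, correctly isolating the branch $c\leq 1/2$ so that $|\psi\rangle$ really approximates $|\psi_{id}\rangle$ rather than flipping onto an orthogonal eigenvector (the other root $\tfrac12(1+\sqrt{1-4\Delta^2})$ corresponds to this unphysical case). I would handle this through a scale-free reformulation in which each index contributes a point $(1/t_k^2,\,1/[t_k(1-t_k)])$ on the curve $v=u/(\sqrt u-1)$ with weights summing to one; a perturbation-plus-convexity argument then shows that shifting weight onto the second point, which necessarily has $t_j>1$ and hence $u_j<1<u_2$, strictly lowers the objective while the constraint drives the surviving point further down the decreasing branch, so the two-point competitor never beats the single-$C_2$ extremal state.
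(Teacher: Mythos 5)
Your route is genuinely different from the paper's, and in one respect it is more complete. The paper proves this theorem by recycling the lemma behind Theorem~\ref{theo_upbdelta}: interlacing gives $c \leq 1-[1+\sigma^2/(\lambda-D_2)^2]^{-1}$, saturated by single-$C_2$ arrowhead matrices, and the paper then simply evaluates $c$ for that extremal family as a function of $\Delta$ via the $2\times2$ eigenvalue algebra (your Vieta step). The passage from ``this is the value of $c$ on the extremal family'' to ``this bounds every state with the same $\Delta$'' is asserted rather than argued there. You instead attack the correct optimisation directly: fixing $(\lambda,\lambda_2,\sigma)$, deriving the secular-equation identity $\sum_k C_k^2/[(\lambda-D_k)(D_k-\lambda_2)]=1$ (which the paper never writes down, and which is exactly what is needed to make its bridging step rigorous, since the lemma's bound involves $D_2$ rather than $\lambda_2$), and reducing to at most two nonzero couplings by a linear-programming vertex argument. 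The shared ingredients are Statement~\ref{stat_coherent}, Statement~\ref{stat_commutator} via $\sigma^2=\sum_k C_k^2$, and interlacing; the reduction mechanism is yours, and it is sound.

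The gap is in your final step, and it is not cosmetic. Write $t_2:=(\lambda-D_2)/(\lambda-\lambda_2)$ and $w:=\sqrt{1-4\Delta^2}$. Your perturbation-plus-convexity argument is correct exactly when the surviving point lies on the decreasing branch of your curve, i.e.\ $t_2>1/2$: there, weight shifted onto a negative-$v$ companion pushes $v_2$ up, hence $u_2$ down, and the objective strictly drops. But feasible configurations with $t_2<1/2$ exist, are \emph{not} the single ``other root'' $\tfrac12(1+w)$ that you dismiss, and are not touched by that argument; indeed at fixed $\Delta$ their $S$ is unbounded, so they are genuine counterexamples to the inequality as literally stated (already the qubit state with $F=0.3$, $D_2=0.7$, $C_2=0.1$ has $c\approx 0.95$ while the bound is $\approx 0.05$). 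They must therefore be excluded by a hypothesis and then shown to violate it \emph{even when mixed}, which your sketch does not do. The patch is short: for any feasible configuration the $k=2$ term of your secular identity dominates, so $C_2^2\geq(\lambda-D_2)(D_2-\lambda_2)$, and $C_2^2\leq\sigma^2$ forces either $t_2\geq(1+w)/2$ or $t_2\leq(1-w)/2$; in the latter (wrong-branch) case $S\geq C_2^2/(\lambda-D_2)^2\geq(D_2-\lambda_2)/(\lambda-D_2)\geq(1+w)/(1-w)>1$, hence $c>1/2$, which is excluded by the paper's standing assumption (Definition~\ref{def_state} with $\delta<1$ forces $c\leq 1/2$ via Theorem~\ref{theo_upbdelta}); in the former case $S\leq\sigma^2/(\lambda-D_2)^2\leq 4\Delta^2/(1+w)^2=(1-w)/(1+w)$ closes the proof without any convexity argument. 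Note that this hypothesis-dependence is shared by the paper, whose proof silently selects the $F>D_2$ branch.
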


The above upper bound is saturated by extremal states similar to the ones in Theorem~\ref{theo_upbdelta}.
The crucial difference, however, is that this upper bound is completely independent of the (not necessarily
unique) decomposition  into an ideal and noisy quantum states from Eq.~\eqref{rho_decomp_sum}. The present bound
can thus be applied to more general scenarios too (note that the definition of the extremal states above is
independent of $\rho_{err}$).

Fig.~\ref{plot_commutators} shows the coherent mismatch as a function of the metric $\Delta$
for $5 \times 10^4$ randomly generated density matrices in various dimensions (blue dots and orange rectangles).
The upper bound (dashed black lines) is significantly more likely to be saturated by random states in lower
dimensions (orange rectangles) since the extremal states occupy a negligible volume of the increasingly higher
dimensional state space. We can identify two distinct regions in the plots.

First, for large $\Delta \geq 0.2$ most of the randomly generated states are significantly below the bound, similarly
as in Fig.~\ref{plot_eigvals}. Note also that the metric $\Delta$ can in principle be larger than $1/2$
and in such a scenario Eq.~\eqref{eq_commut_upb_theo} is not defined. For this reason Fig.~\ref{plot_commutators}
shows the general bound $c \leq 1/2$ in this region. We note, however, that this region is not relevant in practice
since typical quantum circuits in near-term quantum devices produce errors that typically result in relatively small
commutator norms $\Delta \ll 1$ as discussed in Sec.~\ref{circuit_section}.

Second, in the practically more relevant region where $c \leq 10^{-3}$ is sufficiently small,  one can
observe that all the randomly generated states nearly saturate the upper bound. 
The reason for this behaviour will be clarified in the next section where we derive
a general lower bound on $c$ and show that it approaches the upper bound as $c$ decreases
-- thus tightly confining the possible values that $c$ can take up. 
Let us now introduce this lower bound.

\subsubsection{Lower bound via commutator norm and application to error suppression\label{sec_lower_bound}}

Using the same technique as in Theorem~\ref{theo_commut_upb}
we can derive a directly analogous lower bound for the coherent mismatch.
\begin{lemma}\label{theo_commut_lowb}
	Let us define the metric $\dmin:=\sigma_r/(1- \qmin)$ that depends only on two parameters:
	the relative commutator norm $\sigma_r := \sigma/\lambda$ and the ratio $\qmin:=\lambda_m/\lambda$
	where $\lambda_m$ is the smallest non-zero eigenvalue of $\rho$. 
	For any fixed $\dmin$ there exist an infinite number of
	best-case scenario states that saturate the lower bound of the coherent mismatch as
	\begin{equation}
		c \geq  (1- \sqrt{1- 4 \dmin^2})/2 = \dmin^2	+ \dmin^4 +\mathcal{O}(\dmin^6).
	\end{equation}
    The dominant eigenvector $| \psi \rangle$ of the extremal state $\rho$ and its eigenvector $| \psi_m \rangle$
	that corresponds to the smallest non-zero eigenvalue $\lambda_m$ have non-zero overlaps 
	with the ideal computational state $|\psi_{id}\rangle$. All other eigenvectors $| \psi_k \rangle$
	of $\rho$ 
	with $k \in \{2, 3, \dots m-1, m+1, \dots d \}$	are orthogonal to the the ideal state $|\psi_{id}\rangle$.
\end{lemma}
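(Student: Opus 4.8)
The plan is to mirror the proof of Theorem~\ref{theo_commut_upb} and reduce the whole statement to a single sharp estimate on the generalised variance $\sigma^2 = \var[\rho]$ identified in Statement~\ref{stat_commutator}. First I would expand the ideal state in the eigenbasis of $\rho$ from Eq.~\eqref{rho_decomp}, writing $p_k := |\langle \psi_k | \psi_{id} \rangle|^2$ with $p_1 := |\langle \psi | \psi_{id} \rangle|^2$. By Definition~\ref{def_state} this gives $p_1 = 1-c$ and $\sum_{k\geq 2} p_k = c$, while Statement~\ref{stat_commutator} lets me read off
\begin{equation*}
	\sigma^2 = \var[\rho] = \sum_{k} p_k \lambda_k^2 - \Big( \sum_{k} p_k \lambda_k \Big)^2 ,
\end{equation*}
i.e.\ $\sigma^2$ is exactly the variance of the random variable $X$ that takes the eigenvalue $\lambda_k$ with probability $p_k$ (with $\lambda_1 \equiv \lambda$). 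The whole lemma then follows from controlling this variance in terms of $c$ and the spectral gap.

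The key step is to \emph{upper} bound this variance. Since $X$ is supported on $[\lambda_m,\lambda]$ and the maximal value $\lambda$ already carries probability $p_1 = 1-c$, I would argue that the variance cannot exceed that of the two-point distribution placing the residual mass $c$ at the opposite extreme $\lambda_m$, yielding the sharp inequality
\begin{equation*}
	\sigma^2 \leq c\,(1-c)\,(\lambda - \lambda_m)^2 .
\end{equation*}
A clean way to prove this is to rescale to $Y := (X-\lambda_m)/(\lambda-\lambda_m) \in [0,1]$, use $Y^2 \leq Y$ to get $\var[Y] \leq \mathbb{E}[Y](1-\mathbb{E}[Y])$, and then note $\mathbb{E}[Y] \geq p_1 = 1-c \geq \tfrac12$ (in the relevant regime $c\leq 1/2$) so that the concave map $\mu \mapsto \mu(1-\mu)$ is evaluated on its decreasing branch and bounded by $(1-c)c$. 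This is precisely the mirror image of the corresponding step in Theorem~\ref{theo_commut_upb}, where the non-dominant eigenvalues are instead confined to $[\lambda_m,\lambda_2]$ and the same two-point argument \emph{lower} bounds the variance by $c(1-c)(\lambda-\lambda_2)^2$.

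Dividing by $(\lambda-\lambda_m)^2$ and recalling $\sigma_r = \sigma/\lambda$ and $\qmin = \lambda_m/\lambda$ turns the variance bound into $\dmin^2 \leq c(1-c)$. Since for $c\leq 1/2$ the map $c \mapsto \sqrt{c(1-c)}$ is increasing with inverse $\dmin \mapsto \tfrac12(1-\sqrt{1-4\dmin^2})$, monotonicity immediately gives the claimed lower bound $c \geq \tfrac12(1-\sqrt{1-4\dmin^2})$, whose Taylor expansion reproduces $\dmin^2 + \dmin^4 + \mathcal{O}(\dmin^6)$; for $c>1/2$ the bound holds trivially since its right-hand side never exceeds $1/2$. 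To exhibit the saturating family I would trace back the two inequalities used: equality in $Y^2 \leq Y$ together with $\mathbb{E}[Y]=1-c$ forces all residual probability to sit exactly at $\lambda_m$, i.e.\ $|\psi_{id}\rangle$ must lie entirely in $\mathrm{span}\{|\psi\rangle,|\psi_m\rangle\}$ as $|\psi_{id}\rangle = \sqrt{1-c}\,|\psi\rangle + \sqrt{c}\,|\psi_m\rangle$, with every other eigenvector orthogonal to $|\psi_{id}\rangle$; the remaining spectrum and phases are free, producing the promised infinite family.

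I expect the main obstacle to be the saturation bookkeeping rather than the inequality itself. In particular I would need to ensure the extremal construction is consistent with $\lambda_m$ genuinely being the smallest eigenvalue whose eigenvector overlaps $|\psi_{id}\rangle$ (equivalently, that $\rho$ has full support on the ideal state): if $|\psi_{id}\rangle$ overlapped a strictly smaller, e.g.\ vanishing, eigenvalue then the effective support of $X$ would widen and $\dmin$ would have to be defined with respect to that eigenvalue instead, so this edge case must be handled when asserting tightness.
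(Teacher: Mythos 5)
Your proposal proves the lemma by a genuinely different route than the paper. The paper's proof in Appendix~\ref{proof_lowb_commutator} stays inside the arrowhead framework: the extremal states are those unitarily equivalent to $A_{min}$ from Eq.~\eqref{extremal_state_matrix_lower}, the $2\times 2$ block $M_{min}$ is diagonalised exactly (re-using the algebra of Appendix~\ref{proof_upb_commutator}), and the resulting value of $c$ is promoted to a general bound by appealing to the arrowhead lower bound of Appendix~\ref{proof_upbdelta} together with interlacing. You instead interpret $\sigma^2=\var[\rho]$ from Statement~\ref{stat_commutator} as the variance of the spectral measure $p_k=|\langle\psi_k|\psi_{id}\rangle|^2$, bound this variance by that of the two-point measure placing mass $1-c$ at $\lambda$ and $c$ at $\lambda_m$, and invert $\dmin^2\le c(1-c)$. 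This is more elementary and self-contained; it yields the sharp constant directly (note that naively combining the paper's arrowhead lower bound with the interlacing inequality $D_m\ge\lambda_m$, valid for full-rank states, would only give the weaker $c\ge\dmin^2/(1+\dmin^2)=\dmin^2-\dmin^4+\mathcal{O}(\dmin^6)$, so your variance step is doing real work); and its equality analysis delivers exactly the extremal family named in the lemma, since equality in $Y^2\le Y$ forces all residual spectral weight of $|\psi_{id}\rangle$ onto the eigenvector $|\psi_m\rangle$.

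The one genuine issue is the support assumption, whose severity you understate. The step ``$X$ is supported on $[\lambda_m,\lambda]$'' requires $p_k=0$ whenever $\lambda_k=0$, i.e.\ that $|\psi_{id}\rangle$ has no component in $\ker\rho$; you flag this at the end but present it as an obstacle for ``asserting tightness'', whereas it is needed for the validity of the inequality itself. Concretely, take eigenvalues $(0.6,\,0.4,\,0)$ and $|\psi_{id}\rangle=\sqrt{1-t}\,|\psi\rangle+\sqrt{t}\,|\psi_3\rangle$ with $|\psi_3\rangle$ the null eigenvector: then $c=t$, $\sigma^2=(0.6)^2\,t(1-t)$, $\lambda_m=0.4$, hence $\dmin^2=9\,t(1-t)$, and the claimed lower bound is at least $9t(1-t)$, which exceeds $c=t$ for small $t$. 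So without the support hypothesis neither your argument nor the lemma as stated holds; the paper's own proof implicitly carries the same restriction, since for such rank-deficient states the smallest non-zero arrowhead entry (here $D_m=0.6t$) drops below $\lambda_m$ and the eigenvalue-based bound no longer follows from the arrowhead-entry one. The fix is simply to state the hypothesis that $|\psi_{id}\rangle$ lies in the support of $\rho$ (automatic for full-rank states, and implicit in the paper); with that hypothesis added, your argument is a complete and valid proof.
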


Refer to Appendix~\ref{proof_lowb_commutator} for a proof.
The above Lemma guarantees that the coherent mismatch is always at least as large as
the above lower bound for a fixed $\dmin$. Note that  the upper bounds in Theorem~\ref{theo_commut_upb}
are similarly determined by $\sigma_{r}^2$: the most important consequence is that
for a sufficiently small coherent mismatch $c \rightarrow 0$, the possible values
that $c$ can take up are tightly confined by the upper and lower bounds. This is illustrated in Fig.~\ref{plot_commutators}:
all randomly generated states with small $c \leq 10^{-3}$ nearly saturate the upper bound.

To substantiate this observation let us compute the ratio of the lower and upper bounds as
\begin{equation}\label{eq_ratio}
	\frac{\text{lower b.}}{\text{upper b.}} = \frac{1-Q}{1-\qmin} + \mathcal{O}(\sigma^4)
	\approx 1 - \frac{\lambda_2-\lambda_m}{\lambda}.
\end{equation}
Let us now consider 3 different scenarios in which the above ratio approaches $1$ and thus
the lower and upper bounds coincide.

First, the ratio approaches $1$ when the suppression factor is very small $Q \ll 1$.
Such a small suppression factor guarantees high efficacy of the ESD/VD approach as established
in ref.~\cite{koczor2020exponential}, but it may not be reasonable to expect vanishingly small
suppression factors for realistic noisy circuits with a large number of gates, refer to Sec.~\ref{circuit_section}.
On the other hand, even a realistic $Q \approx 1/2$ would result in approximately a factor
of $2$ ratio between the lower and upper bounds which is already reasonably tight.

Second, the approximation in Eq.~\ref{eq_ratio} depends on the difference between the largest $\lambda_2$
and smallest $\lambda_m$ `error' probabilities (eigenvalues of $\rho$ from Eq.~\eqref{rho_decomp}).
Indeed, $Q$ need not vanish in order for the ratio in Eq.~\eqref{eq_ratio}
to approach $1$: it is sufficient that the smallest and largest `error' probabilities are close via
$\lambda_2 \approx \lambda_m$. 
This is naturally the case for the extremal, rank-$1$ error states $\rho_{err}$ from Sec.~\ref{sec_limiting_sc}
for which $\lambda_2 \equiv \lambda_m$ and we are thus guaranteed that the bounds coincide and
are simultaneously saturated.

Third, one can generally expect that the above difference between the largest and smallest error probabilities
is determined by the entropy of the error probability distributions. In particular, ref.~\cite{koczor2020exponential} introduced the
error probability vector $\underline{p}:= (\tfrac{\lambda_2}{1-\lambda}, \tfrac{\lambda_3}{1-\lambda} \dots \tfrac{\lambda_d}{1-\lambda})^T$
and established that the efficacy of the ESD/VD approach depends on the Rényi entropies $H_n(\underline{p})$
of this probability vector. Indeed the difference $\lambda_2 - \lambda_m \leq e^{-H_\infty(\underline{p})}$
generally decays exponentially with the entropy and regardless of the value of $Q$ the difference of the
eigenvalues is negligibly small for high-entropy probability distributions.
One can thus generally expect that for high-entropy experimental states the possible values of the coherent
mismatch are tightly confined by the lower and upper bounds.

\section{Application to quantum circuits \label{circuit_section}}

\subsection{Approximating commutators in noisy quantum circuits}

Let us now consider noisy quantum circuits that prepare quantum states $\rho$
via mappings $\Phi_{c} \rho_{\underline{0}}$ as discussed in Sec.~\ref{sec_prob_def}.
Since the commutator norm $\sigma$ has a special significance (see Sec.~\ref{sec_lower_upper_bounds})
our aim in the following is to approximate the commutator norm for these quantum circuits.

First, let us consider the limiting global worst-case scenario
in which case the ideal unitary computation is followed by a global error channel with probability $\epsilon$ as
 $\Phi_{c} \rho_{\underline{0}} =  (1-\epsilon) \rho_{id} + \epsilon \sum_{j=1}^K M_j \rho_{id} M_j^\dagger$.
This is a special case of Eq.~\ref{error_channel} in which all gates are perfect, except for the last one.
The commutator norm in this case is generally upper bounded as $\sigma^2 \leq \epsilon^2/4 $
and the bound is saturated when the mapping prepares the extremal states in Eq.~\eqref{eq_extremal_state_global}.

Let us now consider the error channel from Eq.~\ref{error_channel} and assume that every
gate has an identical error probability $\epsilon$. Let us now make another simplification for ease of notation
and focus on the case when $K = 1$ for all $k$: such as in case of dephasing noise. 
While these assumptions greatly simplify the following derivations we remark that the present results
can be generalised straightforwardly as discussed in Appendix~\ref{app_general_kraus}.

The considered error model maps the density matrix to an incoherent superposition
(mixture) of $2^{\nu}$ (where $\nu$ is the number of gates) pure states which correspond
to individual error events. For example, the pure state
$U_\nu U_{\nu-1} \cdots  M_k \cdots U_2 U_1  |\underline{0}\rangle$
represents the event where an error happens during the execution of the
$k^{th}$ gate but all other gates are noiseless -- this occurs with probability $\epsilon (1-\epsilon)^{\nu-1}$,
refer to Appendix~\ref{sec_ciruit_commutator} for more details. In general we find that
there are overall $\binom{\nu}{l}$ different events where $l$ errors happen
and each of these have probabilities $\epsilon^l (1-\epsilon)^{\nu-l}$.

As such, we can approximate $\eta$ from
Eq.~\ref{rho_decomp_sum} via the probability that no error happens as
\begin{equation}\label{eq_eta_tilde}
	\tilde{\eta} := (1-\epsilon)^\nu =   (1-\xi /\nu)^\nu \approx e^{-\xi},
\end{equation}
where we have introduced the usual circuit error rate $\xi:=\nu \epsilon$ to denote the expected
number of errors in the full circuit. Indeed, for a sufficiently large number $\nu$ of gates
the probability that no error happens decays exponentially with $\xi$.

We compute the norm (from Statement~\ref{stat_commutator}) of the commutator $[\rho_{id}, \rho]$ in Appendix~\ref{sec_ciruit_commutator}
assuming the above error model and obtain the expression
\begin{equation}\label{summation}
	\sigma^2 	=
	\sum_{\textbf{k} , \textbf{l} \in I}   p_\textbf{k} p_\textbf{l} 	\mathcal{L}_{\textbf{k} \textbf{l}}.
\end{equation}
Here the index set $I$ indexes all distinct error events
and there are exponentially many $|I| = 2^{\nu}-1$ of them. Here,
$p_\textbf{k}$ are probabilities of the individual error events, while $\mathcal{L}_{\textbf{k} \textbf{l}}$
are real numbers that depend on the scalar products between the different erroneous states
and are thus generally upper bounded as $|\mathcal{L}_{\textbf{k} \textbf{l}}| \leq 1$.

The diagonal terms $\mathcal{L}_{\textbf{k} \textbf{k}}$ in the above sum are strictly non-negative and we can
obtain a general upper bound by analytically evaluating the summation  as
\begin{equation} \label{diag_upb}
		f := \sum_{\textbf{k} \in I} p_\textbf{k}^2  =
		(1-\epsilon )^{2 \nu} \left(\left(\frac{1 - 2 (1- \epsilon ) \epsilon }{(1-\epsilon)^2}\right)^\nu-1\right).
\end{equation}

\begin{figure*}[tb]
	\begin{centering}
		\includegraphics[width=0.85\textwidth]{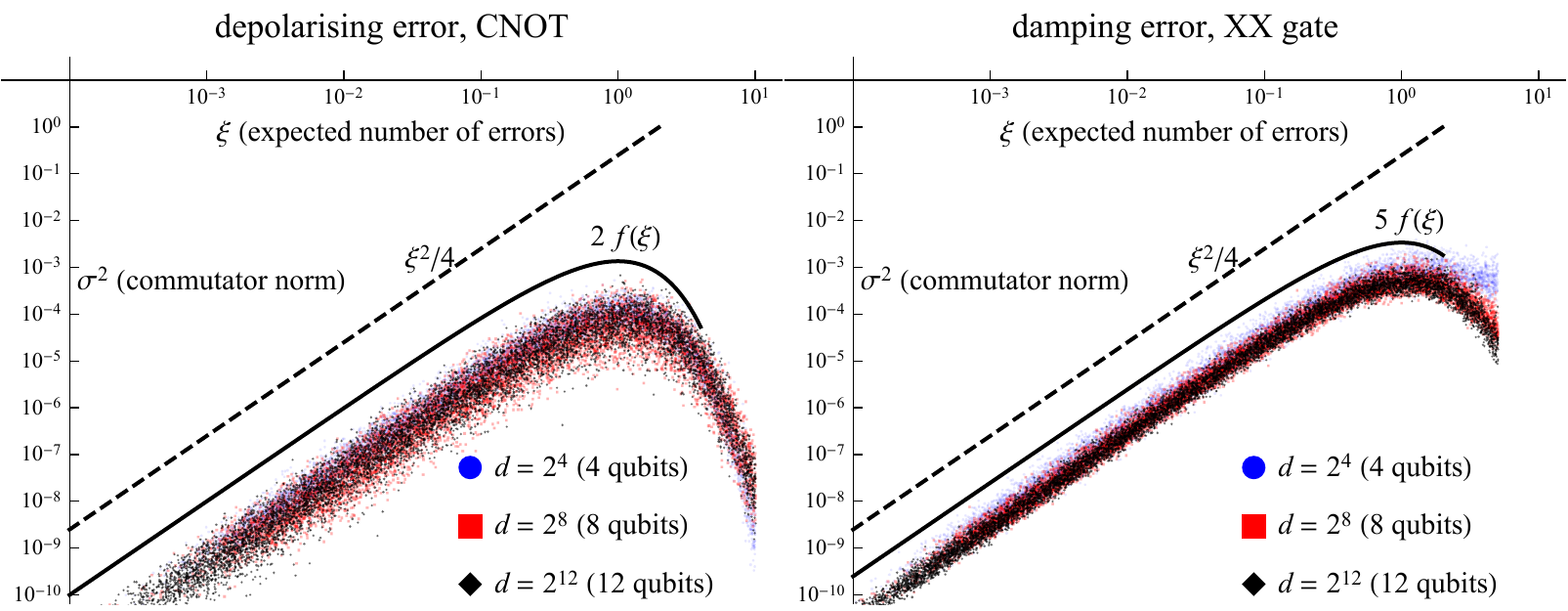}
		\caption{
			Commutator norm $\sigma^2$ in simulated circuits and its upper bound (black solid lines) as a function of the circuit error rate $\xi$.
			Overall $10^4$ circuits composed of $\nu=200$ gates were randomly generated
			as combinations of single qubit $X$ and $Z$ rotations, and CNOT (left) or $XX$ (right) entangling gates. 
			The gates are followed by depolarising (left) or damping (right) noise. 
			It is established in Sec.~\ref{circuit_section} that for sufficiently complex quantum circuits
			the commutator norm $\sigma = \lVert [\rho_{id}, \rho] \rVert_{\infty}$ from Statement~\ref{stat_commutator}
			is upper bounded by the function $f(\xi) \approx  e^{-2 \xi} \, \xi^2 / \nu$ from Eq.~\ref{eq_f_xi_approx} up to a constant,
			where $\xi$ is the expected number of errors in the circuit and $\nu$ is the number of gates.
			For very small $\xi \ll 1$, the bound (black solid lines) is approximately
			by a factor of $\nu$ smaller than the worst-case scenario (dashed black lines).
			$\sigma $ is maximal when $\xi \approx 1/2$ and its maximum
			is at $\sigma_{max} = \mathrm{const} \times \xi/\nu$ which decreases when increasing the number of gates.
			Our approximate upper bound (solid black lines) may break down for very large error rates $\xi \gg 10$.
			Remarkably, in the	practically most important regime $\xi \leq 5$ the same kind of scaling can be observed
			for a large variety of circuits (even for highly deterministic ones) as shown in Fig.~\ref{plot_noisemodel_table}.
			\label{plot_noisemodel}
		}
	\end{centering}
\end{figure*}

In contrast, the off-diagonal terms in the summation in Eq.~\eqref{summation} depend
on the relative phase between the state vectors of the erroneous quantum states.
We can generally upper bound the summation in Eq.~\eqref{summation} and obtain the completely general
upper bound $\sigma^2  \leq (1- \tilde{\eta})^2$ which is approximated by $\xi^2$ for
small error rates. This bound is indeed pessimistic: even the global worst-case scenario
discussed above has a guaranteed bound $\sigma^2  \leq \xi^2/4$ which is by a 
factor of $4$ smaller.

In order to be able to establish a more meaningful upper bound, we now consider a rather
artificial assumption: we assume that the off-diagonal terms
$\mathcal{L}_{\textbf{k} \textbf{l}}$ with $\textbf{k} \neq \textbf{l}$ in Eq.~\eqref{summation}
are random variables with mean $0$ and some variance $s_{\textbf{k} \textbf{l}}$.
This is equivalent to assuming that complex phases (relative to the ideal state $|\psi_{id} \rangle$)
of the $2^\nu-1$ erroneous pure states uniformly cover the complex plane.
We stress that this assumption is not equivalent to \emph{non-entangling random} circuits
undergoing single-qubit depolarising noise considered in ref.~\cite{huggins2020virtual}.
Those circuits map to noise $\rho_{err} = \mathrm{Id}/d$ that commutes with the ideal state and
indeed one trivially finds that $\sigma=c=0$. In contrast, ref.~\cite{huggins2020virtual} demonstrated that
relatively deep \emph{entangling random} circuits result in a coherent mismatch that is
non-zero and comparable to that of non-random circuits.

The above point can be illustrated via the following analogy:
suppose that we sum up $n$ random real numbers (drawn from a distribution of mean $0$ and variance $s$).
The sum of these numbers is highly unlikely to be $0$. In fact, the result is another random number
that is upper bounded with high probability by some multiple of the square-root of the total
variance that we can compute as $\sqrt{n s}$. In analogy to this observation, we compute the total variance
in Appendix~\ref{sec_ciruit_commutator} and approximately upper bound the summation from  Eq.~\eqref{summation} as
\begin{equation}\label{eq_commut_upb}
	\sigma^2 	=
	\sum_{\textbf{k} , \textbf{l} \in I}   p_\textbf{k} p_\textbf{l} 	\mathcal{L}_{\textbf{k} \textbf{l}}
	\lessapprox
	\mathrm{const} \times f,
\end{equation}
where $f$ was defined in Eq.~\eqref{diag_upb} as the general upper bound on the diagonal entries.
Interestingly, we thus find that assuming randomly distributed off-diagonal entries,
the total sum is only by a constant multiplicative factor larger than the upper bound of
the diagonal entries. Let us now analyse this upper bound.

\subsection{Analysing the approximate bound}

Let us now analyse in detail the upper bound function $f$ from
Eqs.~\eqref{diag_upb}-\eqref{eq_commut_upb}.
In particular, in Appendix~\ref{app_f_analyse} we obtain the approximation
\begin{equation} \label{eq_f_xi_approx}
	f(\xi) \approx  e^{-2 \xi} \, \xi^2 / \nu,
\end{equation}
up to a negligible multiplicative error (that vanishes for large $\nu$) that we neglect for ease of notation.
This approximation is plotted in Fig.~\ref{plot_noisemodel} as a function of
the circuit error rate $\xi$. In the plot one can recognise the following 3 distinct regions.

(a) When the circuit error rate is small $\xi \ll 1$ we find that the upper bound
increases in quadratic order as $\mathrm{const} \times \xi^2 / \nu$. We can compare this expression to the global
worst-case scenario scaling $\xi^2 / 4$ and deduce that the present bound decreases inversely
proportionally with the number $\nu$ of gates (at a fixed error rate $\xi$).
This is illustrated in Fig.~\ref{plot_noisemodel} where the function $f(\xi)$ (solid black lines) are indeed
significantly below the global worst-case bound (dashed black lines), approximately by a factor $\nu$
up to the constant factor from Eq.~\eqref{eq_commut_upb}.

(b)
The maximum of the function $f(\xi)$ is at
\begin{equation}
	\xi_{max} = 1/2 + \mathcal{O}(\epsilon),
\end{equation}
and this position is independent of the constant multiplicative factor from Eq.~\eqref{eq_commut_upb}.
It is also interesting to note
that the global maximum of the function is
\begin{equation*}
	f(\xi_{max}) =  \frac{\epsilon }{2 e} + \mathcal{O}(\epsilon^2)
\end{equation*}
proportional to $\epsilon= \xi/\nu$. This informs us that the maximum of the bound
is decreased inversely proportionally when increasing the number of gates similarly to (a). 
In fact, one can generally state that the upper bound in Eq.~\eqref{eq_commut_upb} scales
as $\sigma^2 = \mathcal{O}(1/\nu)$ for any fixed $\xi$.

(c)
The function  $f(\xi)$ starts to decrease in the third region where  $\xi > 1/2$
and decreases in exponential order asymptomatically for $\xi \gg 1$. 
On the other hand, we observe that in the region where $\xi \gg 1$, our approximation breaks down:
in some instances we numerically observe a different scaling in this regime,
especially when the circuits are highly deterministic.
We have performed additional simulations to illustrate this point: in Fig.~\ref{plot_noisemodel_table}
the commutator norm decreases more slowly for highly deterministic circuits (constant rotation angles
in the quantum gates)
in the region $\xi > 1/2$.   
Nevertheless, this region is not particularly relevant in practice for the following reason.
In the context of the ESD/VD approach the number of circuit repetitions required to suppress
shot noise scales exponentially with the circuit error rate via Eq.~\eqref{eq_eta_tilde}.
It in fact generally holds for error mitigation techniques that their costs grow exponentially
and one thus needs to guarantee a bounded $\xi$. For example, assuming a quadratic (standard shot noise)
scaling of the measurement costs, the overhead at $\xi=5$ is approximately a factor of $2.2\times 10^4$
which is certainly prohibitive in practice \cite{koczor2020exponential,van2020measurement}.

On the other hand, we remarkably find that our bounds hold surprisingly well in all scenarios
in the practically most important region when $\xi \leq 5$. In particular,
these bounds seem to hold remarkably well even for highly deterministic circuits in Fig.~\ref{plot_noisemodel_table},
such as circuits with constant rotation angles -- despite that we assumed randomly distributed
phases for our approximate bounds.
Furthermore, even error models that are beyond the scope of Eq.~\eqref{error_channel}, such as damping in Fig.~\ref{plot_noisemodel},
seem to result in exactly the same kind of scaling.
The numerical data seem to be independent of the number of qubits too
(compare blue, red and black in Fig.~\ref{plot_noisemodel} and in Fig.~\ref{plot_noisemodel_table}) as long as the number
of gates is fixed, which is consistent with the theoretical bounds. 
Most remarkably, up until the point $\xi \approx 1$ each of the large variety
of circuits simulated  in this work resulted in exactly the same type of scaling with respect to $\xi$ and $\nu$
up to only a small (relative to $\nu$) global multiplication factor.

These observations are supported in Appendix~\ref{app_general_kraus} where extensions of our bound to more
general error models are discussed: the form of the upper bound function in Eq.~\eqref{eq_commut_upb} is expected to be 
the same even if one allows higher rank Kraus maps as in Eq.~\eqref{error_channel} or when one allows different
error probabilities for different gates via $\epsilon_k$. Interestingly, if a fraction of the gates commutes with
the error Kraus maps then our bound function $f(\xi)$ still holds up to a minor re-scaling of its argument $\xi$
(via a multiplication with a constant).
Let us now apply our results to bounding the coherent mismatch.

\subsection{Application to coherent mismatch $c$ and noise floor $\sqrt{c}$}
Let us now consider the upper bound for the coherent mismatch via Theorem~\ref{theo_commut_upb}
that depends on the commutator norm. Let us assume that the commutator norm $\sigma$ is bounded
via Eq.~\eqref{eq_commut_upb} and we then obtain 
\begin{equation} \label{scaling_result}
	c \leq  \frac{ \sigma^2  } {  \eta^2 (1- Q)^2 } + \mathcal{O}(\sigma^4)
	\lessapprox \mathrm{const} \times \frac{ \xi^2  } {  \nu (1- Q)^2 } ,
\end{equation}
where we have used that the probability of the ideal state $\eta$ is upper bounded as $\eta \geq \tilde{\eta}$ via
Eq.~\eqref{eq_eta_tilde}. Let us now remark on 3 important consequences of the above approximate bound
and how it confirms prior numerical observations.

(a) Eq.~\eqref{scaling_result} establishes that the coherent mismatch scales as $c = \mathcal{O}(\nu \epsilon^2)$
when assuming a fixed $Q$, where $Q$ was defined in Theorem~\ref{theo_commut_upb} as the ratio of the two
largest eigenvalues.
This scaling is consistent with previous numerical observations:
It was numerically observed in ref.~\cite{koczor2020exponential} (ref.~\cite{huggins2020virtual}) that
if one increases the per-gate error probability $\epsilon$
in a fixed quantum circuit then the coherent mismatch (noise floor) grows quadratically (linearly)
as $c = \mathcal{O}(\epsilon^2)$ ($\sqrt{c} = \mathcal{O}(\epsilon)$).

(b) Eq.~\eqref{scaling_result} establishes a scaling $c = \mathcal{O}(\nu)$ when increasing the number $\nu$
of gates at a fixed per-gate error rate.
This is consistent with the observation of ref.~\cite{koczor2020exponential} that increasing the
number of gates in a circuit of fixed per-gate error probability $\epsilon$ increases the coherent mismatch
proportionally as $c = \mathcal{O}(\nu)$, while ref.~\cite{huggins2020virtual}  similarly
observed in numerical random-circuit simulations that the noise floor $\sqrt{c}$
slightly increases when increasing $\nu$.
As noted in the above section, the scaling results in this work were derived assuming sufficiently
complex quantum circuits, but these results appear to hold remarkably well for even
highly deterministic circuits too as long as the circuit error rate does
not significantly exceed $\xi \approx 5$.

The crucial implication of this scaling for practical applications is the following.
Consider a computational task that is defined for $N$ qubits. A quantum circuit of depth $a(N)$ then requires
overall $\nu = \mathcal{O}(N a(N))$ gates to implement the computation. This ensures
us that the coherent mismatch decreases even for constant depth as $c = \mathcal{O}(\xi^2 N^{-1})$
when the size of the computation (via $N$) is increased at a constant circuit error rate $\xi$.
In practice one needs to keep $\xi$ at least bounded to ensure a bounded sampling cost which was
discussed in the previous section.

(c) Another important consequence of these scaling results is the following. 
Recall that the probability that no error happens decays exponentially with the
circuit error rate as $\tilde{\eta} \approx e^{-\xi}$. This is approximately constant for
a fixed value of $\xi$. In stark contrast, we have found that the coherent mismatch depends
on the number of gates and scales as $c=\mathcal{O}(\xi^2/\nu)$.
Let us now compare the fidelity $F$ that decreases due to incoherent errors and the coherent
fidelity $1-c$ that decreases due to the coherent mismatch in the dominant eigenvector.
The fidelity can be approximated as $F \approx \tilde{\eta} \approx  e^{-\xi} $ and decays exponentially
due to incoherent errors, while the fidelity $1-c$ due to the coherent mismatch decays as
$1-c = 1-\mathcal{O}(\xi^2/\nu)$.
The ratio of these two fidelities can then be approximated as
\begin{equation*}
	\frac{1-c}{F} \approx e^{\xi} - \mathcal{O}(e^{\xi} \xi^2/\nu).
\end{equation*}
Indeed the above ratio increases exponentially when increasing $\xi$ within a finite range, e.g., when
$\xi < 10$ and when the number $\nu$ of gates is sufficiently large. This is consistent
with numerical observations of ref.~\cite{koczor2020exponential}:
increasing the number of gates in a sufficiently complex circuit decreases the
incoherent fidelity ($F$) exponentially faster than it decreases the coherent fidelity ($1-c$).
Very importantly, this ensures us that the coherent mismatch of the dominant
eigenvector (which cannot be suppressed) causes an exponentially smaller error when compared to
the incoherent decay of the fidelity $F$. Here the latter can indeed be suppressed exponentially
by increasing the number of copies in the ESD/VD approach.

\section{Discussion and Conclusion \label{discussion}}

The present work considered the fundamental question: given a noisy quantum state,
how well does its dominant eigenvector $| \psi\rangle$ approximate a corresponding
ideal, noise-free computation $| \psi_{id} \rangle$? While it is of fundamental importance
to understand how noise affects quantum systems, this particular question has crucial
practical relevance. The recently introduced ESD/VD error suppression techniques
are ultimately limited by the coherent mismatch.

This work has established general upper bounds and scaling results for the coherent mismatch 
and presented a comprehensive analysis of its implications in practically relevant scenarios.
As such, it was established that the coherent mismatch is indeed negligibly small
for sufficiently complex noisy quantum circuits, typically used in variational quantum algorithms
and other near-term quantum algorithms
\cite{endo2020hybrid,cerezo2020variationalreview,bharti2021noisy}. It is interesting to note that
since variational quantum algorithms rely on optimising a cost function, this optimisation
can be expected to anyway minimise the effect of the coherent mismatch.
Let us briefly summarise the most important results.

(a) The bound based on the noise floor $\sqrt{c}$ in ref.~\cite{huggins2020virtual} was improved
and quadratically smaller bounds are obtained for the pivotal case of preparing eigenstates
-- see Sec.~\ref{precision_motivation}.

(b) A general upper bound for the coherent mismatch was obtained in Sec.~\ref{sec_upbdelta} by explicitly constructing
worst-case scenario extremal quantum states that saturate it
(for this we analytically computed the coherent mismatch in Sec.~\ref{sec_arrowhead_eigenvector}
using our arrowhead decomposition obtained in Sec.~\ref{sec_arrowhead}). The present problem is closely
related to an important problem in mathematics: bounding the eigenvalues of a sum of two matrices (Weyl inequalities).
While those bounds are well-known to be saturated by identical dominant eigenvectors, it was shown
in Sec.~\ref{sec_limiting_sc} that bounds obtained in this work are in stark contrast saturated
by the close-to-orthogonal dominant eigenvectors of the extremal quantum states.

(c) In the ESD/VD approach, even for extremal quantum states, one needs at least $3-4$ copies of
the noisy state to suppress errors to the noise floor set by the coherent mismatch, 
see Sec.~\ref{numb_copies}.
The coherent mismatch is thus guaranteed to be negligible in practical applications where
the quantum device is limited in its ability to prepare a large number of copies.

(d) Another closely related problem in mathematics is upper bounding the matrix norm of the commutator
of two matrices. We obtained considerably tighter bounds then prior results in the specific case of
the matrix norm of the commutator between two density matrices, see Sec.~\ref{sec_comm_norm}.
Interestingly, the commutator norm is given by the generalised quantum-mechanical variance of the density matrix
which quantity is also proportional to the quantum Fisher information.

(e) General upper and lower bounds were obtained in Sec.~\ref{sec_commut_upb} and Sec.~\ref{sec_lower_bound}
for the coherent mismatch in terms of the commutator norm from (d). It was established that in the practically important region the
upper and lower bounds are close to each other and thus tightly confine possible values of $c$ 
-- while the bounds asymptomatically coincide. It was also shown that the coherent mismatch generally
decays exponentially with Rényi entropies of the error probabilities -- indeed, similar scaling results
were obtained in ref.~\cite{koczor2020exponential} for the efficacy of the ESD/VD approach and it was
noted that near-term quantum devices are expected to produce high-entropy errors.

(f) We finally applied the above general results to the specific but pivotal case of noisy quantum circuits
in Sec.~\ref{circuit_section}.
The resulting approximate bounds confirmed scaling results of ref.~\cite{koczor2020exponential}:
the coherent mismatch in sufficiently complex noisy circuits is decreased inversely proportionally
when increasing the size of the computation (by increasing the number of qubits at a fixed error rate).
Furthermore, in the practically important regions, the incoherent deterioration of a quantum state
is exponentially more severe than the drift in the dominant eigenvector. This establishes
that the coherent mismatch is indeed negligible in relevant applications of the ESD/VD approach.

Results obtained in this work pave the way towards developing advanced error mitigation techniques that
will be crucial for the successful exploitation of noisy quantum devices. A number of apparent
questions will be worth investigating in the future, such as developing twirling techniques
(and generalisations thereof) that
potentially decrease the coherent mismatch without affecting the ideal part of
the computation. In particular, one could obtain a series of quantum circuits $\Phi_c^{(l)}$
whose unitary component $U_c$ is identical for every $l$ while the noise component is
different. The average of such channels $|L|^{-1} \sum_{l\in L}\Phi_c^{(l)}$ is thus guaranteed
to increase the entropy of errors resulting in a smaller coherent mismatch.

Another open question is related to similar themes in mathematics: 
Analogously to the Weyl inequalities for the eigenvalues, 
is it possible to generalise the present results to obtain a series of upper and lower bounds
for infidelities in all eigenvectors (not just the dominant one)? Answering this question will be highly non-trivial since
the generalisation to arbitrary matrices will require to go beyond the analytical expressions obtained
for $c$ and $\sigma$ which assumed that $\rho_{id}$ is rank-1 thus having only a single dominant component.

Let us finally remark that arguments presented in this work naturally generalise to infinite-dimensional
quantum states $\rho$ as general trace-class operators.

\section*{Acknowledgments}
I would like to thank Simon C. Benjamin, Earl Campbell and Sam McArdle for useful discussions.
I would like to thank Robert Zeier, Zhenyu Cai and Adrian Chapman for their valuable comments
and for carefully reading drafts of this work.
I acknowledge funding received from EU H2020-FETFLAG-03-2018 under the grant
agreement No 820495 (AQTION) and from EPSRC Hub grant under the agreement number
EP/T001062/1.
I acknowledge financial support from the Glasstone Research Fellowship of the University of Oxford.
The numerical modelling involved in this study made
use of the Quantum Exact Simulation Toolkit (QuEST), and the recent development
QuESTlink\,\cite{QuESTlink} which permits the user to use Mathematica as the
integrated front end. I am grateful to those who have contributed
to both these valuable tools.


%

\onecolumngrid
\appendix

\section{Validity of the decomposition in Eq.~\eqref{rho_decomp_sum} \label{def_state_appendix}}
Here we discuss the scope and non-uniqueness of the decomposition in Eq.~\eqref{rho_decomp_sum}.
Let us remark that this decomposition is very useful for illustrating and understanding the core problem
while it is also natural in most of the typical error channels.

Let us first note that the `quality' of the noisy quantum state is expressed via the fidelity
$F:= \langle \psi_{id} | \rho | \psi_{id} \rangle$, which can be interpreted as
a probability; indeed we need to restrict the mapping $\Phi_{c}$ to ones that
result in $F>0$ in order to exclude trivial cases. In case if $\rho$ is full-rank,
then there always exists a decomposition $\rho =  \eta \rho_{id} + (1-\eta)  \rho_{err}$
for some $\eta > 0$ and for positive semi-definite $\rho_{err}$. This can be shown straightforwardly
by subtracting  $\rho - \eta \rho_{id}$ since the difference matrix is generally guaranteed (due to the Weyl inequalities
in Sec.~\ref{sec_related_maths}) to be positive semi-definite as
long as $\eta \leq \lambda_m$, where $\lambda_m$ is the smallest eigenvalue of $\rho$.

While the considered decomposition is natural in case of many of the typical error channels, e.g.,
the one considered in Eq.~\eqref{error_channel}, it is not unique and multiple values of $\eta$ can satisfy it.
Nevertheless, we can uniquely define an optimal $\eta$ via the following optimisation problem as
\begin{equation}\label{optimal_eta}
	\eta = \max \{ \eta \,  | \, \frac{\rho - \eta \rho_{id}}{1-\eta}\,  \text{is positive-semidefinite}\}.
\end{equation}
In the above equation, we find the largest possible $\eta$ for which the resulting operator
still corresponds to a valid density matrix.
This definition would guarantee that the parameter $\delta$ in Definition~\ref{def_state} is minimal under the above
decomposition and the resulting upper bounds in Theorem~\ref{theo_upbdelta} are the least possible.

In summary, the decomposition in Eq.~\eqref{rho_decomp_sum} is guaranteed to exist for full-rank density matrices $\rho$,
but does not necessarily exist for arbitrary density matrices. An example when the decomposition does not exist is
when $\rho = | \chi \rangle \langle \chi |$ and $| \chi \rangle \neq | \psi_{id}\rangle$, which is the case of a purely
coherent error.
Another disadvantage is that the decomposition in Eq.~\eqref{rho_decomp_sum} is not unique since multiple
values of $\eta$ can satisfy it: we have defined an optimal value of $\eta$ above which, however, requires
a non-trivial optimisation. Nevertheless, the arguments presented in, e.g., Sec.~\ref{sec_commut_upb}, which depend on the commutator
norm are completely independent of this decomposition and apply to any density matrix (even to rank-deficient ones).

\section{Noise floor and coherent mismatch \label{app_noise_floor}}

\begin{figure*}[tb]
	\begin{centering}
		\includegraphics[width=0.95\textwidth]{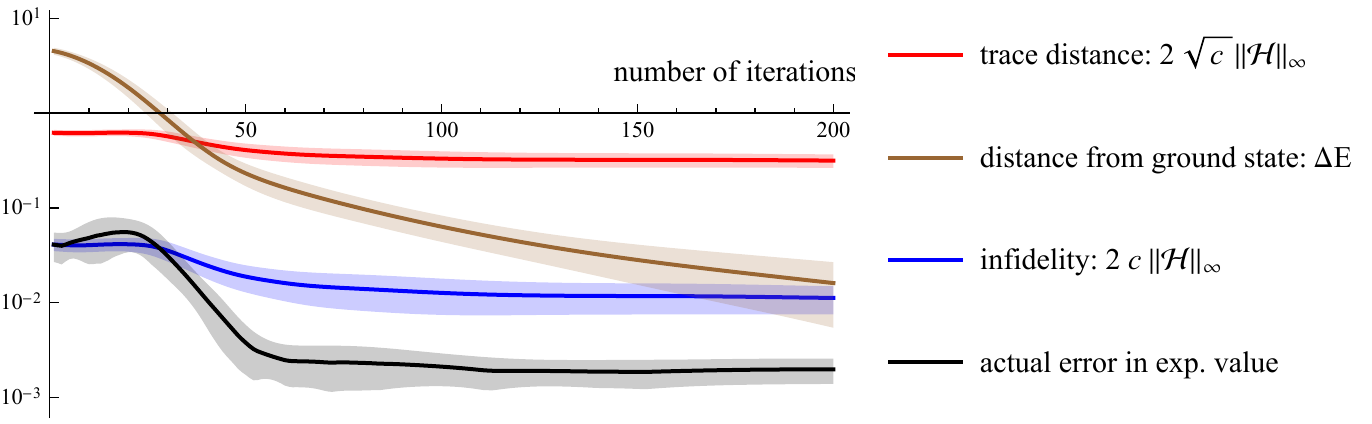}
		\caption{
			Example of a variational quantum optimisation using $8$ qubits. The ground state of a spin-ring Hamiltonian
			with nearest neighbour $XX$, $YY$ and $ZZ$ couplings and randomly generated on-site
			frequencies $\omega_k Z$ is searched via a VQE optimisation.
			The distance from the exact ground-state energy (brown) approaches $0$ as the number of iterations
			is increased. If the errors in the noisy quantum circuit (circuit error rate $\xi \approx 2$)
			are suppressed via the ESD/VD approach then one can measure the exact expectation value with respect
			to the dominant eigenvector of the noisy quantum state -- this causes an error (black) when
			compared to the ideal expectation value in a noiseless circuit. This error is
			generally upper bounded by the noise floor (red, trace distance) which is very pessimistic and as the
			quantum state approaches the ground state then the error is guaranteed to be upper bounded
			by the coherent mismatch (blue, infidelity). The latter bound seems to hold even for approximate ground states
			(low iteration depth).
			\label{plot_optimise}
		}
	\end{centering}
\end{figure*}

\begin{proof}
	Let us consider the expression for the noise floor
\begin{equation*}
	N_f := \lim_{n \rightarrow \infty} T(\rho_n, \rho_{id}).
\end{equation*}
Using notations in Eq.~\eqref{rho_decomp} we can express the state as
\begin{equation*}
	 \rho_n:= \frac{\rho^n}{\tr[\rho^n]}
	 = \frac{\sum_{k=1}^d \lambda_k^n | \psi_k \rangle \langle \psi_k |}{\sum_{k=1}^d \lambda_k^n } 
	 =   \frac{ | \psi \rangle \langle \psi |+  \sum_{k=2}^d \lambda_k^n/\lambda^n | \psi_k \rangle \langle \psi_k |}{ 1+ \sum_{2=1}^d \lambda_k^n/\lambda^n}
	 = \xi(n) | \psi \rangle \langle \psi | + M_n,
\end{equation*}
where  $\xi(n):= [1+ \sum_{2=1}^d \lambda_k^n/\lambda^n]^{-1}$ which exponentially converges to its limit $\lim_{n \rightarrow \infty} \xi(n) = 1$
and the residual matrix  $ M_n:=  \xi(n)  \sum_{k=2}^d \lambda_k^n/\lambda^n | \psi_k \rangle \langle \psi_k |$
is diagonal with eigenvalues $ \xi(n) \lambda_k^n/\lambda^n$. This ensures us that in any $p$-norm topology the matrix
$M_n$ converges in exponential order to its limit $ \lim_{n \rightarrow \infty} \lVert M_n \rVert_p = 0$.
We can thus deduce that in any matrix norm topology the distilled matrix $\rho_n$ approaches the pure state $| \psi \rangle \langle \psi |$
in exponential order. We thus find that 
\begin{equation*}
	 \lim_{n \rightarrow \infty} T(\rho_n, \rho_{id}) = T(| \psi \rangle \langle \psi |, \rho_{id}) = \sqrt{1- |\langle \psi_{id} | \psi \rangle|^2  }
	 = \sqrt{c},
\end{equation*}
where in the second equality we have used that the trace distance of two pure states can be evaluated analytically in terms of the fidelity.

One can straightforwardly show that the trace distance upper bounds measurement 
errors with respect to any bounded observable $O$ as
\begin{equation*}
 |\tr[O \rho_{id}] -	\tr[O \rho]| = | \tr[O  (\rho_{id} - \rho) ]| = | \sum_k d_k  \langle \chi_k | O | \chi_k \rangle |
 \leq \lVert O \rVert_\infty \sum_k |d_k|  = 2 \lVert O \rVert_\infty \,  T(\rho, \rho_{id})
\end{equation*}
where $ d_k $ and $ | \chi_k \rangle $ are eigenvalues and eigenvectors of the difference of
the two density matrices.

The quantity $2\sqrt{c} \lVert O \rVert_\infty$ thus upper bounds the measurement error of any
bounded observable. Interestingly, if the ideal computational state approximates an eigenvector
of the measurement operator we then find the following. Let us write the dominant eigenvector
as a linear combination of two vectors
\begin{equation*}
	| \psi \rangle = \sqrt{1-c} | \psi_{id}\rangle   + \sqrt{c} | \psi_{\perp}\rangle,
\end{equation*}
with real, non-negative $c$ (since we are free to choose the global phase of a state vector).
It follows that the measurement of an observable yields
\begin{equation*}
	\langle \psi | O | \psi \rangle = (1-c) \langle \psi_{id} | O | \psi_{id} \rangle 
	+ 2 \sqrt{c(1-c) } \mathrm{Re} \langle \psi_{\perp} | O | \psi_{id} \rangle 
	+ c \langle \psi_{\perp} | O | \psi_{\perp} \rangle.
\end{equation*}
In the special case when $O | \psi_{id} \rangle  = E | \psi_{id} \rangle $ for some 
real $E$ then we obtain $ \langle \psi_{\perp} | O | \psi_{id} \rangle = 0$
and finally, the measurement error of the observable is
\begin{equation*}
	| 	\langle \psi | O | \psi \rangle  -	\langle \psi_{id} | O | \psi_{id} \rangle | = 
	c ( \langle \psi_{\perp} | O | \psi_{\perp} \rangle - \langle \psi_{id} | O | \psi_{id} \rangle )
	\leq 2 c  \lVert O \rVert_\infty.
\end{equation*}

\end{proof}

\section{Proof of Statement~\ref{stat_arrowhead} \label{proof_arrowhead}}
\begin{proof}
	We compute the matrix representation of the operator $\tilde{\rho}$ by choosing an orthonormal basis
	that defines the unitary transformation $U$ such that $U \rho U^\dagger = \tilde{\rho}$.
	Let us choose the leading basis vector as $|\psi_{id}\rangle$ and thus $U |\psi_{id}\rangle =: | \tilde{\psi}_{id}\rangle = (1,0, \dots 0)^T$.
	We can choose the rest of the basis vectors $ |\phi_k\rangle$ arbitrarily as long as
	$\langle \psi_{id} |\phi_k\rangle =0$ for all $k=\{2,3,\dots d\}$.  We define $ |\phi_k\rangle$ such that
	they are eigenvectors of $P \rho P$, where $P = \mathrm{Id}-|\psi_{id}\rangle \langle \psi_{id}|$
	projects onto the orthonormal subspace.
	Furthermore, we are free to choose the global phase of the basis vectors and we note that this global phase has no effect on the diagonal
	entries since
	\begin{equation*}
		D_k :=\langle e^{i\theta_k} \phi_k | \rho | e^{i\theta_k} \phi_k \rangle = \langle \phi_k | \rho |  \phi_k \rangle \geq 0.
	\end{equation*}
	Here $D_k$ are non-negative since  $\rho$ is by definition positive semi-definite.
	We can implicitly define the global phase of the vectors $|\phi_k \rangle$ such that the off-diagonal entries
	are real and non-negative as
	\begin{equation*}
	C_k :=\langle  \psi_{id} | \rho |  \phi_k \rangle \in \mathbb{R} \quad \text{since} 
	\quad C_k e^{i\theta_k}  =\langle  \psi_{id} | \rho | e^{i\theta_k} \phi_k \rangle,
	\end{equation*}
	We have thus established a matrix representation of $\tilde{\rho}$ such that $D_k, C_k \in \mathbb{R}$ and $D_k, C_k \geq 0$,
	and $\tilde{\rho}$ is diagonal in the subspace orthogonal to $|\psi_{id}\rangle$.
	We can finally explicitly write the arrowhead matrix using the above established
	orthonormal basis $\{ \psi_{id}, \phi_2, \phi_3 \dots \phi_n \}$ that defines the unitary
	transformation $U$ such that 
	\begin{equation*}
	U \rho U^\dagger = \tilde{\rho}  = \begin{pmatrix}
			F & C_2 & C_3 & \dots & C_{d}\\
			C_2 & D_2 &  &  &  \\
			C_3 &  & D_3 \\
			\vdots & & & \ddots &\\
			C_{d} &  &  & \dots & D_{d}
		\end{pmatrix}.
	\end{equation*}
\end{proof}

\section{Proof of Statement~\ref{stat_coherent} \label{proof_coherent}}

\begin{proof}
	If we explicitly know the arrowhead matrix, then its eigenvectors can be computed analytically~\cite{o1990computing},
	refer also to Eq.~(5) in \cite{gu1995divide}.
	Recall that we introduced  the orthonormal basis $\{ \tilde{\psi}_{id}, \phi_2, \phi_3 \dots \phi_n \}$
	in Appendix~\ref{proof_arrowhead} and used it to represent $\rho$ as an arrowhead matrix.
	This corresponds to a unitary transformation $U \rho U^\dagger = \tilde{\rho}$, where
	$\tilde{\rho}$ is the arrowhead matrix from 	Statement~\ref{stat_arrowhead}.
	Using these notations we can write the dominant eigenvector of $\rho$
	from  Definition~\ref{def_state} up to this unitary transformation as
	\begin{equation*}
	U | \psi \rangle  = | \tilde{\psi} \rangle  = [1 + \sum_{k=2}^d \frac{C_k^2}{(D_k - \lambda)^2}]^{-1/2} 
	\,
	(-1, \frac{C_2}{D_2 - \lambda}, \frac{C_3}{D_3 - \lambda} \dots \frac{C_d}{D_d - \lambda}) ^T,
	\end{equation*}
	where $\lambda$ is the domiant eigenvector from Definition~\ref{def_state}.
	
	We can apply this explicit formula to compute the coherent mismatch from Definition~\ref{def_state} as
	\begin{equation}
		c := 1- |\langle \psi_{id} | \psi \rangle|^2 = 1- |\langle \tilde{\psi}_{id} | \tilde{\psi} \rangle|^2
		 = 1- [1 + \sum_{k=2}^d \frac{C_k^2 }{(D_k - \lambda)^2 } ]^{-1},
	\end{equation}
	where we have used that $| \tilde{\psi}_{id}\rangle = (1,0, \dots 0)^T$.
	\end{proof}

\section{Proof of Remark~\ref{remark_perturb} \label{proof_perturb}}
\begin{proof}
	The first-order perturbation correction to the dominant eigenvector can be computed via Statement~\ref{stat_arrowhead} 
	using the arrowhead decomposition as
	\begin{equation*}
		\tilde{\rho} = F |\tilde{\psi}_{id}\rangle\langle \tilde{\psi}_{id}| + D +C,
	\end{equation*}
	where $|\tilde{\psi}_{id}\rangle = (1,0, \dots 0)^T$ and $D$ is diagonal. Let us now treat $C$
	as a perturbation of the diagonal matrix $F |\tilde{\psi}_{id}\rangle\langle \tilde{\psi}_{id}| + D$
	and use the usual perturbative expansion, see e.g.,  Eq.~(5.1.44) in \cite{sakurai1995modern}.
	We can thus compute the first-order correction to the dominant eigenvector
	(and recall that $| \tilde{\psi}\rangle := U | \psi\rangle$) as
	\begin{equation}
		| \tilde{\psi}_{corr}^{(1)} \rangle = (0, \frac{C_2}{F-D_2} , \frac{C_3}{F-D_3} ,\dots \frac{C_d}{F-D_d} ).
	\end{equation}
	The normalised first order eigenvector is obtained as
	\begin{equation*}
		| \tilde{\psi}^{(1)} \rangle = [1+ \sum_{k=2}^d \frac{C_k}{F-D_k} ]^{-1/2}
		\,
		(1, \frac{C_2}{F-D_2} , \frac{C_3}{F-D_3} ,\dots \frac{C_d}{F-D_d} ).
	\end{equation*}
	Computing the coherent mismatch $c$ from the above first-order perturbation we obtain
	\begin{equation*}
		c_{pert} :=  1- |\langle \psi_{id} | \psi^{(1)} \rangle|^2 = 1- |\langle \tilde{\psi}_{id} | \tilde{\psi}^{(1)} \rangle|^2
		= 1 - [1 + \sum_{k=2} \frac{C_k^2}{(F-D_k)^2}]^{-1}.
	\end{equation*}

Let us remark that using Eq~(10.2) from \cite{wilkinson1965algebraic}, one can obtain the more accurate first-order
approximation assuming explicit knowledge of the eigenvalues 
\begin{equation*}
	c_{pert} = 1 - [1 + \sum_{k=2}^{d} \frac{C_k^2}{(\lambda-\lambda_k)^2}]^{-1}
\end{equation*}
\end{proof}

\section{Proof of Theorem~\ref{theo_upbdelta} \label{proof_upbdelta}}
\begin{lemma}
	The coherent mismatch $c$ of density matrices is generally bounded
	via their arrowhead-matrix representations with non-negative entries $F, C_k, D_k \geq 0$ with
	$2 \leq k\leq d$ from Statement~\ref{stat_arrowhead} as
	\begin{equation}
		\label{general_upper_bound}
	1- [1 + \frac{  \lVert \mathcal{C} \rVert^2  }{(\lambda - D_m)^2 } ]^{-1}	\leq c  \leq  1- [1 + \frac{  \lVert \mathcal{C} \rVert^2  }{(\lambda - D_2)^2 } ]^{-1},
	\end{equation}
where $ \lVert \mathcal{C} \rVert^2 := \sum_{k=2}^d C_k^2 $ and  $D_m$ is the smallest non-zero diagonal entry of the arrowhead matrix.
The upper bound is saturated by any density matrix $\rho$ that can be mapped to an arrowhead matrix of the form
\begin{equation}\label{extremal_state_matrix}
	A_{max} := M_{max} \oplus \mathrm{diag}( D_3, D_4, \dots D_d ) =
		\begin{pmatrix}
		F & C_2 & 0 & \dots & 0 \\
		C_2 & D_2  &  &  &  \\
		0 &  & D_3 \\
		\vdots & & & \ddots &\vdots \\
		0 &  &  & \dots & D_d
	\end{pmatrix},
\end{equation}
where the only non-zero off-diagonal entry $C_2$ is next to $D_2$.
Furthermore, $D_k$ with $k>2$ are eigenvalues of the arrowhead matrix.
The lower bound is saturated by analogous matrices but the non-trivial $2$-dimensional subspace
$M_{min} = 		\begin{pmatrix}
	F & C_m  \\
	C_m & D_m \\
\end{pmatrix}$
contains the smallest non-zero diagonal entry $D_m>0$ as
\begin{equation}\label{extremal_state_matrix_lower}
	A_{min} := M_{min} \oplus  \mathrm{diag}(  \{D_2, D_3,  \dots D_d\} \setminus \{D_m\} ) =
	\begin{pmatrix}
		F & C_m & 0 & \dots & 0 \\
		C_m & D_m  &  &  &  \\
		0 &  & D_2 \\
		\vdots & & & \ddots &\vdots \\
		0 &  &  & \dots & D_{d}
	\end{pmatrix}.
\end{equation}
\end{lemma}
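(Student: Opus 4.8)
The plan is to start from the exact analytical expression for the coherent mismatch derived in Statement~\ref{stat_coherent}, namely $c = 1 - [1 + S]^{-1}$ with $S := \sum_{k=2}^d C_k^2/(\lambda - D_k)^2$, and to observe that the map $S \mapsto 1 - (1+S)^{-1} = S/(1+S)$ is strictly increasing on $S \geq 0$. Hence it suffices to bound the weighted sum $S$ from above and below, after which both bounds on $c$ in Eq.~\eqref{general_upper_bound} follow immediately by monotonicity. The whole argument therefore reduces to controlling the denominators $(\lambda - D_k)^2$ appearing in $S$.

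First I would check that every denominator is strictly positive. Applying Cauchy's interlacing theorem to the arrowhead matrix of Statement~\ref{stat_arrowhead} (equivalently, the interlacing in Eq.~\eqref{eq_interlacing} together with the assumed uniqueness $\lambda > \lambda_2$ of the dominant eigenvalue) gives the full ordering $\lambda \geq D_2 \geq \lambda_2 \geq D_3 \geq \dots \geq D_d \geq \lambda_d$, so that $\lambda - D_k > 0$ for all $k \geq 2$. The next step, which is precisely what explains why $D_m$ is taken to be the smallest \emph{non-zero} diagonal entry rather than simply the smallest, uses positive semidefiniteness of $\tilde{\rho}$: for a positive semidefinite matrix a vanishing diagonal entry forces the entire corresponding row and column to vanish, so $D_k = 0 \Rightarrow C_k = 0$. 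Consequently every term of $S$ with $D_k = 0$ contributes nothing, and the sum effectively runs only over indices obeying $D_m \leq D_k \leq D_2$.

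For exactly those indices the elementary chain $(\lambda - D_2)^2 \leq (\lambda - D_k)^2 \leq (\lambda - D_m)^2$ holds (all three quantities being positive), whence $1/(\lambda - D_m)^2 \leq 1/(\lambda - D_k)^2 \leq 1/(\lambda - D_2)^2$. Multiplying through by $C_k^2 \geq 0$ and summing over $k$ produces the two-sided bound $\lVert \mathcal{C} \rVert^2/(\lambda - D_m)^2 \leq S \leq \lVert \mathcal{C} \rVert^2/(\lambda - D_2)^2$, and inserting this into the monotone map yields Eq.~\eqref{general_upper_bound}. To establish tightness I would then verify the saturating states directly. For the matrix $A_{max}$ in Eq.~\eqref{extremal_state_matrix} the only non-zero off-diagonal entry is $C_2$, so $\lVert \mathcal{C} \rVert^2 = C_2^2$ and $S$ collapses exactly to $\lVert \mathcal{C} \rVert^2/(\lambda - D_2)^2$; since $A_{max}$ is block-diagonal, its spectrum consists of the two eigenvalues of the $2\times 2$ block $M_{max}$ together with $\{D_3,\dots,D_d\}$, and one checks that the larger eigenvalue of $M_{max}$ is the dominant eigenvalue $\lambda$, so the $D_k$ with $k>2$ are genuine eigenvalues as claimed. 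The lower bound is saturated by the analogous $A_{min}$ in Eq.~\eqref{extremal_state_matrix_lower}, in which the single non-zero off-diagonal entry $C_m$ lies in the block containing the smallest non-zero diagonal entry $D_m$, giving $S = \lVert \mathcal{C} \rVert^2/(\lambda - D_m)^2$.

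The main obstacle I anticipate is the bookkeeping around positive semidefiniteness: correctly justifying that a zero diagonal entry eliminates the corresponding off-diagonal entry (so that $D_m$ need only be the smallest non-zero entry), and ensuring that for the constructed extremal matrices the dominant eigenvalue $\lambda$ really is the top of the spectrum rather than one of the decoupled diagonal entries $D_k$. This last point requires the correct relative ordering of the diagonal entries with respect to the $2\times 2$ block, and is the step where a careless construction could fail to be an admissible density matrix with the stated dominant eigenvector.
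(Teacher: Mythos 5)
Your proposal is correct and takes essentially the same approach as the paper's proof: it starts from the exact formula of Statement~\ref{stat_coherent}, uses monotonicity of $S \mapsto S/(1+S)$ together with the interlacing ordering $\lambda \geq D_2 \geq \dots \geq D_d$ to bound the denominators, invokes positive semidefiniteness to force $C_k = 0$ whenever $D_k = 0$ (exactly the paper's justification for taking $D_m$ as the smallest \emph{non-zero} entry), and checks saturation by evaluating the sum on the block-diagonal extremal matrices. The only substantive difference is that the paper establishes saturation of the upper bound via a degenerate family and a secular-function unitary-equivalence argument before reducing to $A_{max}$, whereas you verify $A_{max}$ and $A_{min}$ directly -- and the caveat you flag, that the dominant eigenvalue must come from the $2\times 2$ block, is genuine: it holds automatically for $A_{max}$ (since the top eigenvalue of $M_{max}$ exceeds $\max(F,D_2) \geq D_k$), but for $A_{min}$ it is an additional condition that the paper's own proof also leaves unstated.
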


\begin{proof}

\noindent  \textbf{Upper bound\\}	
Let us consider arrowhead matrices with arbitrary non-negative entries $F, C_k, D_k \geq 0$ with
$2 \leq k\leq d$, which contain the density matrices from Statement~\ref{stat_arrowhead}.
Recall from Statement~\ref{stat_coherent} that the coherent mismatch can be expressed as
		\begin{equation}
			c(\Xi) = 1- |\langle \psi_{id} | \psi \rangle|^2 = 1- [1 + \Xi ]^{-1},
		\end{equation}	
		where we have used the notation  $	\Xi:=	\sum_{k=2}^d \frac{C_k^2 }{(\lambda - D_k)^2 } $.
We can upper bound $\Xi$ by using the interlacing property with $\lambda \geq D_2 \geq D_3 \dots \geq D_d$ as
\begin{equation}
	\Xi \leq 	\sum_{k=2}^d \frac{C_k^2 }{(\lambda - D_2)^2 } =  \frac{  \lVert \mathcal{C} \rVert^2  }{(\lambda - D_2)^2 },
\end{equation}
where we have introduced the $d-1$-dimensional vector $\mathcal{C} := (C_2,C_3, \dots C_N)^T$.

The upper bound is saturated by arrowhead matrices of the form	
\begin{equation}
	A_{max} = \begin{pmatrix}
		F & \mathcal{C}^T &  & \dots & \\
		\mathcal{C} & D_2 \,  \mathrm{Id}_{\nu} &  &  &  \\
		 &  & 	D_{\nu+2} \\
		\vdots & & & \ddots &\vdots \\
		&  &  & \dots & D_{d}
	\end{pmatrix},
\end{equation}
where we used the notation $\mathcal{C}:= (1,1,1, \dots 1)\lVert \mathcal{C} \rVert/\sqrt{\nu}$ and $\nu$ is the
dimension of the identity matrix $\mathrm{Id}_{\nu}$.  It is straightforward to show that these matrices saturate the
upper bound just by computing the coherent mismatch as $\Xi = 	\sum_{k=2}^d \frac{C_k^2 }{(\lambda - D_2)^2 } $
which coincides with the upper bound above.
The above matrix has non-zero off-diagonal entries in the upper left corner in a $\nu+1$-dimensional subspace.
Here $\nu$ represents the degeneracy of the eigenvalues of the matrix $A_{max}$, and the upper bound
in Eq.~\eqref{general_upper_bound} is saturated by any such matrix with any $\nu \leq d-1$.
For example, setting $\nu=1$ assumes no degeneracy of the eigenvalues.
Let us now express this non-trivial subspace explicitly as
\begin{equation}
\begin{pmatrix}
		F & \mathcal{C}^T  \\
	\mathcal{C} & D_2 \,  \mathrm{Id}_{\nu}   \\
\end{pmatrix}
=
 \begin{pmatrix}
	F & q & q & \dots & q \\
	q & D_2  &  &  &  \\
	q &  & D_2 \\
	\vdots & & & \ddots &\vdots \\
	q&  &  & \dots & D_2
\end{pmatrix},
\end{equation}
where $q:=\lVert \mathcal{C} \rVert/\sqrt{\nu}$.
The last step is that we show that the above matrix
is unitarily equivalent to the matrix
\begin{equation}\label{unit_equiv_matrices}
	\begin{pmatrix}
		F & q & q & \dots & q \\
		q & D_2  &  &  &  \\
		q &  & D_2 \\
		\vdots & & & \ddots &\vdots \\
		q&  &  & \dots & D_2
	\end{pmatrix}
\simeq
	\begin{pmatrix}
	F & \sqrt{\nu}\,  q & 0 & \dots & 0 \\
	\sqrt{\nu} \,  q & D_2  &  &  &  \\
	0 &  & D_2 \\
	\vdots & & & \ddots &\vdots \\
	0 &  &  & \dots & D_2
\end{pmatrix},
\end{equation}
where only a $2$-dimensional sub-block has non-zero off-diagonal entries
and $\simeq$ represents that the two matrices are unitarily equivalent.
Note that we map density matrices to arrowhead matrices by applying a suitable 
unitary transformation. Let us consider the following example. Instead of mapping $\rho$ to
the matrix on the left-hand side above by applying $U_1$, we map $\rho$ to the matrix
on the right-hand side by applying $U_2 U_1$, where $U_2$ maps between the above two matrices
due to their unitary equivalence. We will prefer to map density matrices to the arrowhead matrices
on the right-hand side, albeit the two forms would be equivalent and result in the same coherent mismatch.

The most straightforward way to show the unitary equivalence of the above two matrices is
by recognising that both are arrowhead matrices and their eigenvalues are
roots of the same secular function from Eq.~\eqref{eq_eigenvalue}
\begin{equation} 
	P(x) = x - F +\frac{ \nu\,  q^2 }{(D_2 - x) }.
\end{equation}
As the two matrices share the same eigenvalues, there exists a
unitary transformation that transforms one into the other.
It follows therefore that the upper bound on the coherent mismatch from
Eq.~\eqref{general_upper_bound} is saturated by any density matrix that can be mapped to an
arrowhead matrix of the form
\begin{equation}
	A_{max} =	\begin{pmatrix}
		F & C_2 & 0 & \dots & 0 \\
		C_2 & D_2  &  &  &  \\
		0 &  & D_3 \\
		\vdots & & & \ddots &\vdots \\
		0 &  &  & \dots & D_d
	\end{pmatrix},
\end{equation}
with the diagonal entries satisfying the usual ordering $D_2 \geq D_3 \geq \dots D_d$
and it follows that $D_k$ with $k>2$ are eigenvalues of the arrowhead matrix.
The degenerate case is recovered via $D_2 = D_3 = \dots D_{\nu+1}$.\\

\noindent \textbf{Lower bound\\}
We consider arrowhead matrices with arbitrary non-negative entries $F, C_k, D_k \geq 0$ with
$2 \leq k\leq d$, which contain the density matrices from Statement~\ref{stat_arrowhead}.
The lower bound on $\Xi$ can be obtained as
\begin{equation*}
	  \frac{  \lVert \mathcal{C} \rVert^2  }{(\lambda - D_d)^2 } \leq \Xi,
\end{equation*}
where $D_d$ is the smallest diagonal entry. Arrowhead matrices that saturate the lower
bound can be constructed by following a very similar argument to the one presented above.
We find that any density matrix that can be mapped to an arrowhead matrix of the following form
saturates the lower bound
\begin{equation}
	\begin{pmatrix}
		F & C_d & 0 & \dots & 0 \\
		C_d & D_d  &  &  &  \\
		0 &  & D_2 \\
		\vdots & & & \ddots &\vdots \\
		0 &  &  & \dots & D_{d-1}
	\end{pmatrix}.
\end{equation}
One additional remark is that the non-trivial 2-dimensional sub-block 
needs to be positive-semidefinite as the matrix represents a density matrix. 
It therefore follows that only arrowhead matrices with $C_d \leq \sqrt{D_d F}$ 
can represent valid density matrices. We can exclude trivial cases
such as $D_d = 0$, where necessarily $C_d = 0$, and tighten the lower bound the following way. Density matrices
that are mapped to arrowhead matrices satisfy a lower bound on the coherent mismatch via
\begin{equation*}
	\frac{  \lVert \mathcal{C} \rVert^2  }{(\lambda - D_m)^2 } \leq \Xi,
\end{equation*}
where $D_m$ is the smallest non-zero eigenvalue. This lower bound is saturated by
density matrices that can be mapped to arrowhead matrices of the form
\begin{equation} 
	A_{min} =
	M_{min} \oplus  \mathrm{diag}(  \{ D_2, D_4 \dots D_d \} \setminus \{D_m\}) =
		\begin{pmatrix}
		F & C_m & 0 & \dots & 0 \\
		C_m & D_m  &  &  &  \\
		0 &  & D_2 \\
		\vdots & & & \ddots &\vdots \\
		0 &  &  & \dots & D_{d}
	\end{pmatrix}.
\end{equation}\\
\end{proof}

Let us now prove Theorem~\ref{theo_upbdelta}.

\begin{proof}	
	
\noindent \textbf{Explicit construction of extremal density matrices}\\
It was shown above that the upper bound of the coherent mismatch is saturated by density matrices that can be mapped to arrowhead matrices of
the from of Eq.~\eqref{extremal_state_matrix}. We now aim to explicitly construct
density matrices (positive semi-definite and unit trace) that map to the extremal arrowhead matrices in Eq.~\eqref{extremal_state_matrix}
and thereby maximise the coherent mismatch. Let us now derive the explicit form of these states in
terms of the decomposition in Eq.~\eqref{rho_decomp_sum} as a weighted sum of the ideal state and an error state
as $\eta \rho_{id} + (1-\eta)  \rho_{err}$.

Let us denote the non-trivial $2$-dimensional block of the density matrix in Eq.~\eqref{extremal_state_matrix} as $M$,
which in the arrowhead representation $\tilde{M} := U M U^\dagger$ yields 
\begin{equation*}
	\tilde{M}:= M_{max} = 		\begin{pmatrix}
		F & C_2  \\
		C_2 & D_2 \\
	\end{pmatrix}.
\end{equation*}
 Let us also assume that $M$ is rank-$2$ (i.e., if it is rank-$1$ than it
represents purely a coherent error and we trivially find that $c=1-F$) which guarantees that
the decomposition in Eq.~\eqref{rho_decomp_sum} exists.
In this case we can uniquely find an optimal $\eta$ from Sec.~\ref{def_state_appendix} for which the difference matrix
$M - \eta | \psi_{id} \rangle \langle \psi_{id} |$
is rank-$1$.  We can thus obtain the following expression for $M$ as a sum of two rank-one
matrices as
\begin{equation} \label{mdecomp}
	M = \delta_1 \rho_{id} +  \delta_2 | \chi \rangle \langle \chi | \quad \text{with}  \quad
	\delta_1:= \eta \quad \text{and} \quad \delta_2 := (1-\eta) \mu_1,
\end{equation}
where $\eta$ is the weight in $\eta \rho_{id} + (1-\eta)  \rho_{err}$
and $\mu_1$ is the largest eigenvalue of the error density matrix  as defined in Definition~\ref{def_state}.
Here the pure state $|\chi \rangle$ can generally be expressed as a linear combination of the first two basis vectors
(and recall that $|\tilde{\psi}_{id}\rangle = (1,0,0, \dots 0 )^T$ and $| \tilde{\phi}_2 \rangle = (0,1,0, \dots 0 )^T$) as
\begin{equation*}
	| \chi \rangle = \sqrt{\alpha}| \psi_{id} \rangle +  \sqrt{1-\alpha}| \phi_2 \rangle
\end{equation*}
for some $\alpha \geq 0$.

We finally obtain the error density matrices that saturate the upper bound of the coherent
mismatch  as
\begin{equation*}
	\rho_{err} = \mu_1 |\chi \rangle \langle \chi | +   R \quad \quad \text{with} \quad \quad R 
	 =
	 \sum_{k=3}^d D_k |\phi_k \rangle \langle \phi_k |,
\end{equation*}
where $R$ is orthogonal to $\langle \psi_{id} |R|\psi_{id} \rangle =  \langle \chi  |R| \chi  \rangle = 0$
as well as we can arbitrarily choose the probability distribution $\{ D_k:  3\leq k \leq d \}$
and we can choose the dominant eigenvalue of $\rho_{err}$ arbitrarily in the range $1/d \leq \mu_1 \leq 1$
as long as $\tr \rho_{err} = \sum_{k=3}^d D_k + \mu_1 = 1 $. As such, any density matrix of the form
$\eta \rho_{id} + (1-\eta)  \rho_{err}$ saturates the corresponding upper bound on the coherent mismatch.\\

\noindent \textbf{Obtaining the upper bound}\\
The upper bound in Eq.~\eqref{general_upper_bound} depends on parameters that
one can only obtain from the arrowhead representation of a quantum state, i.e.,  $D_2$ and $ \lVert \mathcal{C} \rVert^2$.
Let us now derive an alternative upper bound on the coherent mismatch that depends on parameters
of the decomposition $\eta \rho_{id} + (1-\eta)  \rho_{err}$.
We compute this upper bound by exactly computing the coherent mismatch for the extremal density matrices
in Eq.~\eqref{mdecomp}.
 By representing $M$
in the arrowhead basis, we obtain the $2\times2$ block from Eq.~\eqref{mdecomp} as
\begin{equation*}
	\tilde{M} := U M U^\dagger = 
	\left(
	\begin{array}{cc}
		\alpha  \delta_2 + \delta_1 & \delta_2 \sqrt{\alpha (1- \alpha)  } \\
		\delta_2 \sqrt{\alpha (1 - \alpha )  } & \delta_2 (1-\alpha ) \\
	\end{array}
	\right).
\end{equation*}

We can now compute the coherent mismatch $c$ analytically as a function of $\delta_1, \delta_2$
and $\alpha$, and maximise $c$ with respect to $\alpha$. For this reason, we first express the
coherent mismatch analytically as (by computing the first component of the eigenvector as)
\begin{equation*}
	c = -\frac{2 (\alpha -1) \alpha  \delta _2^2}{\delta _1 \left(\sqrt{2 (1-2 \alpha ) \delta _2 \delta _1+\delta _1^2+\delta _2^2}+(2-4 \alpha ) \delta _2\right)+\delta _2 \left((1-2 \alpha ) \sqrt{2 (1-2 \alpha ) \delta _2 \delta _1+\delta _1^2+\delta _2^2}+\delta _2\right)+\delta _1^2}
\end{equation*}

We can find the first order optimality condition by differentiating $c$ with respect to $\alpha$ as
\begin{equation*}
	0 = \frac{\partial c}{\partial \alpha} = \frac{\delta _2^2 \left((1-2 \alpha ) \delta _1+\delta _2\right)}{\left(2 (1-2 \alpha ) \delta _2 \delta _1+\delta _1^2+\delta _2^2\right){}^{3/2}}.
\end{equation*}
We can uniquely solve this equation in terms of $\delta := \delta_2/\delta_1$ from Definition~\ref{def_state} as
\begin{equation*}
	\alpha = \frac{1 + \delta}{2}
\end{equation*}
We finally obtain the coherent mismatch for the above worst-case scenario density matrices as
\begin{equation*}
	c = (1-\sqrt{1-\delta^2} )/2.
\end{equation*}
The above expression is a general upper bound for the coherent mismatch which is saturated by any density matrix that can be mapped
to an arrowhead matrix of the form assumed above.
\\

\noindent \textbf{Remark:}\\
Let us remark that one could use the upper bound that explicitly depends on $\alpha$
\begin{align*}
	c &= \frac{2 (1 - \alpha) \alpha  \delta^2}{
		\left(\sqrt{2 (1-2 \alpha ) \delta _2 \delta _1+\delta _1^2+\delta _2^2}+(2-4 \alpha ) \delta _2\right)/\delta_1
		+
		\delta \left((1-2 \alpha ) \sqrt{2 (1-2 \alpha ) \delta _2 \delta _1+\delta _1^2+\delta _2^2}+\delta _2\right)/\delta_1
		+1}\\
	&\leq 2 (1 - \alpha) \alpha  \delta^2.
\end{align*}
But here we aimed to derive an upper bound that is independent of $\alpha$.

\end{proof}

\section{Number of copies for error suppression \label{proof_copies}} 
The suppression factor $Q$ was introduced in ref~\cite{koczor2020exponential}
which we adapt to the notations used in this work as  $Q:= \lambda_2/\lambda$ (the present work uses a different
convention to denote the eigenvalues of $\rho$ in Eq.~\ref{rho_decomp} when compared to ref.~\cite{koczor2020exponential}).
For small values of $c$ (quadratically scaling region in Fig.~\ref{plot_eigvals}) we can approximate 
$Q = \lambda_2/\lambda  \approx (1-\eta) \mu_1/\eta=  \delta$ and the coherent mismatch
is then bounded as $c \lessapprox Q^2/4$.
It was shown in ref.~\cite{koczor2020exponential} that the number of copies scales logarithmically
with this suppression factor as  $n = (\ln \mathcal{E} + \ln [ \mu_1/2])/\ln Q $, where $\mathcal{E}$ is the target precision
and in the relevant region we approximate $p_{max} \approx \mu_1$ via the dominant eigenvalue of the noise state $\rho_{err}$.
Let us use that $c \lessapprox Q^2/4$ and obtain the approximate bound assuming a target precision $\mathcal{E} \rightarrow 2 \sqrt{c}$ as
$$n \approx \frac{\ln [2 \sqrt{c}] + \ln [ \mu_1/2]}{\ln Q }
= \frac{\ln Q + \ln [ \mu_1/2]}{\ln Q }  
=1 + \frac{ \ln [ \mu_1/2] }{\ln Q }   
\approx 1 + \frac{ \ln [ \mu_1/2] }{\ln (\eta^{-1} -1)\mu_1 }   
$$
and we have used that the suppression factor from ref.~\cite{koczor2020exponential} can be approximated as $Q \approx (\eta^{-1} -1)\mu_1$.
The second term is necessarily positive since $Q < 1$ and applying the ceiling function we obtain that
even in the worst-case scenario one needs at least 2 copies to reach the target precision as
$n \geq 2$. In case if the quantum states are considerably noisy via $(\eta^{-1} -1) > 1/2$ (or equivalently $\eta < 2/3$),
we find that one needs at least 3 copies to reach the target precision. This condition corresponds to a circuit
	error rate $\xi > 0.41$ which is reasonable to assume in practice.

Similarly, in the special case when we set the precision to $2c$, as relevant for eigenstates as discussed in Sec.~\ref{precision_motivation},
we need at least 3 copies as $n \geq 3$.
Furthermore, when $(\eta^{-1} -1) > 1/4$ or equivalently when $\eta < 4/5$ then we need at least $4$ copies
-- and it is reasonable to assume that for practically relevant applications $\lambda \approx \eta < 4/5$ which
	corresponds to a circuit error rate $\xi > 0.22$.

\section{Proof of Statement~\ref{stat_commutator} \label{proof_commutator}}

\begin{lemma}
	The the matrix $C$ in the arrowhead decomposition from Statement~\ref{stat_arrowhead}
	satisfies the following eigenvalue equations
	\begin{equation*}
		C  | u_\pm \rangle = \pm \sigma \, | u_\pm \rangle.
	\end{equation*}
	All vectors $ |v \rangle $ orthogonal to the eigenvectors $ | u_\pm \rangle$ satisfy $ C |v \rangle = 0$.
	Similarly, the commutator $[\rho_{id}, \rho]$ 
	satisfies the eigenvalue equation
	\begin{equation*}
		[\rho_{id}, \rho] | v_\pm \rangle = \pm i \sigma \, | v_\pm \rangle.
	\end{equation*}
	Since the singular values of the two operators are identical, all $p$ norms are equivalent as
	\begin{equation}
	 \lVert [\rho_{id}, \rho] \rVert_p = 	\lVert C \rVert_{p}  = 2^{1/p} \sigma.
	\end{equation}
	The eigenvectors are the following linear combinations
	\begin{equation*}
	| u_\pm \rangle = 	(| \psi_{id} \rangle \pm \frac{1}{\lVert \phi\rVert } | \phi \rangle )/\sqrt{2}
		\quad \quad \text{and} \quad \quad
	| v_\pm \rangle = 	(| \psi_{id} \rangle \pm \frac{i}{\lVert \phi\rVert } | \phi \rangle )/\sqrt{2},
	\end{equation*}
	where the vector $| \phi \rangle$ can be defined implicitly via the commutator as
	$ [\rho_{id}, \rho]	= |\psi_{id} \rangle \langle \phi | - |\phi\rangle \langle \psi_{id} | $,
	and the singular value is $\sigma := \lVert \phi\rVert$.
\end{lemma}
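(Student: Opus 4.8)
The plan is to reduce everything to the two rank-one dyads that generate $C$ and $[\rho_{id},\rho]$, exploiting that each operator acts nontrivially only on a common two-dimensional subspace spanned by the ideal state and one error direction.

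First I would rewrite the arrow block $C$ from Statement~\ref{stat_arrowhead} in dyadic form. Introducing the vector $|\tilde\phi\rangle := (0,C_2,C_3,\dots,C_d)^T$ in the arrowhead basis, which is orthogonal to $|\tilde\psi_{id}\rangle = (1,0,\dots,0)^T$, one sees immediately that $C = |\tilde\psi_{id}\rangle\langle\tilde\phi| + |\tilde\phi\rangle\langle\tilde\psi_{id}|$. Setting $\sigma := \lVert\tilde\phi\rVert = \sqrt{\sum_{k=2}^d C_k^2}$, the span of $\{|\tilde\psi_{id}\rangle,|\tilde\phi\rangle\}$ is invariant under $C$, and in the orthonormal pair $\{|\tilde\psi_{id}\rangle,|\tilde\phi\rangle/\sigma\}$ the operator is represented by the $2\times2$ matrix with vanishing diagonal and off-diagonal entries equal to $\sigma$; hence its eigenvalues are $\pm\sigma$ with eigenvectors $|u_\pm\rangle = (|\tilde\psi_{id}\rangle \pm |\tilde\phi\rangle/\sigma)/\sqrt2$, while every vector orthogonal to this plane lies in the kernel. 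A one-line substitution verifies $C|u_\pm\rangle = \pm\sigma|u_\pm\rangle$, and undoing the unitary $U$ recovers the stated form with $|\psi_{id}\rangle$ and $|\phi\rangle$.

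Next I would obtain the analogous dyadic form for the commutator directly in the original basis. Writing $|\chi\rangle := \rho|\psi_{id}\rangle$ gives $\rho_{id}\rho = |\psi_{id}\rangle\langle\chi|$ and $\rho\rho_{id} = |\chi\rangle\langle\psi_{id}|$, so $[\rho_{id},\rho] = |\psi_{id}\rangle\langle\chi| - |\chi\rangle\langle\psi_{id}|$. Splitting $|\chi\rangle = F|\psi_{id}\rangle + |\phi\rangle$ into its components along and orthogonal to $|\psi_{id}\rangle$, the diagonal terms cancel and one is left with $[\rho_{id},\rho] = |\psi_{id}\rangle\langle\phi| - |\phi\rangle\langle\psi_{id}|$ where $\langle\psi_{id}|\phi\rangle = 0$. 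The same two-dimensional reduction, now for the antisymmetric combination, yields eigenvalues $\pm i\lVert\phi\rVert$ with eigenvectors $|v_\pm\rangle = (|\psi_{id}\rangle \pm i|\phi\rangle/\lVert\phi\rVert)/\sqrt2$ and a kernel on the orthogonal complement.

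The crux of the argument, and the step I expect to require the most care, is showing that the two scale parameters coincide so that the single symbol $\sigma$ is justified. For this I would conjugate the commutator by the arrowhead unitary $U$: since $U\rho_{id}U^\dagger = |\tilde\psi_{id}\rangle\langle\tilde\psi_{id}|$ commutes with both the diagonal part $D$ and the term $F|\tilde\psi_{id}\rangle\langle\tilde\psi_{id}|$ of $\tilde\rho = F|\tilde\psi_{id}\rangle\langle\tilde\psi_{id}| + D + C$, only the arrow block survives, giving $U[\rho_{id},\rho]U^\dagger = [\,|\tilde\psi_{id}\rangle\langle\tilde\psi_{id}|,\,C\,] = |\tilde\psi_{id}\rangle\langle\tilde\phi| - |\tilde\phi\rangle\langle\tilde\psi_{id}|$. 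Matching this against the dyadic form of the commutator identifies $U|\phi\rangle$ with $|\tilde\phi\rangle$, so $\lVert\phi\rVert = \lVert\tilde\phi\rVert = \sigma$. With both operators now carrying the identical singular-value multiset $\{\sigma,\sigma,0,\dots,0\}$, the Schatten norm equality follows directly from $\lVert\cdot\rVert_p = (\sum_i s_i^p)^{1/p} = (2\sigma^p)^{1/p} = 2^{1/p}\sigma$ for all $p$. Finally I would expand $\sigma^2 = \langle\phi|\phi\rangle$ using $|\phi\rangle = \rho|\psi_{id}\rangle - F|\psi_{id}\rangle$, whereupon the cross terms collapse to $\sigma^2 = \langle\psi_{id}|\rho^2|\psi_{id}\rangle - F^2$, which is exactly the generalised variance $\var[\rho]$.
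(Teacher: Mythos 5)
Your proof is correct, and it reaches the result by a more self-contained route than the paper's. Both arguments rest on the same structural fact---$C$ and $[\rho_{id},\rho]$ are the symmetric and antisymmetric rank-two combinations of the dyads $|\psi_{id}\rangle\langle\phi|$ and $|\phi\rangle\langle\psi_{id}|$, hence share the singular values $\{\sigma,\sigma,0,\dots,0\}$---but the execution differs in three places. First, the paper obtains the eigenvalues of $C$ from the arrowhead secular function (Eq.~\eqref{eq_eigenvalue} with $F=D_k=0$) and imports the eigenvector formulas from the arrowhead-matrix literature \cite{o1990computing,gu1995divide} before verifying them by substitution; your two-dimensional invariant-subspace reduction needs no external results and delivers the kernel statement in the same stroke. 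Second, and more substantively, the paper passes from $C$ to the commutator with the word ``similarly'', implicitly taking for granted that the \emph{same} vector $|\phi\rangle$ generates both operators; your conjugation identity $U[\rho_{id},\rho]U^\dagger=[\,|\tilde\psi_{id}\rangle\langle\tilde\psi_{id}|,\,C\,]$ (valid because the diagonal $D$ and the $F$-term both commute with the projector onto $|\tilde\psi_{id}\rangle$), together with the matching step $U|\phi\rangle=|\tilde\phi\rangle$, proves this identification explicitly---you are right to call it the crux, since it is exactly what licenses using a single $\sigma$ for both operators and hence the Schatten-norm equality. Third, for the closing variance formula (which strictly belongs to Statement~\ref{stat_commutator} and its companion lemma rather than to the lemma as quoted, but is welcome) the paper expands the trace $\tr\{[\rho_{id},\rho][\rho,\rho_{id}]\}$ using $\rho_{id}\,\rho\,\rho_{id}=F\rho_{id}$, whereas your direct expansion of $\lVert(\rho-F)|\psi_{id}\rangle\rVert^2$ is shorter and yields the same $\sigma^2=\langle\psi_{id}|\rho^2|\psi_{id}\rangle-F^2$. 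The only case neither treatment addresses is the degenerate one $\sigma=0$, where the eigenvector formulas are undefined but the lemma is vacuous; this is not a gap worth repairing, merely worth a sentence.
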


\begin{proof}

\noindent  \textbf{Eigenvalues and norm of $C$:} Let us prove that the matrix $C$ has only two non-zero eigenvalues
$\pm \sigma$. Note that $C$ is a special arrowhead matrix with $D_k = 0$ for all $k$.
Using the expression from Eq.~\eqref{eq_eigenvalue} we can compute the eigenvalues analytically as
\begin{equation}
	P(\sigma) = \sigma - F + \sum_{k=2}^d \frac{C_k^2 }{(D_k - \sigma) } = 0.
\end{equation}

	Here we can use that diagonal entries are $0 = D_k = F 	$. It follows that
	\begin{equation*}
		\sigma - \frac{1}{\sigma} \sum_{k=2} C_k^2 =0.
	\end{equation*}
	Recall that the $p=2$ Hilbert-Schmidt norm can be computed via the sum of
	squares of matrix entries as $\sum_{k=2} C_k^2 =  \lVert C \rVert_{HS}^2 /2$ and
	we thus obtain the following expression for the eigenvalues
	\begin{equation*}
		\lVert C \rVert_{HS}^2 /2 = \sigma^2.
	\end{equation*}
	Indeed, there exist two non-zero solutions $\sigma = \pm \lVert C \rVert_{HS} /\sqrt{2}$.
	Since there are only two non-zero eigenvalues, we can compute the infinity norm as $\lVert C \rVert_{\infty} = \lVert C \rVert_{HS} /\sqrt{2}$.
	In fact, we can compute any $p$-norm of the matrix $C$ as 
	\begin{equation*}
		\lVert C \rVert_{p} = 2^{1/p} \sigma  
	\end{equation*}

\noindent \textbf{Eigenvectors of $C$:} 
	Let us introduce the  vector $|\phi\rangle$ which can be defined via the decomposition
	of the $C$ matrix as  the first row and column vectors of $C$ as
	\begin{equation*}
		C =: |\phi\rangle \langle \psi_{id} | + |\psi_{id} \rangle \langle \phi |.
	\end{equation*}
	Using results of \cite{gu1995divide} we can analytically compute the
	eigenvectors of $C$ using the eigenvalues $\sigma   = \pm \lVert C \rVert_{HS} /\sqrt{2}$
	from statement 1 as
	\begin{equation*}
		| u_\pm \rangle = (\frac{1}{\sqrt{2}}, \pm \frac{C_1}{\lVert C \rVert_{HS}},\pm \frac{C_2}{\lVert C \rVert_{HS}}, \cdots \pm\frac{C_d}{\lVert C \rVert_{HS}})^T
		= (| \psi_{id} \rangle \pm \frac{1}{\lVert \phi\rVert } | \phi \rangle )/\sqrt{2} ,
	\end{equation*}
	where $\lVert \phi\rVert = \sigma   = \lVert C \rVert_{HS} /\sqrt{2}$.
	Indeed we can confirm that the eigenvalue equation is satisfied as 
	\begin{equation*}
		C | u_\pm \rangle = ( |\phi\rangle \langle \psi_{id} | + |\psi_{id} \rangle \langle \phi | )
		(| \psi_{id} \rangle \pm \frac{1}{\lVert \phi\rVert } | \phi \rangle )/\sqrt{2}
		= 
		\pm \lVert \phi\rVert \, | u_\pm \rangle.
	\end{equation*}
\noindent \textbf{Eigenvalues and norm of the commutator:} 	
	We can similarly write the commutator as
	\begin{equation*}
		 [\rho_{id}, \rho]	= |\psi_{id} \rangle \langle \phi | - |\phi\rangle \langle \psi_{id} |,
	\end{equation*}
	and its eigenvectors are
	\begin{equation*}
		| v_\pm \rangle 
		= (| \psi_{id} \rangle \pm  \frac{i}{\lVert \phi\rVert } | \phi \rangle )/\sqrt{2} ,
	\end{equation*}
	and indeed its eigenvalues are $\pm i \lVert C \rVert_{HS} /\sqrt{2}=\pm i \lVert \phi\rVert = \pm i \sigma$ via the eigenvalue equation
	\begin{equation*}
		[\rho_{id}, \rho] | u_\pm \rangle = ( |\psi_{id} \rangle \langle \phi | -  |\phi\rangle \langle \psi_{id} |)
		(| \psi_{id} \rangle \pm \frac{i}{\lVert \phi\rVert } | \phi \rangle )/\sqrt{2}
		= 
		\pm i \lVert \phi\rVert \, | u_\pm \rangle.
	\end{equation*}
	\textbf{Equivalence of norms:} We have shown in the previous statement that $C$ and the commutator 
	share the same singular values. It immediately follows with denoting the singular value
	$\sigma = \lVert C \rVert_{HS}/\sqrt{2}$ that the norms are equivalent as
	\begin{equation*}
		\lVert C \rVert_{p} = 2^{1/p} \sigma = \lVert [\rho_{id}, \rho] \rVert_p.
	\end{equation*}
\end{proof}

\begin{lemma}
	The Hilbert-Schmidt norm of the commutator can be computed exactly as
	\begin{equation*}
		\frac{1}{2}\lVert  [\rho_{id}, \rho] \rVert_{HS}^2  =:	\sigma^2 =
		\var[\rho] =   \langle \rho^2 \rangle -  \langle \rho \rangle ^2 = \langle  \psi_{id} | \rho^2 | \psi_{id} \rangle  - F^2
	\end{equation*}
\end{lemma}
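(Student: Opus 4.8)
The plan is to compute the Hilbert--Schmidt norm of the commutator directly from its definition and to collapse the resulting trace using the fact that $\rho_{id}$ is a rank-$1$ projector. First I would note that the commutator of two Hermitian operators is anti-Hermitian, so that $[\rho_{id},\rho]^\dagger = -[\rho_{id},\rho]$ and hence
\begin{equation*}
	\lVert [\rho_{id},\rho] \rVert_{HS}^2 = \tr\big( [\rho_{id},\rho]^\dagger [\rho_{id},\rho] \big) = - \tr\big( [\rho_{id},\rho]^2 \big).
\end{equation*}
Expanding $[\rho_{id},\rho]^2 = (\rho_{id}\rho - \rho\rho_{id})^2$ produces four terms; using cyclicity of the trace together with the idempotency $\rho_{id}^2 = \rho_{id}$, the two cross terms each reduce to $-\tr(\rho_{id}\rho^2)$ while the two remaining terms each reduce to $\tr(\rho_{id}\rho\rho_{id}\rho)$, so that
\begin{equation*}
	\tr\big( [\rho_{id},\rho]^2 \big) = 2\,\tr(\rho_{id}\rho\rho_{id}\rho) - 2\,\tr(\rho_{id}\rho^2).
\end{equation*}

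The second key step is to evaluate these two traces using $\rho_{id} = |\psi_{id}\rangle\langle\psi_{id}|$. The term $\tr(\rho_{id}\rho^2) = \langle \psi_{id} | \rho^2 | \psi_{id}\rangle$ is immediate, while in the other trace two factors of the scalar $\langle\psi_{id}|\rho|\psi_{id}\rangle = F$ can be extracted, collapsing it to $\tr(\rho_{id}\rho\rho_{id}\rho) = F^2$. Substituting back gives $-\tr([\rho_{id},\rho]^2) = 2(\langle\psi_{id}|\rho^2|\psi_{id}\rangle - F^2)$, and dividing by two yields exactly $\var[\rho] = \langle\rho^2\rangle - \langle\rho\rangle^2$ with expectation values taken in the state $|\psi_{id}\rangle$, which is the claim.

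Since the previous lemma already establishes $\tfrac{1}{2}\lVert[\rho_{id},\rho]\rVert_{HS}^2 = \sigma^2$ through $\sigma = \lVert\phi\rVert$, an equivalent route would be to identify the implicitly defined vector as $|\phi\rangle = (\mathrm{Id}-\rho_{id})\rho|\psi_{id}\rangle$, i.e.\ the component of $\rho|\psi_{id}\rangle$ orthogonal to $|\psi_{id}\rangle$, and then to compute $\lVert\phi\rVert^2 = \langle\psi_{id}|\rho(\mathrm{Id}-\rho_{id})\rho|\psi_{id}\rangle = \langle\psi_{id}|\rho^2|\psi_{id}\rangle - F^2$. The computation is elementary in either form; the only real subtlety is organising the four cross terms carefully via cyclicity and $\rho_{id}^2 = \rho_{id}$, and recognising that the rank-$1$ structure of $\rho_{id}$ is exactly what collapses the aligned trace to the clean value $F^2$. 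I do not anticipate a genuine obstacle beyond this routine algebra.
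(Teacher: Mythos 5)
Your proof is correct and follows essentially the same route as the paper: both expand $\lVert[\rho_{id},\rho]\rVert_{HS}^2$ as a trace of four terms (the paper writes it as $\tr\{[\rho_{id},\rho][\rho,\rho_{id}]\}$, which equals your $-\tr([\rho_{id},\rho]^2)$ by anti-Hermiticity), then collapse them using cyclicity and the rank-$1$ projector identity $\rho_{id}\,\rho\,\rho_{id}=F\rho_{id}$ to obtain $\sigma^2=\langle\psi_{id}|\rho^2|\psi_{id}\rangle-F^2$. Your alternative identification $|\phi\rangle=(\mathrm{Id}-\rho_{id})\rho|\psi_{id}\rangle$ is also consistent with the paper's implicit definition of $|\phi\rangle$, so either variant is fine.
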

\begin{proof}
The Hilbert-Schmidt norm is computed via the trace
\begin{align*}
	\lVert  [\rho_{id}, \rho] \rVert_{HS}^2  
	=& \tr\{ [\rho_{id}, \rho] [\rho, \rho_{id}]\} 
	= \tr\{ (\rho_{id} \, \rho - \rho \,  \rho_{id})  ( \rho \,  \rho_{id} - \rho_{id} \, \rho)  \} \\
	=& -\tr[ \rho_{id} \, \rho \,  \rho_{id} \, \rho   ]
	-\tr[\rho \,  \rho_{id} \,  \rho \,  \rho_{id}   ]
	+ \tr[\rho_{id} \, \rho \,  \rho \,  \rho_{id}   ]
	+ \tr[\rho \,  \rho_{id}  \, \rho_{id} \, \rho   ],
\end{align*}
here we can simplify the expressions using that $\rho_{id} \,  \rho \,  \rho_{id} = F \rho_{id}$
and we can also use the cyclic reordering property of the trace 
so we obtain 
\begin{equation*}
	\sigma^2 = 
	\lVert  [\rho_{id}, \rho] \rVert_{HS} ^2 /2
	=
	\tr[\rho_{id} \, \rho \,  \rho \,  \rho_{id}   ]
	- F \tr[\rho \,  \rho_{id}   ]
	= \langle \psi_{id} | \rho^2 | \psi_{id} \rangle - \langle \psi_{id} | \rho | \psi_{id} \rangle^2,
\end{equation*}
where $\sigma$ is the common, only non-zero singular value with $C$
and we have also used that $F=\langle \psi_{id} | \rho | \psi_{id} \rangle$.

This is indeed a quantum-mechanical variance and using elementary statistics
\begin{equation*}
	\sigma^2 = \var[\rho] = \langle \rho^2 \rangle -  \langle \rho \rangle ^2 =  \langle (\rho - F)^2 \rangle ,
\end{equation*}
where $\langle X \rangle := \langle \psi_{id} | X | \psi_{id} \rangle$.

Our final result is that we can compute the singular value via the above variance
\begin{equation*}
	\sigma^2 = \var[\rho]=   \langle \rho^2 \rangle -  \langle \rho \rangle ^2
	= \langle  \psi_{id} | (\rho - F)^2 | \psi_{id} \rangle 
	= \langle  \psi_{id} | (\rho - F \rho_{id})^2 | \psi_{id} \rangle 
\end{equation*}

The norm $\lVert  [\rho_{id}, \rho] \rVert_{HS}^2  $ of the commutator is given by the variance
\begin{equation*}
	\sigma^2 = \var[\rho]=   \langle \rho^2 \rangle -  \langle \rho \rangle ^2 = \langle  \psi_{id} | \rho^2 | \psi_{id} \rangle  - F^2
\end{equation*}

\end{proof}

\section{Proof of Theorem~\ref{theo_commut_upb}: upper bound in terms of the commutator \label{proof_upb_commutator}}
Recall that density matrices that can be mapped to arrowhead matrices of the form of Eq.~\eqref{extremal_state_matrix}
	as
\begin{equation}
	A_{max} :=
	\begin{pmatrix}
		F & C_2 & 0 & \dots & 0 \\
		C_2 & D_2  &  &  &  \\
		0 &  & D_3 \\
		\vdots & & & \ddots &\vdots \\
		0 &  &  & \dots & D_d
	\end{pmatrix}
\end{equation}
saturate the upper bound of the coherent mismatch in Eq.~\eqref{general_upper_bound}.
Here the 2-dimensional sub-block $M:= M_{max} = 		\begin{pmatrix}
	F & C_2  \\
	C_2 & D_2 \\
\end{pmatrix}$ is a $2$-dimensional arrowhead matrix and we can write
the corresponding coherent mismatch using Statement~\ref{stat_coherent} as
\begin{equation}
	c(\Xi) = 1- |\langle \psi_{id} | \psi \rangle|^2 = 1- [1 + \Xi ]^{-1},
\end{equation}	
where the term $\Xi$ yields the simplified expression
	\begin{equation}
		\Xi =   \frac{C_2^2 }{(D_2 - \lambda)^2 } =  \frac{\sigma^2 }{(D_2 - \lambda)^2 },
	\end{equation}
	where we have used that $\sigma \equiv C_2$.
	Let us analytically express $D_2$ in terms of the eigenvalues $\lambda$ and $\lambda'$
	of the $2$-dimensional matrix, and in terms of $C_2$.
	We can analytically solve these eigenvalues as
	\begin{align*}
		\lambda &=   \frac{1}{2} [D_2 + F +\sqrt{4 C_2^2  +  (D_2 - F)^2}  ] \\
		\lambda'&= \frac{1}{2} [D_2 + F -\sqrt{4 C_2^2  +  (D_2 - F)^2}  ]  
	\end{align*}
	and express $F$ and $D_2$ in terms of the eigenvalues $\lambda$ and $\lambda'$ as
	\begin{align*}
		D_2 = & \frac{ C_2^2 }{ F-\lambda } + \lambda\\
		F  =  & \frac{C_2^2 + D_2 \lambda' - \lambda'^2 }{D_2  -  \lambda'}
	\end{align*}
	We can now express $\Xi$ either in terms of $(\lambda,F)$ or in terms of $(\lambda,\lambda')$ and
	we now substitute $C_2 \equiv \sigma$ as
	\begin{align}
		\Xi(\lambda,F) &= \frac{ (F-\lambda )^2 }{\sigma^2},  \\
		\Xi(\lambda,\lambda') &= \frac{2 \sigma^2}{  ( \lambda -\lambda' )
			\sqrt{ ( \lambda -\lambda' )^2 - 4 \sigma^2}
			+ ( \lambda -\lambda' )^2  - 2\sigma^2}.
	\end{align}
	From the first equation we can see that $\Xi$ ultimately depends on the ratio
	of the gap $\lambda-F$ and the commutator $\sigma$.
	Similarly, the second equation only depends on $\sigma$ and on the gap between the two
	eigenvalues $\lambda-\lambda'$. Let us remark that $\lambda'$ is the second
	largest eigenvalue of the density matrix as $\lambda_2 \equiv \lambda'$ in case if
	the diagonal entries of the extremal arrowhead matrix
	from Eq.~\eqref{extremal_state_matrix} are such that $\lambda' \geq D_3$.
	This condition is generally satisfied when $D_2\geq D_3 + C_2^2/(F-D_3)$.
	Nevertheless, without loss of generality we can assume in the following that
	$\lambda_2 \equiv \lambda'$, in which case the resulting upper bound will only
	be saturated by arrowhead matrices $A_{max}$ which satisfy $D_2\geq D_3 + C_2^2/(F-D_3)$.

	We can simplify our expression for $\Xi(\lambda,\lambda_2)$
	by introducing the factor $\Delta := \sigma/( \lambda -\lambda _2 )$.
	This allows us to directly express $\Xi$ in terms of $\Delta$ via the above equation as
	\begin{equation}
		\Xi(\Delta) = \frac{2 \Delta^2}{ 1 - 2 \Delta^2 + \sqrt{1-4 \Delta^2}  }.
	\end{equation}
	It is immediately clear that the term in $\sqrt{1-4 \Delta^2}$ is non-negative when $1-4 \Delta^2 \geq 0$ and
	thus our bound holds when $1/2 \geq \Delta$.
	Let us finally write $\Delta$ in terms of $\lambda$ and $Q:=\lambda_2/\lambda$ and in terms
	of $\sigma_r:= \sigma/\lambda$ as
	\begin{equation*}
		\Delta = \sigma/(\lambda - \lambda_2) = \sigma/(\lambda - Q\lambda) = \frac{\sigma}{ \lambda(1- Q) } =  \frac{\sigma_r}{1- Q }
	\end{equation*}

	We can finally simplify the expression for $c$ and find the surprisingly simplified formula as
	\begin{equation*}
		c= 1-[1+\Xi(\Delta)]^{-1} = (1- \sqrt{1- 4\Delta^2})/2.
	\end{equation*}
	We can expand this for small $\Delta$ and find that indeed the coherent mismatch scales quadratically
	with the commutator norm as
	\begin{equation*}
		c=  (1- \sqrt{1- 4\Delta^2})/2 = \Delta^2	+ \Delta^4 +\mathcal{O}(\Delta^6).
	\end{equation*}
The above equations express the coherent mismatch for the extremal states and thus guarantee a general upper bound for $c$.
This bound is saturated by density matrices that can be mapped to arrowhead matrices of the form of
Eq.~\eqref{extremal_state_matrix} with the additional constraint $D_2\geq D_3 + C_2^2/(F-D_3)$
that ensures that the smaller eigenvalue $\lambda'$ of the two-dimensional matrix block
$M_{max}$ is the second largest eigenvalue of the arrowhead matrix.

\section{Proof of Lemma~\ref{theo_commut_lowb}: lower bound in terms of the commutator \label{proof_lowb_commutator}}
\begin{proof}
Recall that density matrices that can be mapped to arrowhead matrices of the form of Eq.~\eqref{extremal_state_matrix_lower}
	as
	\begin{equation}
		A_{min} :=
		\begin{pmatrix}
			F & C_m & 0 & \dots & 0 \\
			C_m & D_m  &  &  &  \\
			0 &  & D_2 \\
			\vdots & & & \ddots &\vdots \\
			0 &  &  & \dots & D_d
		\end{pmatrix}
	\end{equation}
	saturate the lower bound of the coherent mismatch in Eq.~\eqref{general_upper_bound}.
	Here $D_m$ is the smallest non-zero diagonal entry in the arrowhead matrix.
	Here the 2-dimensional sub-block $M:= M_{min} = 		\begin{pmatrix}
		F & C_m  \\
		C_m & D_m \\
	\end{pmatrix}$ is a $2$-dimensional arrowhead matrix and we can compute
	the corresponding coherent mismatch similarly as for the upper bound
	in Appendix~\ref{proof_upb_commutator}.

For this reason, let us introduce $\dmin := \sigma_r/(1 -\qmin)$ where $\sigma_r := \sigma/\lambda$ and $\qmin:= \lambda_m/\lambda$ is
the ratio of the smallest and largest eigenvalues.
With this, we can compute the analytical expression for $c$ and obtain the expression
for the coherent mismatch as
\begin{equation*}
	c= 1-[1+\Xi(\Delta_\textrm{min})]^{-1} = (1- \sqrt{1- 4\Delta_\textrm{min}^2})/2,
\end{equation*}
Let us remark that we can expand the above expression for small $\Delta_\textrm{min}$ as
\begin{equation*}
	c=  (1- \sqrt{1- 4\Delta^2})/2 = \Delta_\textrm{min}^2	+ \Delta_\textrm{min}^4 +\mathcal{O}(\Delta_\textrm{min}^6).
\end{equation*}
The above equations express the coherent mismatch for the extremal states and thus guarantee a general lower bound for $c$.
We note that the eigenvalues of the 2-dimensional matrix $M_{min}$ are guaranteed to be the
largest and the smallest non-zero eigenvalues of the density matrix due to the interlacing property.
It follows that the above lower bound is saturated by any density matrix that can be mapped to an
arrowhead matrix of the form of $A_{min}$ above.
\end{proof}

\section{Commutators in noisy quantum circuits \label{sec_ciruit_commutator}}
\noindent \textbf{The noise model}\\
Here we assume a noise channel that maps to a noisy state via $\nu$ noisy gates as introduced in Eq.~\eqref{error_channel}.
This noisy quantum circuit is in the form of a product of noisy quantum gates as
\begin{equation*}
	\rho := \Phi_\nu \Phi_{\nu-1} \cdots \Phi_{1} \, \rho_{0},
\end{equation*}
where every gate can be written in terms of the Kraus map from Eq.~\eqref{error_channel} as
\begin{equation*}
	\Phi_{k} \,  \rho := (1-\epsilon) U_k \rho U_k^\dagger + \epsilon \sum_{j=1}^{K} M_{j k} \rho M_{jk}^\dagger.
\end{equation*}
In the following we focus on the special case of $K=1$ for ease of notation. A prominent example is the dephasing noise channel
in which case $M_k = Z_k U_k$, where $Z_k$ is a Pauli Z operator that acts on the same qubit(s) as the unitary $U_k$.

This family of noise models can be understood via the analogy to flipping $\nu$ coins: every coin has a probability 
$\epsilon$ to yield heads (error event via $M_k$) and probability $1-\epsilon$ to yield tails (no error via $U_k$).
The probability that no error happens throughout the entire circuit (all tails) is then $(1-\epsilon)^\nu$.
This allows  us to write the resulting density matrix into the following form
\begin{equation*}
	\rho = p_0 \rho_{id} + \epsilon (1-\epsilon)^{\nu-1} \rho_1  + \epsilon (1-\epsilon)^{\nu-1} \rho_2 + \dots \epsilon^2 (1-\epsilon)^{\nu-2} \rho_{12} +  \dots  \epsilon^\nu \rho_{1234 \dots \nu}.
\end{equation*}
Here every term represents a  pure state, for example
$\rho_1  = | \mathcal{E}_1 \rangle \langle \mathcal{E}_1 |$ is a pure state in which
an error occurred at gate $1$ and therefore its state vector can be expressed as
\begin{equation*}
	| \mathcal{E}_1 \rangle := U |\underline{0}\rangle = U_\nu U_{\nu-1} \cdots U_2 M_1  |\underline{0}\rangle,
\end{equation*}
which happens with probability  $\epsilon (1-\epsilon)^{\nu-1}$.
Similarly, $\rho_{12}$ is the pure state in which errors occurred at gate 1 and 2.
There are overall $1 + \sum_{m=1}^\nu \binom{\nu}{m} = 2^\nu$ terms in the above sum
and we can compute their probabilities as
\begin{align*}
	p_0 &= (1-\epsilon)^\nu \quad &\text{probability of no error}  \\
	p_1 &= p_2 = \dots p_\nu = \epsilon (1-\epsilon)^{\nu-1} \quad & \text{probability of a signle error}  \\
	p_{12} & = p_{13} = \dots p_{\nu-1,\nu} = \epsilon^2 (1-\epsilon)^{\nu-2} \quad &\text{probability of a double error}\\
	 & \vdots \\
	p_{12\dots \nu} &= \epsilon^\nu & \text{probability of $\nu$ errors}
\end{align*}

\noindent \textbf{Expressing the commutator}\\
Let us now compute the commutator $[\rho_{id},\rho]$  with $\rho_{id} := | \psi_{id} \rangle \langle \psi_{id} |$ explicitly 
(using that the first term cancels out since it commutes with $\rho_{id}$) as
\begin{align*}
	[\rho_{id},\rho] =   \, &p_1 | \psi_{id} \rangle \langle \psi_{id} |  | \mathcal{E}_1 \rangle \langle \mathcal{E}_1 | 
	- p_1  | \mathcal{E}_1 \rangle \langle \mathcal{E}_1 | | \psi_{id} \rangle \langle \psi_{id} |\\
	+& p_2 | \psi_{id} \rangle \langle \psi_{id} |  | \mathcal{E}_2 \rangle \langle \mathcal{E}_2 | 
	- p_2  | \mathcal{E}_2 \rangle \langle \mathcal{E}_2 | | \psi_{id} \rangle \langle \psi_{id} |
	+\dots\\
\end{align*}
Let us introduce the coefficients, for example $c_1:= p_1 \langle \mathcal{E}_1  |  \psi_{id}   \rangle$,
and let us write the commutator in terms of these coefficients as
\begin{align*}
	[\rho_{id},\rho]	= & c_1^*  | \psi_{id} \rangle \langle \mathcal{E}_1 | 
	+ c_2^*  | \psi_{id} \rangle \langle \mathcal{E}_2 | \dots  - c_1  |\mathcal{E}_1 \rangle \langle   \psi_{id}|  -  c_2 |  \mathcal{E}_1 \rangle \langle \psi_{id} |\\
	=&| \psi_{id} \rangle \left(  c_1^* \langle \mathcal{E}_1 | + c_2^* \langle \mathcal{E}_2 | + \dots \right) 
	-
	 \left(  c_1  |\mathcal{E}_1 \rangle +  c_2  |\mathcal{E}_2 \rangle  \right) \langle\psi_{id}|.
\end{align*}
Indeed, analogously to the proofs in Appendix~\ref{proof_commutator} we find the arrowhead structure of the commutator as
$ [\rho_{id},\rho] =  | \psi_{id} \rangle \langle \phi | - |\phi \rangle \langle\psi_{id}| $,
for which expression we can introduce the vector
\begin{equation*}
	|\phi \rangle :=   c_1  | \mathcal{E}_1 \rangle +  c_2  | \mathcal{E}_2 \rangle + \dots c_{123\dots\nu} |\mathcal{E}_{123 \dots \nu} \rangle.
\end{equation*}
Let us now compute the Hilbert-Schmidt norm of the commutator via
\begin{equation*}
	\lVert [\rho_{id},\rho] \rVert_{HS}^2  
	=
	\tr[ (| \psi_{id} \rangle \langle \phi | - |\phi \rangle \langle\psi_{id}|) (|\phi \rangle \langle\psi_{id}| - | \psi_{id} \rangle \langle \phi |)]
	=
	2  \lVert \phi \rVert^2  - 2 |\langle \phi | \psi_{id} \rangle|^2.
\end{equation*}
We thus conclude that the commutator from Statement~\ref{stat_commutator} can be computed via
\begin{equation*}
 \sigma^2 = 	\lVert \phi \rVert^2  -  |\langle \phi | \psi_{id} \rangle|^2.
\end{equation*}
Let us express the vector norm by introducing the index set $I = \{ 1,2, \dots 2^\nu-1 \}$ whose elements $\textbf{k}\in I$ index the individual error events.
The vector norm is then given by a sum over these events as
\begin{equation*}
	 \lVert \phi \rVert^2 = \sum_{\textbf{k}, \textbf{l}\in I} c_\textbf{k}^* c_\textbf{l}  \langle \mathcal{E}_\textbf{k} | \mathcal{E}_\textbf{l} \rangle
	  = \sum_{\textbf{k}, \textbf{l} \in I} p_\textbf{k} p_\textbf{l}  \langle \mathcal{E}_\textbf{k} | \psi_{id} \rangle \langle \psi_{id} |\mathcal{E}_\textbf{l} \rangle
	   \langle \mathcal{E}_\textbf{k} | \mathcal{E}_\textbf{l} \rangle.
\end{equation*}
Similarly, we can express the overlap via the summation
\begin{equation*}
	|\langle \psi_{id}  | \phi  \rangle|^2 =   | \sum_{\textbf{k} \in I}   c_\textbf{k}  \langle \psi_{id} | \mathcal{E}_\textbf{k} \rangle  |^2
	=  | \sum_{\textbf{k} \in I}   p_\textbf{k}  |\langle \psi_{id} | \mathcal{E}_\textbf{k} \rangle|^2  |^2
	=   \sum_{\textbf{k},\textbf{l} \in I}   p_\textbf{k} p_\textbf{l} 
	 |\langle \psi_{id} | \mathcal{E}_\textbf{k} \rangle|^2  \, |\langle \psi_{id} | \mathcal{E}_\textbf{l} \rangle|^2 .
\end{equation*}
Let us introduce the notation $o_\textbf{k} :=  \langle \mathcal{E}_\textbf{k} | \psi_{id} \rangle$
and write that
\begin{equation*}
	\lVert \phi \rVert^2 - 	|\langle \psi_{id}  | \phi  \rangle|^2
	=
	 \sum_{\textbf{k},\textbf{l} \in I}   p_\textbf{k} p_\textbf{l} 
	 [ o_\textbf{k} o_\textbf{l}^* \langle \mathcal{E}_\textbf{k} | \mathcal{E}_\textbf{l} \rangle - |o_\textbf{k}|^2 \,  |o_\textbf{l}|^2 ]
	 	=
	 \sum_{\textbf{k},\textbf{l} \in I}   p_\textbf{k} p_\textbf{l} 
	 [ o_\textbf{k} o_\textbf{l}^* ( \langle \mathcal{E}_\textbf{k} | \mathcal{E}_\textbf{l} \rangle - o_\textbf{k}^*   o_\textbf{l} )]	.
\end{equation*}

Let us finally introduce the notation $\mathcal{L}_{\textbf{k} \textbf{l}}:= 	\mathrm{Re}[  o_\textbf{k} o_\textbf{l}^* ( \langle \mathcal{E}_\textbf{k} | \mathcal{E}_\textbf{l} \rangle - o_\textbf{k}^*   o_\textbf{l} ) ]$,
which results in the compact expression
\begin{equation}\label{phinorm}
	\lVert \phi \rVert^2 - 	|\langle \psi_{id}  | \phi  \rangle|^2
	=
	\sum_{\textbf{k}\in I}   p_\textbf{k}^2 \mathcal{L}_{\textbf{k} \textbf{k}}
	+
	2 \sum_{\textbf{k} < \textbf{l} \in I}   p_\textbf{k} p_\textbf{l} 
	\mathcal{L}_{\textbf{k} \textbf{l}} 	.
\end{equation}

\noindent \textbf{Simplifying the scalar products}\\
Let us express the noise states in terms of a parallel and orthogonal component to the ideal state as
\begin{equation*}
	| \mathcal{E}_\textbf{k} \rangle =  a_\textbf{k} | \psi_{id} \rangle + b_\textbf{k} 	| \Psi_\textbf{k} \rangle
\end{equation*}
for some $| \Psi_\textbf{k} \rangle$ which is orthogonal to $| \psi_{id} \rangle$ and we are free to choose
the complex phase of $a_\textbf{k}$ and $b_\textbf{k}$ which allows us to define them to be real and non-negative. 
It follows that $ b_\textbf{k}  =  \sqrt{1-  a_\textbf{k}^2 }$,
and we can express the scalar products as
\begin{equation*}
	o_\textbf{k} :=  \langle \mathcal{E}_\textbf{k} | \psi_{id} \rangle = a_\textbf{k},
	\quad \quad \quad
	\langle \mathcal{E}_\textbf{k} | \mathcal{E}_\textbf{l} \rangle = a_\textbf{k} a_\textbf{l} + b_\textbf{k} b_\textbf{l} \langle \Psi_\textbf{k}  | \Psi_\textbf{l}, \rangle.
\end{equation*}
We finally obtain the convenient expression for the terms $ \mathcal{L}_{\textbf{k} \textbf{l}}$ in the summation in  Eq.~\eqref{phinorm} as
\begin{equation*}
	 \mathcal{L}_{\textbf{k} \textbf{l}} =
	 \mathrm{Re}[  o_\textbf{k} o_\textbf{l}^* ( \langle \mathcal{E}_\textbf{k} | \mathcal{E}_\textbf{l} \rangle - o_\textbf{k}^*   o_\textbf{l} )]
	 =
	 a_\textbf{k} a_\textbf{l}  b_\textbf{k} b_\textbf{l} \mathrm{Re}[  \langle \Psi_\textbf{k}  | \Psi_\textbf{l} \rangle] ,
\end{equation*}
where $a_\textbf{k} a_\textbf{l}  b_\textbf{k} b_\textbf{l}$ are real, non-negative while the terms
 $ -1\leq \mathrm{Re}[  \langle \Psi_\textbf{k}  | \Psi_\textbf{l} \rangle] \leq 1$ express the overlaps
between the different error states in a phase-sensitive manner -- these terms depend on the complex phase
angle between two error states. Note that we have already fixed a complex phase gauge relative to the
ideal state when we chose real, non-negative $a_\textbf{k}$ and $b_\textbf{k}$ and thus the complex phase of
$ \langle \Psi_\textbf{k}  | \Psi_\textbf{l} \rangle$ has no further complex phase gauge freedom.

\noindent\textbf{Diagonal terms and general upper bound:}\\
Every term in the first summation in  Eq.~\eqref{phinorm} is is strictly non-negative 
$ 0 \leq \mathcal{L}_{\textbf{k} \textbf{k}}  \leq 1$
and is generally upper bounded (since we can express it as $a_\textbf{k}^2 (1- a_\textbf{k}^2)$).
We can evaluate the upper bound analytically using an identity of the binomial coefficients as
\begin{equation}\label{eq_prob_summation}
	\sum_{\textbf{k}\in I} p_\textbf{k}^2 \, 	\mathcal{L}_{\textbf{k} \textbf{k}}
	\leq
	\sum_{\textbf{k}\in I} p_\textbf{k}^2 =	 	\sum_{k = 1}^\nu  \binom{\nu}{k} [(1-\epsilon)^{n-k}\epsilon^{k}]^2 
=
(1-\epsilon )^{2 \nu} \left(\left(\frac{2 (\epsilon -1) \epsilon +1}{(\epsilon -1)^2}\right)^\nu-1\right),
\end{equation}
where the last equality is the analyitcal evaluation of the summation. We will analyse this solution later.

\noindent\textbf{Off-diagonal terms and general upper bound:}\\
The second term in Eq.~\eqref{phinorm} is a sum over exponentially many $\mathcal{O}(|I|^2) = \mathcal{O}(2^{2\nu})$ terms that
depend on the relative phase between the different error states via the scalar products $ \langle \Psi_\textbf{k}  | \Psi_\textbf{l} \rangle$, 
in which the global complex phase of the states $| \Psi_\textbf{k} \rangle$ have been fixed.
 Let us introduce the notation
\begin{equation*}
	\Xi := 	2\sum_{\textbf{k} < \textbf{l} \in I}   p_\textbf{k} p_\textbf{l} 	 \mathcal{L}_{\textbf{k} \textbf{l}}  .
\end{equation*}
In general we can upper bound $\Xi$ using that $|\mathcal{L}_{\textbf{k} \textbf{l}} | \leq 1$ as
\begin{equation*}
| \Xi| \leq 	\lVert \phi \rVert^2 - 	|\langle \psi_{id}  | \phi  \rangle|^2  \leq \sum_{\textbf{k} , \textbf{l}\in I} p_\textbf{k} p_\textbf{l} 
	=
	 \sum_{\textbf{k} \in I} p_\textbf{k}\sum_{\textbf{k} \in I} p_\textbf{l}
	=
	[\sum_{k = 1}^\nu  \binom{\nu}{k} (1-\epsilon)^{n-k}\epsilon^{k}]^2
	=
	[1 - (1-\epsilon)^\nu]^2
	= (1-\tilde{\eta})^2
\end{equation*}
where $ \tilde{\eta}$ was defined in Eq.~\eqref{eq_eta_tilde} as the probability that no error happens.
This upper bound is general, however, we omitted the signs of the summands and the bound is thus going to be very pessimistic
as discussed in the main text.

\noindent \textbf{Off-diagonal terms as random variables:}\\
Let us make the following, rather artificial assumption: the terms $\mathcal{L}_{\textbf{k} \textbf{l}} $
are independent random variables with mean $\langle \mathcal{L}_{\textbf{k} \textbf{l}} \rangle = 0$ and some variance
$s_{\textbf{k} \textbf{l}}^2$, i.e., it is equally likely that they are positive or negative. Due to the bound
$|\mathcal{L}_{\textbf{k} \textbf{l}} | \leq 1$ from the previous subsection, the variance of these
random variables is bounded as $s_{\textbf{k} \textbf{l}}^2:= \var[  \mathcal{L}_{\textbf{k} \textbf{l}}   ] \leq 1$.

Note that the total variance $s^2_{tot}$ of the full sum grows proportionally with the number of terms:
Recall that the total variance in a linear model is expressed via the formula 
\begin{equation*}
\var[ 
	\sum_{\textbf{k},\textbf{l} \in I}   p_\textbf{k} p_\textbf{l} 
	\mathcal{L}_{\textbf{k} \textbf{l}}	
	]
	=
	\sum_{\textbf{k},\textbf{l} \in I}   p_\textbf{k}^2 p_\textbf{l}^2
	s_{\textbf{k} \textbf{l}}^2
	\leq
	s^2		\sum_{\textbf{k},\textbf{l} \in I}   p_\textbf{k}^2 p_\textbf{l}^2
	= s^2		[\sum_{\textbf{k} \in I}   p_\textbf{k}^2]^2.
\end{equation*}

We have also introduced the notation $1 \geq s \geq s_{\textbf{k} \textbf{l}}$ to denote a global upper bound of the variances.
In complete generality we can state that any instance will be very likely to be upper bounded by a multiple of the
standard deviation of the distribution (square-root of the variance). For example,  in case of a normal distribution $3$-times the standard deviation corresponds
to a confidence level above $99\%$. We denote the constant as $s'$ that sets the confidence level
relative to the square-root of the variance $s$  (e.g., $s'=3s$) and obtain the upper bound with high confidence as
\begin{equation*}
	\lVert \phi \rVert^2 - 	|\langle \psi_{id}  | \phi  \rangle|^2 \leq 
	s' \sum_{\textbf{k} \in I}   p_\textbf{k}^2
	=s' f
\end{equation*}
Recall that we have already evaluated this summation analytically (see diagonal entries) and obtained the expression for $f$.
Let us now analyse the solution.

\begin{figure*}[tb]
	\begin{centering}
		\includegraphics[width=\textwidth]{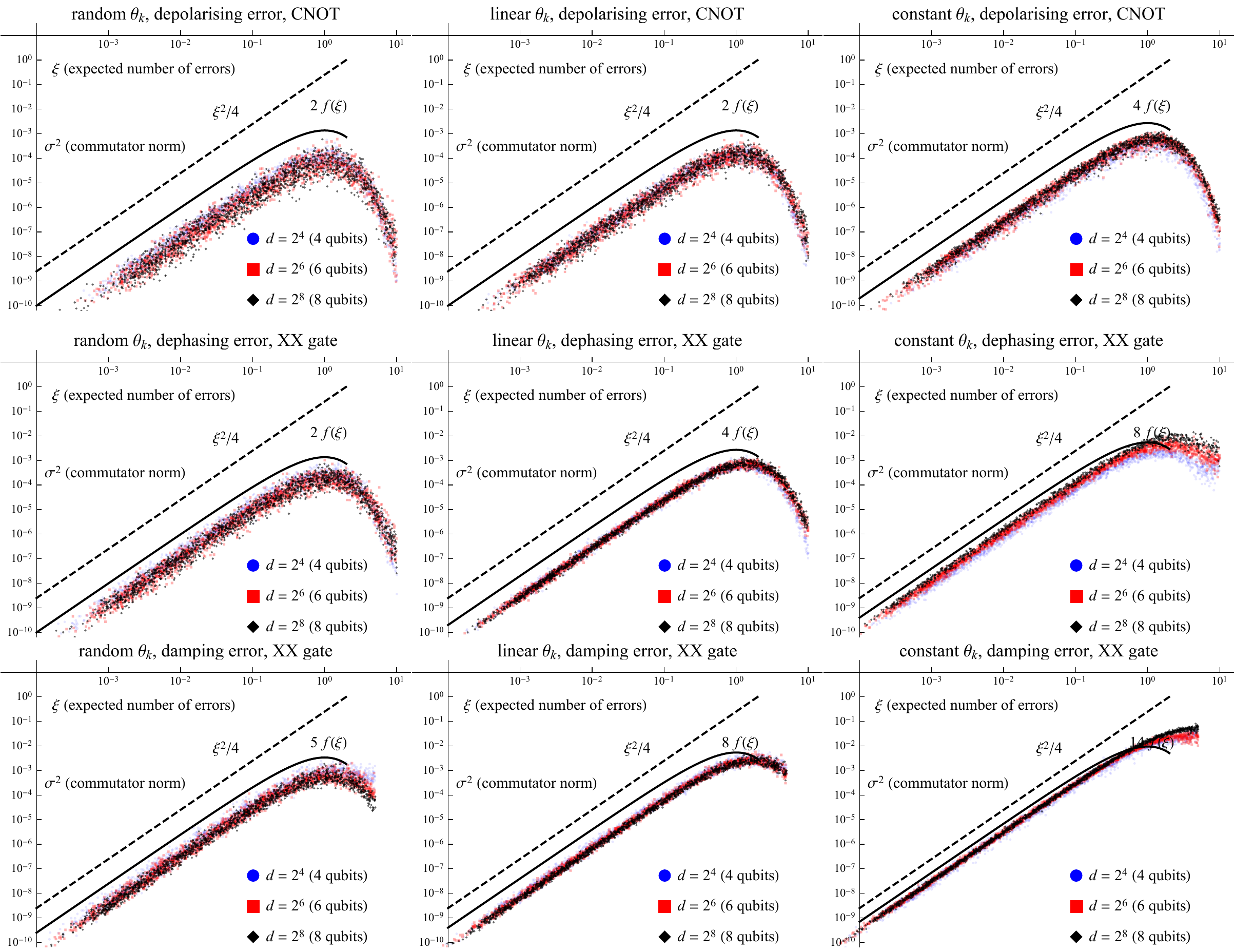}
		\caption{
			Simulated circuits similar to the ones in Fig.~\ref{plot_noisemodel}
			with $\nu=200$ gates under different noise models.
			Uniformly randomly generated rotation angles $\theta_k \in (-\pi, \pi)$ (first column), rotation angles
			increase linearly $\theta_k = 0.01 k$ (second column)
			and constant rotation angles $\theta_k = 0.2$ (third column).
			Notice that the numerical data seems to be slightly shifted to the right when compared to the
			upper bounds (solid black lines).
			This is because a portion of the gates in the simulated circuits
			commute with the noise Kraus maps as discussed in Appendix~\ref{app_general_kraus}.
			\label{plot_noisemodel_table}
		}
	\end{centering}
\end{figure*}

\subsection{Analysing the solution \label{app_f_analyse}}
Let us now analyse the upper bound function that we have obtained above as
\begin{equation*}
 f :=	(1-\epsilon )^{2 \nu} \left(\left(\frac{2 (\epsilon -1) \epsilon +1}{(\epsilon -1)^2}\right)^\nu-1\right).
\end{equation*}

The first term $ (1-\epsilon )^{2 \nu} =: \tilde{\eta}^2$ in the solution is identical to the square of exponential decay of
the incoherent fidelity, i.e., probability that no error happens in Eq.~\eqref{eq_eta_tilde}.
We have approximated this probability for large $\nu$ and for a varying circuit error rates $\xi := \epsilon \nu$
as $(1-\epsilon )^{2 \nu} = (1- \xi /\nu )^{2 \nu} \approx e^{-2\xi}$. The second term in the solution
can be simplified as
\begin{equation*}
	\left( \left(\frac{2 (\epsilon -1) \epsilon +1}{(\epsilon -1)^2}\right)^\nu-1\right) 
	=
	\exp[ \nu \ln \frac{2 (\epsilon -1) \epsilon +1}{(\epsilon -1)^2}  ] - 1
	=
	e^{\nu (\epsilon^2 + 2 \epsilon^3 + \dots)} - 1
	=
	e^{\xi^2 / \nu  + 2 \xi^3/\nu^2 + \dots} - 1
	\approx
	e^{\xi^2 / \nu } - 1
\end{equation*}
and we have used that for  bounded $\xi$ and large $\nu$ all higher order terms can be neglected
and keep only the leading term $\xi^2/\nu$. Combining the two approximations we finally obtain the 
approximate upper bound as 
\begin{equation*}
	f \approx 	e^{-2\xi} (e^{\xi^2/\nu } - 1) 
	\approx  e^{-2\xi} \, \xi^2/\nu,
\end{equation*}
which approximation has an additive error that scales with $(\xi^2/\nu)^2 \ll 1$.

The function $f(\xi)$ can be divided into 3 distinct regions:
for $\xi \ll 1$ it grows quadratically for fixed $\nu$ as $f(\xi) \approx  \xi^2/\nu $,
or equivalently it grows linearly for fixed $\epsilon$ as $f(\xi) \approx \epsilon \xi $.
The function then reaches its maximum $f(\xi_{max}) $ at around $\xi \approx 1/2$ due to the expansion
\begin{equation}
	\xi_{max} = -\frac{\log \frac{2-\epsilon }{2}}{\epsilon } = 1/2 + \epsilon/8 + \dots
\end{equation}
It is also interesting to note that global maximum of the function 
\begin{equation*}
	f(\xi_{max})	=  \frac{\left(1-\frac{\epsilon }{2}\right)^{2/\epsilon } \epsilon }{2-\epsilon}
	\approx 
	\frac{\epsilon }{2 e}+\frac{\epsilon ^2}{8 e} + \dots
\end{equation*}
is completely independent of the other two variables and depends only on $\epsilon=\xi/\nu$.
Similarly, the position of the global maximum is approximately constant as it is approximately independent of
all three variables.

In the third region where $\xi \gg 1$, the function decreases exponentially but our approximation breaks down in this regime.

\subsection{Extension to more general Kraus maps \label{app_general_kraus}}
The above formulas straightforwardly generalise to higher Kraus rank the following way.
For example, the single qubit depolarisng channel corresponds to $K_k = 3$ and in this
case there will be $\nu' = 3 \nu$ different single error events that can occur with
probabilities $\epsilon' = \epsilon/3$. The circuit error rate $\xi$ is invariant under
this transformation as $\xi = \epsilon' \nu' = \epsilon \nu$ and the upper bound
$f(\xi) \approx [\xi e^{-\xi} ]^2 /\nu \rightarrow  f(\xi)/3$ is only different by a global
constant factor. We can similarly generalise this model to other Kraus maps. 

Another simplification we have made is that we have assumed in Eq.~\eqref{error_channel}
that all gates have identical error probabilities $\epsilon$. We can extend these Kraus maps
in which all gates $\phi_k$ have possibly different error probabilities $\epsilon_k$.
In this case our previous bounds straightforwardly apply by upper bounding $\epsilon_k \leq \max \epsilon_k$
and using the largest error probability in the bound. Our results thus still hold via
the upper bound function $f(\xi') \approx [\xi' e^{-\xi'} ]^2 /\nu $, where set $\xi' := \max \epsilon_k \nu$.
Indeed, one can straightforwardly tighten these bounds by assuming some average error rate $\epsilon_{mean}$. 
It is interesting to note that the probabilities of $k$ errors happening in the circuit 
are still expected to be Poisson distributed via the Le Cam theorem even
if we allow different per-gate error probabilities $\epsilon_k$ for every gate
assuming the limit of a large number of gates (and bounded $\xi$) as discussed in Sec.~IV in \cite{cai2020multi}.

Furthermore, if a fraction $\kappa$ of the error Kraus operators $M_k$ 
from Eq.~\eqref{error_channel} (assuming Kraus rank $K=1$ for ease of notation)
commutes with the corresponding ideal unitary gates $U_k$, then we can simplify
the error model the following way. A fraction $1-\kappa$ of the gates are noise-free
while a fraction $\kappa$ of the gates undergo a higher error rate $2\epsilon$.
Thus our upper bounds still apply via the modifications as
$\nu' =  (1-\kappa) \nu$ and we can use the general upper bound on the probabilities $\epsilon \leq 2 \epsilon$ (assuming a small fraction $\kappa$).
This modifies our upper bounds as $f(\xi) \approx [\xi' e^{-\xi'} ]^2 /\nu' \rightarrow 4 (1-\kappa) f[2(1-\kappa) \xi] $,
which is generally a rescaling of the function by a multiplication of the argument $\xi$ and a multiplication by a global constant.
Indeed we observe in Fig.~\ref{plot_noisemodel_table} that the numerical data is slightly shifted to the
right when compared to the upper bounds (solid lines). This discrepancy could be explained by the fact that
a fraction of the gates in the simulated circuits actually commute with the noise Kraus operators.

\section{Details of the numerical simulations \label{app_numerics}}

\subsection{Figure~\ref{plot_eigvals} and Figure~\ref{plot_commutators}}

Random states were generated the following way. The pure state $| \psi_{id} \rangle$
was generated uniformly randomly with respect to the Haar measure of the $d$-dimensional
unitary group. Note that in all figures the dimension $d$ has been randomly generated -- these
random dimensions correspond to qubit systems when $d=2^N$ or more general qudit systems 
when $d$ is not a power of two.
The noise state $\rho_{err}$ was generated randomly by uniformly randomly
generating a $d$-dimensional probability vector $\underline{p}$ by uniformly randomly generating
points in a $d-1$-dimensional simplex. This probability vector is then used to define the eigenvalues of
$\rho_{err}$. In the next step a unitary $U$ was uniformly randomly generated with respect to
the Haar measure and $\rho_{err}$ was obtained via the transformation $U \mathrm{diag}(\underline{p}) U^\dagger$.
The noisy random state was then obtained via $\rho = \eta | \psi_{id} \rangle \langle \psi_{id} | + (1-\eta) \rho_{err}$,
where $\eta$ was uniformly randomly generated for the linear plots and uniformly randomly generated in logarithmic
scale for the logarithmic plots. 

\subsection{Figure~\ref{plot_noisemodel} and Figure~\ref{plot_noisemodel_table}}

Circuits were randomly generated by randomly selecting 200 gates from a pool of single qubit $X$ and $Z$ rotations 
(with rotation angles $\theta_k$),
and CNOT or $XX$ entangling gates. The gates are followed by either depolarising, dephasing or damping noise whose
per-gate error probabilities are fixed $\epsilon$ and were generated randomly for each circuit variant.
In case of the single qubit rotations and the $XX$ entangling gates the rotation angles $\theta_k$ we generated
according to different patterns: Rotation angles were uniformly randomly generated as $\theta_k \in (-\pi, \pi)$
in Fig.~\ref{plot_noisemodel} and in Fig.~\ref{plot_noisemodel_table} (first column), rotation angles were
increased linearly as $\theta_k = 0.01 k$ Fig.~\ref{plot_noisemodel_table} (second column)
and constant rotation angles were set as $\theta_k = 0.2$ in Fig.~\ref{plot_noisemodel_table} (third column).
	
\end{document}